 \newtheorem{theorem}{Theorem} [section]
 \newtheorem{lemma}[theorem]{Lemma}
 \newtheorem{corollary}[theorem]{Corollary}
 \newtheorem{definition}[theorem]{Definition}
 \newtheorem{example}[theorem]{Example}
\newcommand{\set}[1]{\{#1\}}
\newcommand{\setof}[2]{\{#1\mid #2\}}
\newcommand{\card}[1]{|#1|}
\newcommand{\tuple}[1]{(#1)}
\newcommand{\ensvide}{\varnothing}
\newcommand{\empstr}{\varepsilon}
\newcommand{\defeq}{\mathrel{\stackrel{\mathrm{\scriptscriptstyle def}}{=}}}
\newcommand{\Id}[1]{\mathrm{Id}_{#1}}
\newcommand{\restr}[3]{#1\vert_{#2}^{#3}}
\newcommand{\projl}{\pi_1}
\newcommand{\projr}{\pi_2}
\newcommand{\pair}[2]{\langle #1,#2\rangle}
\newcommand{\seqto}[2]{#1^{<#2}}
\newcommand{\edges}[1]{\mathsf{E}{#1}}
\newcommand{\occin}[2]{#1\mid #2}
\newcommand{\repet}[2]{#1 {\uparrow} #2}
\newcommand{\nodes}[1]{\mathrm{N}_{#1}}
\newcommand{\trace}[1]{\mathrm{tr}(#1)}
\newcommand{\invf}[1]{#1^{-1}}
\newcommand{\invrel}[1]{#1^{-1}}
\newcommand{\relim}[2]{#1[#2]}
\newcommand{\Sets}{\mathbf{Sets}}
\newcommand{\Graphs}{\mathbf{Graphs}}
\newcommand{\aCat}{\boldsymbol{A}}
\newcommand{\MonGr}{\mathbf{Monogr}}
\newcommand{\SMonGr}{\mathbf{SMonogr}}
\newcommand{\OMonGr}[1]{#1\mbox{-}\MonGr}
\newcommand{\OSMonGr}[1]{#1\mbox{-}\SMonGr}
\newcommand{\FMonGr}{\mathbf{FMonogr}}
\newcommand{\sliceCat}[2]{#1\setminus #2}
\newcommand{\sliceM}[1]{\sliceCat{\MonGr}{#1}}
\newcommand{\SigCat}{\mathbf{Sig}}
\newcommand{\SigCats}{\sorts{\SigCat}}
\newcommand{\MonSig}{\mathbf{GrStruct}}
\newcommand{\iso}{\simeq}
\newcommand{\equivCat}{\approx}
\newcommand{\Sig}{\varSigma}
\newcommand{\Op}{\varOmega}
\newcommand{\Msig}{\varGamma}
\newcommand{\Rng}[1]{\mathrm{Rng}(#1)}
\newcommand{\Dom}[1]{\mathrm{Dom}(#1)}
\newcommand{\GraphSig}{\Gamma_{\mathrm{g}}}
\newcommand{\EGraphSig}{\Gamma_{\mathrm{e}}}
\newcommand{\opn}[1]{#1_{\mathrm{opn}}}
\newcommand{\sorts}[1]{#1_{\mathrm{srt}}}
\newcommand{\id}[1]{\mathrm{1}_{#1}}
\newcommand{\SigFunc}[1]{\mathsf{S}#1}
\newcommand{\Funcs}[1]{\mathrm{\Omega}_{#1}}
\newcommand{\dotiso}{\mathrel{\dot{\iso}}}
\newcommand{\opname}[2]{\texttt{[}#1{\cdot}#2\texttt{]}}
\newcommand{\Alg}{\mathcal{A}}
\newcommand{\Blg}{\mathcal{B}}
\newcommand{\Clg}{\mathcal{C}}
\newcommand{\Glg}{\mathcal{G}}
\newcommand{\carrier}[2]{#1_{#2}}
\newcommand{\interp}[2]{#1^{#2}}
\newcommand{\AlgCat}[1]{#1\mbox{-}\mathbf{Alg}}
\newcommand{\AlgFunc}[2]{\mathsf{A}_{#1}#2}
\newcommand{\MFunc}[1]{\mathsf{M}#1}
\newcommand{\partm}[1]{\left\lceil #1\right\rceil}
\newcommand{\MonGrP}{\MonGr^{\mathbf{P}}}
\newcommand{\SMonGrP}{\SMonGr^{\mathbf{P}}}
\newcommand{\OMonGrP}[1]{#1\mbox{-}\MonGrP}
\newcommand{\OSMonGrP}[1]{#1\mbox{-}\SMonGrP}
\newcommand{\FMonGrP}{\FMonGr^{\mathbf{P}}}
\newcommand{\glucond}[2]{\mathrm{GC}(#1,#2)}
\newcommand{\dpo}[3]{\stackrel{\tuple{#1,#2}}{\Longrightarrow}_{#3}}
\newcommand{\spo}[2]{\stackrel{#1}{\Longrightarrow}_{#2}}
\newcommand{\partr}[2]{\partm{#1,#2}}
\newcommand{\tmorph}[1]{\vec{#1}}
\newcommand{\amorph}[1]{\dot{#1}}
\newcommand{\ATM}[2]{\mathbf{ATM}(#1,#2)}
\newcommand{\sigplus}{\dotplus}
\newcommand{\DFunc}[1]{\mathsf{D}#1}
\newcommand{\UFunc}[1]{\mathsf{U}#1}
\newcommand{\cubenodes}[8]{
\node (LL) at (0,0) {$#1$}; \node (LF) at (4,-1) {$#2$};
\node (LB) at (3,1) {$#3$}; \node (LR) at (7,0) {$#4$};
\node (UL) at (0,3.2) {$#5$}; \node (UF) at (4,2.2) {$#6$};
\node (UB) at (3,4.2) {$#7$}; \node (UR) at (7,3.2) {$#8$};}
\begin{document}
\title{Algebraic Monograph Transformations}

\author{
Thierry  Boy de la Tour 
}

\date{Univ. Grenoble Alpes, CNRS, Grenoble INP, LIG \\ 38000 Grenoble,
  France \\
{\small \texttt{thierry.boy-de-la-tour at imag.fr} }}

\maketitle 

\begin{abstract}
  Monographs are graph-like structures with directed edges of
  unlimited length that are freely adjacent to each other. The
  standard nodes are represented as edges of length zero. They can be
  drawn in a way consistent with standard graphs and many others, like
  E-graphs or $\infty$-graphs. The category of monographs share many
  properties with the categories of graph structures (algebras of
  monadic many-sorted signatures), except that there is no terminal
  monograph.  It is universal in the sense that its slice categories
  (or categories of typed monographs) are equivalent to the categories
  of graph structures.  Type monographs thus emerge as a natural way
  of specifying graph structures. A detailed analysis of single and
  double pushout transformations of monographs is provided, and a
  notion of attributed typed monographs generalizing typed attributed
  E-graphs is analyzed w.r.t. attribute-preserving transformations.
\end{abstract}

\noindent\textbf{Keywords:}  Algebraic Graph Transformation, Graph Structures, Typed Graphs


\section{Introduction}\label{sec-intro}

Many different notions of graphs are used in mathematics and computer
science: simple graphs, directed graphs, multigraphs, hypergraphs,
etc. One favourite notion in the context of logic and rewriting is
that also known as \emph{quivers}, i.e., structures of the form
$\tuple{N,E,s,t}$ where $N,E$ are sets and $s,t$ are functions from
$E$ (edges) to $N$ (nodes), identifying the source and target tips of
every edge (or arrow). One reason for this is that the category of
quivers is isomorphic to the category of algebras
of the many-sorted signature with two sorts
$\texttt{nodes}$ and $\texttt{edges}$ and two operator names
$\texttt{src}$ and $\texttt{tgt}$ of type
$\texttt{edges}\rightarrow \texttt{nodes}$. In conformity with this
tradition, by \emph{graph} we mean quiver throughout this paper.

In order to conveniently represent elaborate data structures it is
often necessary to enrich the structure of graphs with attributes:
nodes or edges may be labelled with elements from a fixed set, or with
values taken in some algebra, or with sets of values as in \cite{BdlTE20c},
etc. An interesting example can be found in \cite{EhrigEPT06} with the
notion of E-graphs, since the attributes are also considered as
nodes. More precisely, an E-graph is an algebra whose signature can be
represented by the following graph:

\begin{center}
  \begin{tikzpicture}[xscale=2,yscale=0.6,text height=1.5ex,text depth=.25ex]
    \node (EG) at (-1,2) {$\texttt{edges}$};
    \node (NG) at (1,2) {$\texttt{nodes}$};
    \node (EE) at (-2.5,1) {$\texttt{ev-edges}$};     
    \node (EN) at (2.5,1) {$\texttt{nv-edges}$};
    \node (NV) at (0,0) {$\texttt{values}$};
    \path[->,bend left] (EG) edge node[above, font=\footnotesize] {$\texttt{src}_{\texttt{e}}$} (NG);
    \path[->,bend right] (EG) edge node[below, font=\footnotesize] {$\texttt{tgt}_{\texttt{e}}$} (NG);
    \path[->] (EE) edge node[above, font=\footnotesize] {$\texttt{src}_{\texttt{ev}}$} (EG);
    \path[->] (EN) edge node[above, font=\footnotesize] {$\texttt{src}_{\texttt{nv}}$} (NG);
    \path[->] (EE) edge node[below, font=\footnotesize] {$\texttt{tgt}_{\texttt{ev}}$} (NV);
    \path[->] (EN) edge node[below, font=\footnotesize] {$\texttt{tgt}_{\texttt{nv}}$} (NV);
  \end{tikzpicture}
\end{center}

The names given to the sorts and operators help to understand the
structure of E-graphs: the $\texttt{edges}$ relate the
$\texttt{nodes}$ among themselves, the $\texttt{nv-edges}$ relate the
$\texttt{nodes}$ to the $\texttt{values}$, and the $\texttt{ev-edges}$
relate the $\texttt{edges}$ to the $\texttt{values}$. Hence the sort
\texttt{values} holds attributes that are also nodes. But then we see
that in E-graphs the \texttt{ev-edges} are adjacent to \texttt{edges}. This is
non standard, but we may still accept such structures as some form of
graph, if only because we understand how they can be drawn.

Hence the way of generalizing the notion of graphs seems to involve a
generalization of the signature of graphs considered as algebras. This
path has been followed by Michael Löwe in \cite{Lowe93}, where a
\emph{graph structure} is defined as a monadic many-sorted
signature. Indeed in the examples above, and in many examples provided
in \cite{Lowe93}, all operators have arity 1 and can therefore
be considered as edges from their domain to their range sort. Is this
the reason why they are called graph structures? But the example above
shows that E-graphs are very different from the graph
that represent their signature. Besides, it is not convenient that our
understanding of such structures should be based on syntax, i.e., on
the particular names given to sorts and operators in the signature.

Furthermore, it is difficult to see how the algebras of some very
simple monadic signatures can be interpreted as graphs of any
form. Take for instance the signature of graphs and reverse the target
function to $\texttt{tgt}:\texttt{nodes}\rightarrow
\texttt{edges}$. Then there is a symmetry between the sorts
\texttt{nodes} and \texttt{edges}, which means that in an algebra of
this signature nodes and edges would be objects of the same nature. Is
this still a graph? Can we draw it?  Worse still, if the two sorts are
collapsed into one, does it mean that a node/edge can be adjacent to
itself?

We may address these problems by restricting graph structures to some
class of monadic signatures whose algebras are guaranteed to behave in
an orthodox way, say by exhibiting clearly separated edges and
nodes. But this could be prone to arbitrariness, and it would still
present another drawback: that the notion of graph structure does not
easily give rise to a category.  Indeed, it is difficult to define
morphisms between algebras of different signatures, if only because
they can have any number of carrier sets.

The approach adopted here is rather to reject any \emph{structural}
distinction between nodes and edges, hence to adopt a unified view of
nodes as edges of length 0, and standard edges as edges of length 2
since they are adjacent to two nodes. This unified view logically
allows edges to be adjacent to any edges and not just to nodes, thus
generalizing the \texttt{ev-edges} of E-graphs, and even to edges that
are adjacent to themselves. Finally, there is no reason to restrict
the length of edges to 0 or 2, and we will find good reasons (in
Section~\ref{sec-graphstruct}) for allowing edges of infinite, ordinal
length. The necessary notions and notations are introduced in
Section~\ref{sec-defs}. The structure of \emph{monograph} (together
with morphisms) is defined in Section~\ref{sec-mono}, yielding a
bestiary of categories of monographs according to some of their
characteristics. The properties of these categories w.r.t. the
existence of limits and co-limits are analyzed in
Section~\ref{sec-cat}.

We then see in Section~\ref{sec-draw} how monographs can be accurately
represented by drawings, provided of course that they have finitely
many edges and that these have finite length. In particular, such
drawings correspond to the standard way of drawing a graph for those
monographs that can be identified with standard graphs, and similarly
for E-graphs.

Section~\ref{sec-graphstruct} is devoted to the comparison between
monographs and graph structures, and the corresponding algebras (that
we may call \emph{graph structured algebras}). We show a property of
universality of monographs, in the sense that all graph structured
algebras can be represented (though usually not in a canonical way) as
\emph{typed monographs}, i.e., as morphisms of monographs.

The notion of graph structure has been introduced in \cite{Lowe93} in
order to obtain categories of partial homomorphisms in which
techniques of algebraic graph rewriting could be carried out. The
correspondence with monographs established in
Section~\ref{sec-graphstruct} calls for a similar development of
partial morphisms of monographs in Section~\ref{sec-partial}. The
single and double pushout methods of rewriting monographs can then be
defined, analyzed and compared in Section~\ref{sec-dpo}.

The notion of E-graph has been introduced in \cite{EhrigEPT06} in
order to obtain well-behaved categories (w.r.t. graph rewriting) of
\emph{attributed graphs}, and hence to propose suitable
representations of real-life data structures. This is achieved by
enriching E-graphs with a data type algebra, and by identifying nodes
of sort \texttt{value} with the elements of this algebra. We pursue a
similar approach in Section~\ref{sec-atm} with the notion of
\emph{attributed typed monograph} by identifying elements of an
algebra with edges, and obtain similarly well-behaved categories. Due
to the universality of monographs we see that any $\Sig$-algebra can
be represented as an attributed typed monograph.

We conclude in Section~\ref{sec-concl}. Note that parts of
Sections~\ref{sec-cat} to \ref{sec-graphstruct} have been published in
\cite{BdlT21b}.

\section{Basic Definitions and Notations}\label{sec-defs}

\subsection{Sets}

For any sets $A$, $B$, relation $R\subseteq A\times B$ and subset
$X\subseteq A$,
let $\relim{R}{X} \defeq \setof{y\in B}{x\in X \wedge \tuple{x,y}\in
  R}$. For any $x\in A$, by abuse of notation we write $\relim{R}{x}$
for $\relim{R}{\set{x}}$. If $R$ is functional we write $R(x)$ for the
unique element of $\relim{R}{x}$, and if $S\subseteq C\times D$ is
also functional and $\relim{R}{A}\subseteq C$ let $S\circ R\defeq
\setof{\tuple{x,S(R(x))}}{x\in A}$.

A \emph{function} $f:A\rightarrow B$ is a triple $\tuple{A,R,B}$ where
$R\subseteq A\times B$ is a functional relation. We write
$\relim{f}{X}$ and $f(x)$ for $\relim{R}{X}$ and $R(x)$ respectively.
For any $Y\supseteq\relim{f}{X}$, let
$\restr{f}{X}{Y}\defeq\tuple{X,R\cap (X\times Y), Y}$ and
$\restr{f}{X}{}\defeq\restr{f}{X}{B}$. A function $g=\tuple{C,S,D}$
may be composed on the left with $f$ if $B=C$, and then
$g\circ f \defeq \tuple{A,S\circ R, D}$. If $\relim{R}{A}\subseteq C$
we may write $g\circ R$ or $S\circ f$ for $S\circ R$.

Sets and functions form the category $\Sets$ with identities
$\Id{A}\defeq\tuple{A,\setof{\tuple{x,x}}{x\in A},A}$. In $\Sets$ we
use the standard product $\tuple{A\times B, \projl, \projr}$ and
coproduct $\tuple{A+ B, \mu_1, \mu_2}$ of pairs of sets
$\tuple{A,B}$. The elements $p\in A\times B$ are pairs of elements of
$A$ and $B$, i.e., $p=\tuple{\projl(p),\projr(p)}$.  For functions
$f:C\rightarrow A$ and $g:C\rightarrow B$ we write
$\pair{f}{g}:C\rightarrow A\times B$ for the unique function such that
$\projl\circ\pair{f}{g} = f$ and $\projr\circ\pair{f}{g} = g$, i.e.,
$\pair{f}{g}(z) \defeq \tuple{f(z),g(z)}$ for all $z\in C$. The elements of
$A+B$ are pairs $\mu_1(x)\defeq\tuple{x,0}$ or $\mu_2(y)\defeq\tuple{y,1}$ for
all $x\in A$ and $y\in
B$, so that $A'\subseteq A$ and $B'\subseteq B$ entail
$A'+B'=\relim{\mu_1}{A'}\cup \relim{\mu_2}{B'}$.

An \emph{ordinal} is a set $\alpha$ such that every element of
$\alpha$ is a subset of $\alpha$, and such that the restriction of the
membership relation $\in$ to $\alpha$ is a strict well-ordering of
$\alpha$ (a total order where every non empty subset of $\alpha$ has a
minimal element). Every member of an ordinal is an ordinal, and we
write $\lambda<\alpha$ for $\lambda\in\alpha$. For any two ordinals
$\alpha$, $\beta$ we have either $\alpha<\beta$, $\alpha=\beta$ or
$\alpha>\beta$ (see e.g. \cite{Suppes72}). Every ordinal $\alpha$ has
a successor $\alpha\cup\set{\alpha}$, denoted $\alpha+1$. Natural
numbers $n$ are identified with finite ordinals, so that
$n=\set{0,1,\dotsc,n-1}$ and $\omega\defeq\set{0,1,\dotsc}$ is the
smallest infinite ordinal.

\subsection{Sequences}

For any set $E$ and ordinal $\lambda$, an \emph{$E$-sequence $s$ of
  length} $\lambda$ is an element of $E^{\lambda}$, i.e., a function
$s:\lambda \rightarrow E$. Let $\empstr$ be the only element of $E^0$
(thus leaving $E$ implicit), and for any $e\in E$ let
$\repet{e}{\lambda}$ be the only element of $\set{e}^{\lambda}$.  For
any $s\in E^{\lambda}$ and $\iota<\lambda$, the image of $\iota$ by
$s$ is written $s_{\iota}$. If $\lambda$ is finite and non zero then
$s$ can be described as $s=s_0\dotsb s_{\lambda-1}$.  For any $x\in E$
we write $\occin{x}{s}$ and say that $x$ \emph{occurs in} $s$ if there
exists $\iota<\lambda$ such that $s_{\iota}=x$.  For any ordinal
$\alpha$, let
$\seqto{E}{\alpha} \defeq \bigcup_{\lambda<\alpha}E^{\lambda}$; this
is a disjoint union. For any $s\in\seqto{E}{\alpha}$ let $\card{s}$ be
the length of $s$, i.e., the unique $\lambda<\alpha$ such that
$s\in E^{\lambda}$.

For any set $F$ and function $f:E\rightarrow F$, let
$\seqto{f}{\alpha}:\seqto{E}{\alpha}\rightarrow \seqto{F}{\alpha}$ be
the function defined by $\seqto{f}{\alpha}(s)\defeq f\circ s$ for all
$s\in\seqto{E}{\alpha}$. We have
$\seqto{\Id{E}}{\alpha} = \Id{\seqto{E}{\alpha}}$ and
$\seqto{(g\circ f)}{\alpha} = \seqto{g}{\alpha} \circ
\seqto{f}{\alpha}$ for all $g:F\rightarrow G$. Since
$s\in E^{\lambda}$ entails $f\circ s\in F^{\lambda}$, then
$\card{\seqto{f}{\alpha}(s)} = \card{s}$.

If $s$ and $s'$ are respectively $E$- and $F$-sequences of length
$\lambda$, then they are both functions with domain $\lambda$ hence
there is a function $\pair{s}{s'}$ of domain $\lambda$. Thus
$\pair{s}{s'}$ is an $(E\times F)$-sequence of length $\lambda$, and
then
$\seqto{\projl}{\alpha}(\pair{s}{s'}) = \projl\circ\pair{s}{s'} = s$
and similarly $\seqto{\projr}{\alpha}(\pair{s}{s'}) = s'$ for all
$\alpha>\lambda$.
If $f:E\rightarrow F$ and $g:E\rightarrow G$ then $\pair{f}{g}:
E\rightarrow F\times G$, hence for all $s\in\seqto{E}{\alpha}$ of
length $\lambda<\alpha$ we have
$\seqto{\pair{f}{g}}{\alpha}(s) =\pair{f}{g}\circ s = \pair{f\circ
  s}{g\circ s} = \pair{\seqto{f}{\alpha}(s)}{ \seqto{g}{\alpha}(s)}$
is an $(F\times G)$-sequence of length $\lambda$.

For $s\in\seqto{E}{\omega}$ and $(A_e)_{e\in E}$ an $E$-indexed family
of sets, let $A_s\defeq\prod_{\iota<\card{s}}A_{s_{\iota}}$. In
particular we take $A_{\empstr}\defeq 1$ as a terminal object in
$\Sets$. For $(B_e)_{e\in E}$ an $E$-indexed family of sets and
$(f_e:A_e\rightarrow B_e)_{e\in E}$ an $E$-indexed family of
functions, let
$f_s\defeq\prod_{\iota<\card{s}}f_{s_{\iota}}:A_s\rightarrow B_s$.

\subsection{Signatures and Algebras}

A \emph{signature} is a function\footnote{For the sake of simplicity we do
not allow the overloading of operator names as in
\cite{SannellaTarlecki12}. These names will turn out to be
irrelevant anyway.}
$\Sig:\Op\rightarrow \seqto{S}{\omega}$, such that
$\Sig(o)\neq\empstr$ for all $o\in\Op$. The elements of $\Op$ are
called \emph{operator names} and those of $S$ \emph{sorts}. The
\emph{arity} of an operator name $o\in\Op$ is the finite ordinal
$n\defeq \card{\Sig(o)}-1$, its \emph{range} is
$\Rng{o}\defeq \Sig(o)_{n}$ (the last element of the $S$-sequence
$\Sig(o)$) and its \emph{domain} is
$\Dom{o}\defeq \restr{\Sig(o)}{n}{}$ (the rest of the sequence). $o$
is \emph{monadic} if $n=1$. The signature $\Sig$ is \emph{finite} if
$\Op$ and $S$ are finite, it is a \emph{graph structure} if all its
operator names are monadic.
  
A \emph{$\Sig$-algebra} $\Alg$ is a pair
$\tuple{(\carrier{\Alg}{s})_{s\in S}, (\interp{o}{\Alg})_{o\in \Op}}$
where $(\carrier{\Alg}{s})_{s\in S}$ is an $S$-indexed family of sets
and
$(\interp{o}{\Alg}: \carrier{\Alg}{\Dom{o}} \rightarrow
\carrier{\Alg}{\Rng{o}})_{o\in\Op}$ is an $\Op$-indexed family of
functions.  A \emph{$\Sig$-homomorphism} $h$ from $\Alg$ to a
$\Sig$-algebra $\Blg$ is an $S$-indexed family of functions
$(h_s: \carrier{\Alg}{s}\rightarrow \carrier{\Blg}{s})_{s\in S}$ such
that
\[\interp{o}{\Blg}\circ h_{\Dom{o}} =
  h_{\Rng{o}}\circ \interp{o}{\Alg}\] for all $o\in \varOmega$. Let
$\id{\Alg}\defeq (\Id{\carrier{\Alg}{s}})_{s\in S}$ and for any
$\Sig$-homomorphism $k:\Blg\rightarrow \Clg$, the $\Sig$-homomorphism
$k\circ h:\Alg\rightarrow\Clg$ is defined by
$(k\circ h)_s\defeq k_s\circ h_s$ for all $s\in S$. Let
$\AlgCat{\Sig}$ be the category of $\Sig$-algebras and
$\Sig$-homomorphisms.

\subsection{Categories}

We assume familiarity with the notions of functors, limits, colimits
and their preservation and reflection by functors, see
\cite{Herrlich-Strecker79}.  Isomorphism between objects in a category
is denoted by $\iso$ and equivalence between categories by
$\equivCat$.

For any object $T$ of $\aCat$, the \emph{slice category}
$\sliceCat{\aCat}{T}$ has as objects the morphisms of codomain $T$ of
$\aCat$, as morphisms from object $a:A\rightarrow T$ to object
$b:B\rightarrow T$ the morphisms $f:A\rightarrow B$ of $\aCat$ such
that $b\circ f = a$, and the composition of morphisms in
$\sliceCat{\aCat}{T}$ is defined as the composition of the underlying
morphisms in $\aCat$ (see \cite{EhrigEPT06} or \cite[Definition
4.19]{Herrlich-Strecker79}).

\section{Monographs and their Morphisms}\label{sec-mono}

{\sloppy
  \begin{definition}[monographs, edges, ordinal for $A$] A set
    $A$ is a \emph{monograph} if there exists a set $E$ (whose
    elements are called \emph{edges} of $A$) and an ordinal $\alpha$
    (said to be an ordinal \emph{for} $A$) such that
    $\tuple{E,A,\seqto{E}{\alpha}}$ is a function.
  \end{definition}}

A monograph is therefore a functional relation, which means that its
set of edges is uniquely determined. On the contrary, there are always
infinitely many ordinals for a monograph. As running example we
consider the monograph $A=\set{\tuple{x,x\,y\,x},\tuple{y,y\,x\,y}}$
then its set of edges is $E=\set{x,y}$. Since $A(x)$ and $A(y)$ are
elements of $E^{3}\subseteq \seqto{E}{4}$, then
$\tuple{E,A, \seqto{E}{4}}$ is a function. Hence 4 is an ordinal for
$A$, and so are all the ordinals greater than 4.

It is easy to see that for any set of monographs there exists a
common ordinal for all its members.

\begin{definition}[length $\card{x}$, edge $x_{\iota}$, trace
  $\trace{A}$, $O$-monographs]
  For any monograph $A$ with set of edges $E$, the \emph{length} of an
  edge $x\in E$ is the length $\card{A(x)}$, also written $\card{x}$
  if there is no ambiguity.  Similarly, for any $\iota<\card{x}$ we
  may write $x_{\iota}$ for $A(x)_{\iota}$. The \emph{trace of $A$} is
  the set $\trace{A}\defeq \setof{\card{x}}{x\in E}$.  For any set $O$
  of ordinals, $A$ is an \emph{$O$-monograph} if
  $\trace{A}\subseteq O$.
\end{definition}

Since any ordinal is a set of ordinals, we see that an ordinal $\alpha$ is
for a monograph iff this is an $\alpha$-monograph. Hence all
edges of a monograph have finite length iff it is an
$\omega$-monograph. 

\begin{definition}[adjacency, nodes $\nodes{A}$, standard
  monographs]\sloppy
  For any monograph $A$ and edges $x,y$ of $A$, $x$ is \emph{adjacent
    to} $y$ if $\occin{y}{A(x)}$. A \emph{node} is an edge of length
  0, and the set of nodes of $A$ is written $\nodes{A}$. $A$ is
  \emph{standard} if $\occin{y}{A(x)}$ entails $y\in\nodes{A}$, i.e.,
  all edges are sequences of nodes.
\end{definition}

The running example $A$ has no nodes and is therefore not standard.
Since $A(x)=x\,y\,x$ then $x$ is adjacent to $y$ and to itself. Similarly,
$A(y)=y\,x\,y$ yields that $y$ is adjacent to $x$ and to itself. In this
case the adjacency relation is symmetric, but this is not generally
the case, e.g., a node is never adjacent to any edge, while edges may
be adjacent to nodes.  

\begin{definition}[morphisms of monographs]
  A \emph{morphism} $f$ from monograph $A$ to monograph $B$ with
  respective sets of edges $E$ and $F$, denoted $f:A\rightarrow B$, is
  a function $f: E\rightarrow F$ such that
  $\seqto{f}{\alpha}\circ A = B\circ f$, where $\alpha$ is
  any ordinal for $A$.
\end{definition}

Building on the running example, we consider the permutation $f=(x\ y)$
of $E$ (in cycle notation), we see that
$\seqto{f}{4}\circ A (x)= \seqto{f}{4}(x\,y\,x) = y\,x\,y = A(y) = A\circ
f(x)$ and similarly that
$\seqto{f}{4}\circ A (y)= \seqto{f}{4}(y\,x\,y) = x\,y\,x = A(x) = A\circ
f(y)$, hence $\seqto{f}{4}\circ A= A\circ f$ and $f$ is therefore a
morphism from $A$ to $A$. Since $f\circ f= \Id{E}$ is obviously the
identity morphism $\id{A}$ then $f$ is an isomorphism.

Note that the terms of the equation
$\seqto{f}{\alpha}\circ A = B\circ f$ are functional relations and not
functions. One essential feature is that this equation holds for all
ordinals $\alpha$ for $A$ iff it holds for one. Thus if we are given a
morphism then we know that the equation holds \emph{for all} big
enough $\alpha$'s, and if we want to prove that a function is a
morphism then we need only prove that \emph{there exists} a big
enough $\alpha$ such that the equation holds.

This equation is of course equivalent to
$\seqto{f}{\alpha}\circ A(x) = B\circ f(x)$ for all $x\in E$. The
terms of this last equation are $F$-sequences that should therefore
have the same length:
\[\card{x} = \card{A(x)} = \card{\seqto{f}{\alpha}\circ A(x)} =
  \card{B\circ f(x)} = \card{f(x)},\] i.e., the length of edges are
preserved by morphisms. Hence $\trace{A}\subseteq \trace{B}$, and the
equality holds if $f$ is surjective. This means that if $B$ is an
$O$-monograph then so is $A$, and that every ordinal for $B$ is an
ordinal for $A$. This also means that the images of nodes can only be
nodes:
\[\relim{\invrel{f}}{\nodes{B}}=\setof{x\in E}{\card{f(x)}=0}=
  \setof{x\in E}{\card{x}=0} = \nodes{A}.\]

The sequences $\seqto{f}{\alpha}\circ A(x)$ and $B\circ f(x)$ should
also have the same elements
\begin{align*}
 & (\seqto{f}{\alpha}\circ A(x))_{\iota} =
                                          (f\circ (A(x)))_{\iota} = f(A(x)_{\iota}) = f(x_{\iota})\\
  \text{and } & (B\circ f(x))_{\iota} = B(f(x))_{\iota} = f(x)_{\iota}
\end{align*}
 for all
$\iota<\card{x}$. Thus $f:E\rightarrow F$ is a morphism iff
\[\card{f(x)} = \card{x}\text{ and }f(x_{\iota}) = f(x)_{\iota}\text{
    for all }
x\in E\text{ and all }\iota<\card{x}.\]

Assuming that $f:A\rightarrow B$ is a morphism and that $B$ is
standard, we have $f(x_{\iota})=f(x)_{\iota}\in \nodes{B}$ thus
$x_{\iota}\in \relim{\invrel{f}}{\nodes{B}} = \nodes{A}$ for all
$x\in E$ and $\iota<\card{x}$, hence $A$ is also standard.

Given morphisms $f:A\rightarrow B$ and $g:B\rightarrow C$, we see
that $g\circ f$ is a morphism from $A$ to $C$ by letting $\alpha$ be
an ordinal for $B$, so that
\[\seqto{(g\circ f)}{\alpha} \circ A = \seqto{g}{\alpha} \circ
\seqto{f}{\alpha} \circ A = \seqto{g}{\alpha}\circ B\circ f = C\circ
g\circ f.\] 

{\sloppy \begin{definition}[categories of monographs, functor
    $\edges{}$]\label{def-catmono} Let $\MonGr$ be the category of
    monographs and their morphisms. Let $\SMonGr$ be its full
    subcategory of standard monographs. For any set $O$ of ordinals,
    let $\OMonGr{O}$ (resp. $\OSMonGr{O}$) be the full subcategory of
    $O$-monographs (resp. standard $O$-monographs).  Let $\FMonGr$ be
    the full subcategory of finite $\omega$-monographs.

    Let $\edges{}$ be the forgetful functor from $\MonGr$ to $\Sets$,
    i.e., for every monograph $A$ let $\edges{A}$ be the set of edges
    of $A$, and for every morphism $f:A\rightarrow B$ let
    $\edges{f}:\edges{A}\rightarrow\edges{B}$ be the underlying
    function, usually denoted $f$.
\end{definition}}

There is an obvious similitude between standard $\set{0,2}$-monographs
and graphs. It is actually easy to define a functor
$\MFunc{}:\Graphs\rightarrow \OSMonGr{\set{0,2}}$ by mapping any graph
$G=\tuple{N,E,s,t}$ to the monograph $\MFunc{G}$ whose set of edges is
the coproduct $N+E$, and that maps every edge $e\in E$ to the sequence
of nodes $s(e)\,t(e)$ (and of course every node $x\in N$ to
$\empstr$). Similarly graph morphisms are transformed into morphisms
of monographs through a coproduct of functions. It is easy to see that
$\MFunc{}$ is an equivalence of categories. 

It is customary in Algebraic Graph Transformation to call \emph{typed
  graphs} the objects of $\sliceCat{\Graphs}{G}$, where $G$ is a graph
called \emph{type graph}, see e.g. \cite{EhrigEPT06}.  We will extend
this terminology to monographs and refer to the objects of
$\sliceCat{\MonGr}{T}$ as the \emph{monographs typed by} $T$ and $T$
as a \emph{type monograph}.

\section{Limits and Colimits}\label{sec-cat}

The colimits of monographs follow the standard constructions of
colimits in $\Sets$ and $\Graphs$.

\begin{lemma}\label{lm-coprod}
  Every pair $\tuple{A,B}$ of monographs has a coproduct
  $\tuple{A+B,\mu_1,\mu_2}$ such that $\trace{A+B} = \trace{A}\cup
  \trace{B}$ and if $A$ and $B$ are finite (resp. standard) then so is
  $A+B$.
\end{lemma}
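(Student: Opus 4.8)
The plan is to construct the coproduct explicitly using the coproduct of sets, following the construction already given for graphs via the functor $\MFunc{}$. Let $A$ and $B$ be monographs with sets of edges $E$ and $F$ respectively. I would first pick a common ordinal $\alpha$ for both $A$ and $B$ (this exists by the remark after the first definition, which guarantees a common ordinal for any set of monographs). The underlying set of edges of the coproduct should be the coproduct $E+F$ in $\Sets$, with injections $\mu_1:E\rightarrow E+F$ and $\mu_2:F\rightarrow E+F$.

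Next I would define the monograph structure on $E+F$. For an edge coming from $E$, say $\mu_1(x)$, its sequence should be $\seqto{\mu_1}{\alpha}(A(x)) = \mu_1\circ A(x)$, i.e. the sequence $A(x)$ with each entry relabelled through $\mu_1$; symmetrically $\mu_2(y)$ maps to $\mu_2\circ B(y)$. Concretely this defines the function $C:E+F\rightarrow \seqto{(E+F)}{\alpha}$ by $C\circ\mu_1 = \seqto{\mu_1}{\alpha}\circ A$ and $C\circ\mu_2 = \seqto{\mu_2}{\alpha}\circ B$, so that $A+B\defeq C$ is a monograph with set of edges $E+F$. By construction $\mu_1$ and $\mu_2$ are morphisms of monographs, since $\seqto{\mu_1}{\alpha}\circ A = (A+B)\circ\mu_1$ and likewise for $\mu_2$. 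Because relabelling preserves lengths ($\card{\seqto{\mu_i}{\alpha}(s)}=\card{s}$), we immediately get $\trace{A+B}=\trace{A}\cup\trace{B}$.

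The main verification is the universal property. Given any monograph $D$ with morphisms $g:A\rightarrow D$ and $h:B\rightarrow D$, I would use the universal property of $E+F$ in $\Sets$ to get a unique function $k:E+F\rightarrow\edges{D}$ with $k\circ\mu_1 = g$ and $k\circ\mu_2 = h$. The crux is checking that this $k$ is actually a morphism of monographs, i.e. that $\seqto{k}{\beta}\circ(A+B) = D\circ k$ for a suitable ordinal $\beta$ (one large enough to be an ordinal for both $A+B$ and $D$). I expect this to be the only step requiring genuine care: precompose the target equation with $\mu_1$ and $\mu_2$ separately, and on each summand reduce using functoriality of $\seqto{(-)}{\beta}$ (specifically $\seqto{k}{\beta}\circ\seqto{\mu_1}{\beta} = \seqto{(k\circ\mu_1)}{\beta} = \seqto{g}{\beta}$) together with the morphism equations for $g$, $h$, $\mu_1$, $\mu_2$. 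Since $\mu_1,\mu_2$ are jointly epic in $\Sets$, equality on both summands gives the full morphism equation, and uniqueness of $k$ is inherited from uniqueness in $\Sets$.

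Finally I would dispatch the closure properties. If $A$ and $B$ are finite then $E$ and $F$ are finite, so $E+F$ is finite and $\trace{A+B}=\trace{A}\cup\trace{B}\subseteq\omega$ is finite, hence $A+B$ is a finite $\omega$-monograph. For standardness, suppose $A$ and $B$ are standard; one checks that $\occin{z}{(A+B)(w)}$ for $w\in E+F$ forces $z$ to be the image under the relevant injection of a node of $A$ or $B$, and since the injections preserve and reflect nodes (length $0$ is preserved), $z$ is a node of $A+B$. No step here poses real difficulty; the whole proof is essentially transporting the $\Sets$-coproduct through the length-preserving, node-preserving relabelling maps, with the morphism check for $k$ being the single point where one must be slightly attentive to the $\seqto{(-)}{\beta}$ bookkeeping.
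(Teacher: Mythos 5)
Your proposal is correct and follows essentially the same route as the paper's proof: the edge set of $A+B$ is the coproduct $\edges{A}+\edges{B}$ in $\Sets$, the sequences are relabelled through $\seqto{\mu_i}{\alpha}$, and the mediating function from the $\Sets$-coproduct is checked to be a morphism by precomposing with $\mu_1$ and $\mu_2$ and using $\seqto{(k\circ\mu_i)}{\alpha} = \seqto{k}{\alpha}\circ\seqto{\mu_i}{\alpha}$. The only cosmetic difference is that you ask for an ordinal large enough for $D$ as well, which is unnecessary (the morphism equation only requires an ordinal for the domain $A+B$, and $\alpha$ already works since $\trace{A+B}=\trace{A}\cup\trace{B}$), but this does not affect correctness.
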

\begin{proof}
  Let $\alpha$ be an ordinal for $A$ and $B$, and
  $\tuple{\edges{A}+\edges{B},\mu_1,\mu_2}$ be the coproduct of
  $\tuple{\edges{A},\edges{B}}$ in $\Sets$. Since every element of
  $\edges{A}+\edges{B}$ is either a $\mu_1(x)$ or a $\mu_2(y)$ for
  some $x\in\edges{A}$, $y\in\edges{B}$, we can define a monograph $C$
  by taking $\edges{C}\defeq \edges{A}+\edges{B}$ with
  $C(\mu_1(x))\defeq \seqto{\mu_1}{\alpha} \circ A(x)$ and
  $C(\mu_2(y))\defeq \seqto{\mu_2}{\alpha} \circ B(y)$ for all
  $x\in\edges{A}$, $y\in\edges{B}$, so that $\mu_1:A\rightarrow C$ and
  $\mu_2:B\rightarrow C$ are morphisms. It is obvious that
  $\trace{C} = \trace{A}\cup \trace{B}$ and if $A$ and $B$ are finite
  (resp. standard) then so is $C$.

  \begin{center}
    \begin{tikzpicture}[scale=1]
      \node (EA) at (0,2) {$\edges{A}$}; \node (EB) at (0,0) {$\edges{B}$};
      \node (EC) at (0,1) {$\edges{A}+\edges{B}$}; \node (ED) at (2,1) {$\edges{D}$};
      \path[->] (EA) edge node[above, font=\footnotesize] {$f$} (ED);
      \path[->] (EA) edge node[left, font=\footnotesize] {$\mu_1$} (EC);
      \path[->] (EB) edge node[below, font=\footnotesize] {$g$} (ED);
      \path[->] (EB) edge node[left, font=\footnotesize] {$\mu_2$} (EC); 
      \path[->,dashed] (EC) edge node[above left, font=\footnotesize] {$h$} (ED); 
      \node (A) at (4,2) {${A}$}; \node (B) at (4,0) {${B}$};
      \node (C) at (4,1) {$C$}; \node (D) at (6,1) {${D}$};
      \path[->] (A) edge node[above, font=\footnotesize] {$f$} (D);
      \path[->] (A) edge node[left, font=\footnotesize] {$\mu_1$} (C);
      \path[->] (B) edge node[below, font=\footnotesize] {$g$} (D);
      \path[->] (B) edge node[left, font=\footnotesize] {$\mu_2$} (C); 
      \path[->,dashed] (C) edge node[above left, font=\footnotesize] {$h$} (D); 
    \end{tikzpicture}
  \end{center}

  Let $f:A\rightarrow D$ and $g:B\rightarrow D$, 
  there exists a unique function $h$ from
  $\edges{A}+\edges{B}=\edges{C}$ to $\edges{D}$ such that
  $f=h\circ\mu_1$ and $g=h\circ\mu_2$, hence \[\seqto{h}{\alpha}
    \circ C(\mu_1(x)) = \seqto{(h\circ \mu_1)}{\alpha}\circ A(x) =
    \seqto{f}{\alpha}\circ A(x) = D\circ f(x) = D\circ h(\mu_1(x))\]
  for all $x\in\edges{A}$, and similarly $\seqto{h}{\alpha}
    \circ C(\mu_2(y)) = D\circ h(\mu_2(y))$ for all $y\in\edges{B}$,
    hence $\seqto{h}{\alpha} \circ C = D\circ h$, i.e.,
    $h:C\rightarrow D$ is a morphism.
\end{proof}

\begin{lemma}\label{lm-coeq}
  Every pair of parallel morphisms $f,g:A\rightarrow B$ has a
  coequalizer $\tuple{Q,c}$ such that $\trace{Q}=\trace{B}$ and if $B$
  is finite (resp. standard) then so is $Q$.
\end{lemma}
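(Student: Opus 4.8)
The plan is to build the coequalizer on the underlying edge sets in $\Sets$ and then transport the monograph structure along the quotient map, exactly as the coproduct was built in Lemma~\ref{lm-coprod}. Fix an ordinal $\alpha$ for $B$; since $f:A\rightarrow B$ is a morphism, $\alpha$ is also an ordinal for $A$. Let $\tuple{\edges{Q},c}$ be the coequalizer of the underlying functions $\edges{f},\edges{g}:\edges{A}\rightarrow\edges{B}$ in $\Sets$, so that $c:\edges{B}\rightarrow\edges{Q}$ is the canonical surjection onto the quotient of $\edges{B}$ by the least equivalence relation identifying $f(x)$ with $g(x)$ for every $x\in\edges{A}$, and $c\circ f=c\circ g$.

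To equip $\edges{Q}$ with a monograph structure I would show that $\seqto{c}{\alpha}\circ B:\edges{B}\rightarrow\seqto{\edges{Q}}{\alpha}$ coequalizes $f$ and $g$. Using that $f,g$ are morphisms (so $B\circ f=\seqto{f}{\alpha}\circ A$ and $B\circ g=\seqto{g}{\alpha}\circ A$), the functoriality of $\seqto{-}{\alpha}$, and $c\circ f=c\circ g$, one computes
\[\seqto{c}{\alpha}\circ B\circ f=\seqto{(c\circ f)}{\alpha}\circ A=\seqto{(c\circ g)}{\alpha}\circ A=\seqto{c}{\alpha}\circ B\circ g.\]
The universal property of the coequalizer in $\Sets$ then delivers a unique function $Q:\edges{Q}\rightarrow\seqto{\edges{Q}}{\alpha}$ with $Q\circ c=\seqto{c}{\alpha}\circ B$; this says precisely that $\tuple{\edges{Q},Q,\seqto{\edges{Q}}{\alpha}}$ is a monograph and that $c:B\rightarrow Q$ is a morphism.

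For the universal property in $\MonGr$, take any morphism $h:B\rightarrow D$ with $h\circ f=h\circ g$. Its underlying function coequalizes $\edges{f},\edges{g}$, so there is a unique $k:\edges{Q}\rightarrow\edges{D}$ with $k\circ c=h$. To see $k:Q\rightarrow D$ is a morphism I would precompose the target equation with the surjection $c$: using $Q\circ c=\seqto{c}{\alpha}\circ B$, $k\circ c=h$ and $\seqto{h}{\alpha}\circ B=D\circ h$ one gets $\seqto{k}{\alpha}\circ Q\circ c=D\circ k\circ c$, and cancelling the epimorphism $c$ yields $\seqto{k}{\alpha}\circ Q=D\circ k$. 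Uniqueness of $k$ as a morphism follows from its uniqueness as a function, since the underlying-function map is faithful.

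Finally, the trace and closure claims are cheap. Because $c$ is surjective, the remarks following the definition of morphisms give $\trace{Q}=\trace{B}$; in particular $Q$ is an $\omega$-monograph whenever $B$ is, while $\edges{Q}$, being a quotient of $\edges{B}$, is finite whenever $B$ is, so $Q$ is finite whenever $B$ is. If $B$ is standard, then any edge occurring in some $B(b)$ lies in $\nodes{B}$, hence has length $0$; since $c$ preserves lengths, $\card{c(b_{\iota})}=\card{b_{\iota}}=0$, so every edge occurring in $Q(c(b))=\seqto{c}{\alpha}\circ B(b)$ is a node of $Q$, and $Q$ is standard. I expect the one genuinely delicate point to be the coequalizing computation above that makes $Q$ well defined; once it is in place, everything else is a routine transport of the $\Sets$-level universal property.
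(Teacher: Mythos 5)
Your proof is correct, and its overall architecture coincides with the paper's: quotient the edge set $\edges{B}$ by the equivalence relation generated by $\setof{\tuple{f(x),g(x)}}{x\in\edges{A}}$, transport the monograph structure along the canonical surjection $c$, and lift the universal property from $\Sets$. The one place where you genuinely diverge is the delicate step you yourself flag: well-definedness of $Q$ on the quotient. The paper defines $Q(c(y))\defeq\seqto{c}{\alpha}\circ B(y)$ and verifies by hand that this does not depend on the representative $y$: it expands $y\sim y'$ into a finite zigzag chain $y_0,\dotsc,y_n$ with $y_i\mathrel{R}y_{i+1}$ or $y_{i+1}\mathrel{R}y_i$, and shows $\seqto{c}{\alpha}\circ B$ is constant along the chain. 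You instead observe (by the same computation the paper also performs) that $\seqto{c}{\alpha}\circ B$ coequalizes $f$ and $g$, and then invoke the universal property of the coequalizer in $\Sets$ to obtain $Q$ as the unique function satisfying $Q\circ c=\seqto{c}{\alpha}\circ B$. This is a tidier discharge of the same obligation: the zigzag induction is exactly the proof that the quotient map is a coequalizer in $\Sets$, so you reuse that standard fact once rather than re-derive it, and the factorization equation you get for free is literally the statement that $c:B\rightarrow Q$ is a morphism. What the paper's explicit route buys is self-containedness and a concrete description of $Q$ that it reuses later (the zigzag characterization of $\sim$ reappears in the proofs of Corollary~\ref{cr-episurj} and Lemma~\ref{lm-coeq-restr}, which is presumably why the author spells it out). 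The remainder of your argument — the unique factorization $k$ through $c$, cancelling the surjection $c$ to get $\seqto{k}{\alpha}\circ Q=D\circ k$, and the trace, finiteness and standardness claims — matches the paper's proof step for step; you in fact write out the standardness check that the paper dismisses as obvious.
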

\begin{proof}
  Let $\alpha$ be an ordinal for $B$ and $\sim$ be the smallest
  equivalence relation on $\edges{B}$ that contains
  $R=\setof{\tuple{f(x), g(x)}}{x\in \edges{A}}$ and
  $c:\edges{B}\rightarrow \edges{B}/{\sim}$ be the canonical
  surjection, so that $c\circ f = c\circ g$. We thus have for all
  $x\in\edges{A}$ that
  \[\seqto{c}{\alpha}\circ B\circ f(x) = \seqto{(c\circ
      f)}{\alpha}\circ A(x) = \seqto{(c\circ g)}{\alpha}\circ A(x) =
    \seqto{c}{\alpha}\circ B\circ g(x).\]
  For all $y,y'\in\edges{B}$ such that $c(y)=c(y')$, i.e., $y\sim y'$, there is a finite
  sequence $y_0,\ldots,y_n$ of elements of $\edges{B}$ such that
  $y_0=y$, $y_n=y'$ and $y_i\mathrel{R} y_{i+1}$ or
  $y_{i+1}\mathrel{R} y_i$ for all $0\leq i <n$, hence
  $\seqto{c}{\alpha}\circ B(y_i) = \seqto{c}{\alpha}\circ B(y_{i+1})$,
  and therefore $\seqto{c}{\alpha}\circ B(y) = \seqto{c}{\alpha}\circ
  B(y')$.

  We can now define a monograph $Q$ by taking
  $\edges{Q}=\edges{B}/{\sim}$ with
  $Q(c(y))\defeq \seqto{c}{\alpha}\circ B(y)$, so that
  $c:B\rightarrow Q$ is a morphism. Since $c$ is surjective then
  $\trace{Q} = \trace{B}$ and if $B$ is finite (resp. standard) then
  so is $Q$.

  \begin{center}
    \begin{tikzpicture}[xscale=1.8,yscale=1.3]
      \node (EA) at (0,0) {$\edges{A}$}; \node (EB) at (1,0) {$\edges{B}$};
      \node (EQ) at (2,0) {$\edges{B}/{\sim}$}; \node (ED) at (2,1) {$\edges{D}$};
      \path[->] (EA) edge [bend left] node[above, font=\footnotesize] {$f$} (EB);
      \path[->] (EA) edge [bend right] node[below, font=\footnotesize] {$g$} (EB);
      \path[->] (EB) edge node[below, font=\footnotesize] {$c$} (EQ);
      \path[->] (EB) edge node[above, font=\footnotesize] {$d$} (ED); 
      \path[->,dashed] (EQ) edge node[right, font=\footnotesize] {$h$} (ED); 
      \node (A) at (3,0) {${A}$}; \node (B) at (4,0) {${B}$};
      \node (Q) at (5,0) {$Q$}; \node (D) at (5,1) {${D}$};
      \path[->] (A) edge [bend left] node[above, font=\footnotesize] {$f$} (B);
      \path[->] (A) edge [bend right] node[below, font=\footnotesize] {$g$} (B);
      \path[->] (B) edge node[below, font=\footnotesize] {$c$} (Q);
      \path[->] (B) edge node[above, font=\footnotesize] {$d$} (D); 
      \path[->,dashed] (Q) edge node[right, font=\footnotesize] {$h$} (D); 
    \end{tikzpicture}
  \end{center}
  Let $d:B\rightarrow D$ such that $d\circ f = d\circ g$, there exists a
  unique function $h$ from $\edges{Q}$ to $\edges{D}$ such that
  $d=h\circ c$, and $h:Q\rightarrow D$ is a morphism since for all $y\in\edges{B}$ ,
  \[D\circ h(c(y)) = D\circ d(y) = \seqto{d}{\alpha}\circ B(y) =
    \seqto{h}{\alpha}\circ\seqto{c}{\alpha}\circ B(y) =
    \seqto{h}{\alpha}\circ Q(c(y)).\] 
\end{proof}

\begin{corollary}\label{cr-episurj}
  The epimorphisms in $\MonGr$ are the surjective morphisms.
\end{corollary}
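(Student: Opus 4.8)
The plan is to prove both inclusions, the easy one directly and the hard one via the cokernel pair. For the direction \emph{surjective $\Rightarrow$ epimorphism}, recall that a morphism of monographs is by definition a function on edges, so equality of morphisms is just equality of the underlying functions. Hence if $f:A\rightarrow B$ is surjective and $g,h:B\rightarrow C$ satisfy $g\circ f = h\circ f$, then for every $y\in\edges{B}$ I can choose $x\in\edges{A}$ with $f(x)=y$ and conclude $g(y)=g(f(x))=h(f(x))=h(y)$; thus $g=h$ and $f$ is an epimorphism.

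For the converse I would exploit that $\MonGr$ has coproducts (Lemma~\ref{lm-coprod}) and coequalizers (Lemma~\ref{lm-coeq}), so the cokernel pair of any morphism exists. Concretely, given $f:A\rightarrow B$, form the coproduct $\tuple{B+B,\mu_1,\mu_2}$ and let $\tuple{P,c}$ be the coequalizer of $\mu_1\circ f,\mu_2\circ f:A\rightarrow B+B$, setting $i_1\defeq c\circ\mu_1$ and $i_2\defeq c\circ\mu_2$. Then $i_1\circ f = i_2\circ f$ by construction, so if $f$ is an epimorphism we immediately get $i_1=i_2$. It therefore suffices to show that $i_1 = i_2$ forces $f$ to be surjective.

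To read off exactly when $i_1=i_2$, I would compute the equivalence relation $\sim$ on $\edges{B}+\edges{B}$ supplied by Lemma~\ref{lm-coeq}, namely the smallest equivalence containing $R=\setof{\tuple{\mu_1(f(x)),\mu_2(f(x))}}{x\in\edges{A}}$. The crucial (purely set-theoretic) step is to verify that $R$ generates no identifications beyond those it already names: the classes are exactly the pairs $\set{\mu_1(y),\mu_2(y)}$ for $y\in\relim{f}{\edges{A}}$ together with the singletons $\set{\mu_1(y)}$ and $\set{\mu_2(y)}$ for $y\notin\relim{f}{\edges{A}}$. Granting this, $i_1(y)=c(\mu_1(y))$ equals $i_2(y)=c(\mu_2(y))$ if and only if $y\in\relim{f}{\edges{A}}$, so $i_1=i_2$ holds exactly when $\relim{f}{\edges{A}}=\edges{B}$, i.e. exactly when $f$ is surjective. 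Combined with the previous paragraph this gives the equivalence.

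I expect the one substantive point to be this computation of the equivalence classes, specifically the argument that transitivity cannot merge two distinct classes. It follows because every generating pair links $\mu_1(y)$ to $\mu_2(y)$ for the \emph{same} $y\in\relim{f}{\edges{A}}$, so no $\sim$-chain can connect $\mu_1(y)$ to $\mu_1(y')$ or $\mu_2(y)$ to $\mu_2(y')$ with $y\neq y'$. Everything else is immediate: the coprojections are morphisms and the universal property yields $i_1=i_2$ by the two preceding lemmas, and no monograph-specific reasoning enters, since $\sim$ lives entirely in $\Sets$ while the monograph structure on $P$ plays no role in deciding whether $i_1$ and $i_2$ agree.
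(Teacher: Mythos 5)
Your proposal is correct and follows essentially the same route as the paper: both form the cokernel pair of $f$ via the coproduct $B+B$ and the coequalizer of $\mu_1\circ f,\mu_2\circ f$ from Lemma~\ref{lm-coeq}, then deduce surjectivity from the fact that $\mu_1(y)\sim\mu_2(y)$ can hold only when $y$ lies in the image of $f$, the underlying analysis of $\sim$ being purely set-theoretic. The only cosmetic difference is that you describe the full partition into $\sim$-classes, whereas the paper extracts the same conclusion by noting that $\mu_1(y)\neq\mu_2(y)$ forces $\mu_1(y)$ to be $R$-related to some element, hence of the form $\mu_1\circ f(x)$.
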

\begin{proof}
  Assume $f:A\rightarrow B$ is an epimorphism. Let
  $\tuple{B+B, \mu_1,\mu_2}$ be a coproduct of $\tuple{B,B}$ and
  $\tuple{Q,c}$ be the coequalizer of
  $\mu_1\circ f,\mu_2\circ f:A \rightarrow B+B$ constructed in the
  proof of Lemma~\ref{lm-coeq}, then
  $c\circ\mu_1\circ f = c\circ\mu_2\circ f$, hence
  $c\circ\mu_1 = c\circ\mu_2$. For all $y\in\edges{B}$ we thus have
  $\mu_1(y)\sim \mu_2(y)$, and since $\mu_1(y)\neq \mu_2(y)$ then
  $\mu_1(y)$ must be related by $R$ to some element of
  $\edges{(B+B)}$, hence there is an $x\in\edges{A}$ such that
  $\mu_1(y)=\mu_1\circ f(x)$, thus $y=f(x)$ since $\mu_1$ is
  injective; this proves that $f$ is surjective. The converse is
  obvious.
\end{proof}

A well-known consequence of Lemmas~\ref{lm-coprod}, \ref{lm-coeq} and
that $\ensvide$ is the initial monograph is that all finite diagrams
have colimits. 

{\sloppy
\begin{theorem}\label{th-po} The categories of
  Definition~\ref{def-catmono} are finitely co-complete. 
\end{theorem}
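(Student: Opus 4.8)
The plan is to reduce the whole statement to the standard categorical fact that a category with an initial object, binary coproducts, and coequalizers of arbitrary parallel pairs is finitely cocomplete (every finite colimit arising as an iterated combination of these; see \cite{Herrlich-Strecker79}). For $\MonGr$ this is precisely the remark preceding the theorem: Lemma~\ref{lm-coprod} provides binary coproducts, Lemma~\ref{lm-coeq} provides coequalizers, and $\ensvide$ is the initial object. So the real work will be to carry this over to the five full subcategories of Definition~\ref{def-catmono}.

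I would effect that transfer through the following general principle about full subcategories. Let $\mathbf{D}$ be a full subcategory of $\MonGr$, and suppose that the colimit object of a diagram --- as constructed above in $\MonGr$ --- lies in $\mathbf{D}$ whenever every object of the diagram does. Then this object, together with its colimit cocone, is already a colimit in $\mathbf{D}$: any competing cocone in $\mathbf{D}$ is in particular a cocone in $\MonGr$, hence factors uniquely through the colimit by a $\MonGr$-morphism whose source and target both lie in $\mathbf{D}$; fullness then places that mediating morphism in $\mathbf{D}$, and its uniqueness there follows from uniqueness in $\MonGr$, since the two hom-sets coincide. It will therefore suffice to check, subcategory by subcategory, closure under the three constructions.

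These closure checks will be immediate from the bookkeeping already recorded in the lemmas. The empty monograph $\ensvide$ satisfies $\trace{\ensvide}=\ensvide$ and is finite and vacuously standard, so it belongs to each subcategory and, being initial in $\MonGr$ and lying in a full subcategory, remains initial there. For coproducts, Lemma~\ref{lm-coprod} yields $\trace{A+B}=\trace{A}\cup\trace{B}$ together with preservation of finiteness and standardness, so $A+B$ is again an $O$-monograph (resp. standard, resp. finite $\omega$-monograph) whenever $A$ and $B$ are. For coequalizers, Lemma~\ref{lm-coeq} gives $\trace{Q}=\trace{B}$ with the same preservation, so $Q$ too stays inside the subcategory. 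Hence each of $\SMonGr$, $\OMonGr{O}$, $\OSMonGr{O}$, and $\FMonGr$ is closed under all three constructions and inherits finite cocompleteness.

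I do not expect a genuine obstacle. The only point needing care is the transfer argument itself, where fullness must be invoked to guarantee that the mediating $\MonGr$-morphism actually lives in the subcategory, and where one must confirm that $\ensvide$ both belongs to and stays initial in each subcategory; everything else is routine trace and cardinality bookkeeping supplied by the two lemmas.
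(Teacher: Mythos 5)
Your proposal is correct and follows essentially the same route as the paper: the paper's proof consists exactly of the observation that Lemmas~\ref{lm-coprod} and \ref{lm-coeq} together with the initial monograph $\ensvide$ yield all finite colimits, with the trace, finiteness and standardness clauses of those lemmas serving precisely to guarantee closure of the full subcategories, as you spell out. Your explicit treatment of the transfer-to-full-subcategory step is the only difference, and it is a matter of detail the paper leaves implicit rather than a different argument.
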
}

We next investigate the limits in categories of monographs. Products of
monographs are more difficult to build than products of graphs. This
is due to the fact that edges of identical length may be adjacent to
edges of different lengths.

\begin{lemma}\label{lm-prod}
  Every pair $\tuple{A,B}$ of monographs has a product
  $\tuple{A\times B,\projl',\projr'}$ such that $A\times B$ is finite
  whenever $A$ and $B$ are finite.
\end{lemma}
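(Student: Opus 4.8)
The plan is to build the edges of $A\times B$ as \emph{compatible} pairs of edges of $A$ and $B$, generalising the product constructions in $\Sets$ and $\Graphs$. Fix an ordinal $\alpha$ for both $A$ and $B$ (a common ordinal exists). If $\projl'$ and $\projr'$ are to be morphisms, then an edge projecting to $x\in\edges{A}$ and $y\in\edges{B}$ must satisfy $\card{x}=\card{y}$, and its $\iota$-th constituent must in turn project to $x_\iota$ and $y_\iota$; iterating, one needs $\card{x_\iota}=\card{y_\iota}$ for all $\iota$, and so on all the way down. This requirement is coinductive rather than inductive, so I would introduce the monotone operator $\Phi$ on subsets of $\edges{A}\times\edges{B}$ defined by
\[\Phi(P)\defeq\setof{\tuple{x,y}\in\edges{A}\times\edges{B}}{\card{x}=\card{y}\text{ and }\tuple{x_\iota,y_\iota}\in P\text{ for all }\iota<\card{x}}\]
and take $E^*$ to be its greatest fixed point, i.e. the union of all post-fixed points $P\subseteq\Phi(P)$ (equivalently, the set of pairs all of whose iterated constituents have components of equal length).

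I would then set $\edges{A\times B}\defeq E^*$ with $(A\times B)\tuple{x,y}\defeq\pair{A(x)}{B(y)}$. Because $E^*$ is a fixed point of $\Phi$, each value $\tuple{x_\iota,y_\iota}$ of this sequence again lies in $E^*$, so $\pair{A(x)}{B(y)}$ is an $E^*$-sequence of length $\card{x}=\card{y}<\alpha$ and $A\times B$ is a genuine monograph for $\alpha$. The projections $\projl',\projr'$ are the restrictions to $E^*$ of the set-projections $\projl,\projr$; they are morphisms because, by the identities $\seqto{\projl}{\alpha}(\pair{s}{s'})=s$ and $\seqto{\projr}{\alpha}(\pair{s}{s'})=s'$ recalled in Section~\ref{sec-defs}, we get $\seqto{\projl'}{\alpha}\circ(A\times B)=A\circ\projl'$ and symmetrically for $\projr'$.

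For the universal property, given $f:C\rightarrow A$ and $g:C\rightarrow B$ I would define $h(w)\defeq\tuple{f(w),g(w)}$ for every $w\in\edges{C}$. The key point is that $h$ lands in $E^*$: the set $P\defeq\setof{\tuple{f(w),g(w)}}{w\in\edges{C}}$ is a post-fixed point of $\Phi$, since $\card{f(w)}=\card{w}=\card{g(w)}$ and, using $f(w)_\iota=f(w_\iota)$ and $g(w)_\iota=g(w_\iota)$, the $\iota$-th constituent of $\tuple{f(w),g(w)}$ equals $\tuple{f(w_\iota),g(w_\iota)}\in P$; hence $P\subseteq E^*$. Then $h$ is a morphism (the same computation gives $h(w)_\iota=\tuple{f(w_\iota),g(w_\iota)}=h(w_\iota)$) satisfying $\projl'\circ h=f$ and $\projr'\circ h=g$, and uniqueness is immediate since any competitor $h'$ must obey $h'(w)=\tuple{\projl'h'(w),\projr'h'(w)}=\tuple{f(w),g(w)}$. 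Finiteness is clear, as $E^*\subseteq\edges{A}\times\edges{B}$.

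The main obstacle is getting the edge set right. One cannot take all pairs with $\card{x}=\card{y}$, for then the $\iota$-th constituent $\tuple{x_\iota,y_\iota}$ of such a pair may fail to be an edge; nor can one build the pairs inductively from nodes upward, because cyclic edges of positive length need not bottom out at nodes. In the running example, $x$ and $y$ have length $3$ and refer only to one another, so an inductive (least fixed point) construction of $E^*$ for $A\times A$ would wrongly discard $\tuple{x,y}$, whereas the coinductive (greatest fixed point) construction retains it via the post-fixed point $\set{\tuple{x,y},\tuple{y,x}}$. Thus the crux is to verify that $E^*$ is simultaneously a fixed point of $\Phi$ (so $A\times B$ is well defined) and contains every post-fixed point (so the mediating morphism always exists).
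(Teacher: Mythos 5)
Your proof is correct and is essentially the paper's own argument: your operator $\Phi$'s post-fixed points are exactly the paper's family of subsets $H$ closed under taking constituents, and your greatest fixed point $E^*$ (union of post-fixed points) is precisely the paper's greatest element $\edges{P}$, with the same definition $(A\times B)\tuple{x,y}\defeq\pair{A(x)}{B(y)}$, the same restricted projections, and the same mediating-morphism argument showing $\relim{\pair{f}{g}}{\edges{C}}$ is a post-fixed point hence contained in $E^*$. The only difference is that you make the Knaster--Tarski/coinductive framing explicit, which the paper leaves implicit.
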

\begin{proof}
  Let $\alpha$ be an ordinal for $A$ and $B$, let
  $\tuple{\edges{A}\times\edges{B},\projl,\projr}$ be the product of
  $\tuple{\edges{A},\edges{B}}$ in $\Sets$, we consider the set of
  subsets $H$ of
  $\setof{\tuple{x,y}\in\edges{A}\times \edges{B}}{\card{x}=\card{y}}$
  such that $\tuple{x,y}\in H$ entails
  $\tuple{x_{\iota},y_{\iota}}\in H$ for all $\iota< \card{x}$. This
  set contains $\ensvide$ and is closed under union, hence it has a
  greatest element $\edges{P}$, and we let
  $P\tuple{x,y}\defeq\pair{A(x)}{B(y)}$ for all
  $\tuple{x,y}\in\edges{P}$; this is obviously an
  $\edges{P}$-sequence, hence $P$ is a monograph. Let
  $\projl'\defeq\restr{\projl}{\edges{P}}{}$ and
  $\projr'\defeq\restr{\projr}{\edges{P}}{}$, we
  have
  \[\seqto{\projl'}{\alpha}\circ P\tuple{x,y} = A(x) = A\circ
  \projl'\tuple{x,y}\]
  for all $\tuple{x,y}\in\edges{P}$, hence $\projl':P\rightarrow A$
  and similarly $\projr':P\rightarrow B$ are morphisms.

  \begin{center}
    \begin{tikzpicture}[xscale=1,yscale=1.2]
      \node (EA) at (2,2) {$\edges{A}$}; \node (EB) at (2,0) {$\edges{B}$};
      \node (EP) at (2,1) {$\edges{A}\times\edges{B}$}; \node (EC) at (0,1) {$\edges{C}$};
      \path[->] (EC) edge node[above, font=\footnotesize] {$f$} (EA);
      \path[->] (EP) edge node[right, font=\footnotesize] {$\projl$} (EA);
      \path[->] (EC) edge node[below, font=\footnotesize] {$g$} (EB);
      \path[->] (EP) edge node[right, font=\footnotesize] {$\projr$} (EB); 
      \path[->,dashed] (EC) edge node[above right,
      font=\footnotesize,near start] {$\pair{f}{g}$} (EP); 
      \node (A) at (6,2) {${A}$}; \node (B) at (6,0) {${B}$};
      \node (P) at (6,1) {$P$}; \node (C) at (4,1) {${C}$};
      \path[->] (C) edge node[above, font=\footnotesize] {$f$} (A);
      \path[->] (P) edge node[right, font=\footnotesize] {$\projl'$} (A);
      \path[->] (C) edge node[below, font=\footnotesize] {$g$} (B);
      \path[->] (P) edge node[right, font=\footnotesize] {$\projr'$} (B); 
      \path[->,dashed] (C) edge node[above right, font=\footnotesize] {$h$} (P); 
    \end{tikzpicture}
  \end{center}

  Let $f:C\rightarrow A$ and $g:C\rightarrow B$, then
  $\pair{f}{g}: \edges{C}\rightarrow \edges{A}\times\edges{B}$ and for
  all $z\in\edges{C}$ we have $\card{f(z)} = \card{z}=\card{g(z)}$
  hence
  $\relim{\pair{f}{g}}{\edges{C}} \subseteq
  \setof{\tuple{x,y}\in\edges{A}\times
    \edges{B}}{\card{x}=\card{y}}$. Assume that $\tuple{x,y}\in
  \relim{\pair{f}{g}}{\edges{C}}$, then there exists a $z\in\edges{C}$
  such that $x=f(z)$ and $y=g(z)$, hence $\card{x}=\card{y}$,
  $f(z_{\iota}) =   f(z)_{\iota}=x_i$ and $g(z_{\iota}) =g(z)_{\iota} =
  y_{\iota}$ for all $\iota<\card{x}$, hence
  $\tuple{x_{\iota},y_{\iota}}\in
  \relim{\pair{f}{g}}{\edges{C}}$. Thus
  $\relim{\pair{f}{g}}{\edges{C}}\subseteq \edges{P}$ and we let
  $h\defeq\restr{\pair{f}{g}}{\edges{C}}{\edges{P}}$, then $h$ is the
  unique function such that $\projl'\circ h=f$ and $\projr'\circ
  h=g$, and $h:C\rightarrow P$ is a morphism since for all $z\in \edges{C}$,
  \begin{eqnarray*}
        P\circ h(z) &=& P\tuple{f(z),g(z)}\\
                 &=& \pair{A\circ f(z)}{B\circ g(z)}\\
                 &=& \pair{\seqto{f}{\alpha}\circ
                  C(z)}{\seqto{g}{\alpha}\circ C(z)}\\
                &=& \seqto{h}{\alpha}\circ C(z). 
  \end{eqnarray*}
\end{proof}

We therefore see that $\edges{(A\times B)}$ is only a subset of
$\edges{A} \times\edges{B}$.

\begin{lemma}\label{lm-equalizer}
  Every pair of parallel morphisms $f,g:A\rightarrow B$ has an
  equalizer $\tuple{E,e}$ such that $E$ is finite whenever $A$ is finite.
\end{lemma}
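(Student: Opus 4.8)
The plan is to transfer the $\Sets$ equalizer through the forgetful functor $\edges{}$, exactly as the coequalizer and product were handled in Lemmas~\ref{lm-coeq} and~\ref{lm-prod}. Concretely, I would fix an ordinal $\alpha$ for $A$, take $\edges{E}\defeq\setof{x\in\edges{A}}{f(x)=g(x)}$ (the equalizer of the underlying functions $f,g:\edges{A}\rightarrow\edges{B}$ in $\Sets$), and let $e:\edges{E}\rightarrow\edges{A}$ be the inclusion. The candidate equalizer is then $\tuple{E,e}$, where $E$ is $A$ restricted to $\edges{E}$.

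The one step that genuinely uses the monograph axioms — and which I regard as the crux — is checking that $\edges{E}$ is closed under adjacency, so that $A$ restricts to a \emph{monograph} on $\edges{E}$ rather than merely to a subset of edges. For $x\in\edges{E}$ and any $\iota<\card{x}$, the characterization of morphisms gives $f(x_{\iota})=f(x)_{\iota}=g(x)_{\iota}=g(x_{\iota})$, so $x_{\iota}\in\edges{E}$. Hence $A(x)$ is already an $\edges{E}$-sequence, and setting $E(x)\defeq A(x)$ for $x\in\edges{E}$ defines a monograph. The inclusion $e$ is a morphism since $\seqto{e}{\alpha}\circ E(x)=A(x)=A\circ e(x)$, and $f\circ e=g\circ e$ holds by construction of $\edges{E}$.

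For the universal property, I would start from an arbitrary morphism $d:C\rightarrow A$ with $f\circ d=g\circ d$. Then $f(d(z))=g(d(z))$ for every $z\in\edges{C}$, so $d(z)\in\edges{E}$ and $d$ corestricts to a unique function $h:\edges{C}\rightarrow\edges{E}$ with $e\circ h=d$. To see that $h$ is a morphism, I would exploit that $\seqto{e}{\alpha}$ is injective (because $e$ is an inclusion) and compute $\seqto{e}{\alpha}(E\circ h(z))=A\circ d(z)=\seqto{d}{\alpha}\circ C(z)=\seqto{e}{\alpha}(\seqto{h}{\alpha}\circ C(z))$, then cancel $\seqto{e}{\alpha}$ to obtain $E\circ h(z)=\seqto{h}{\alpha}\circ C(z)$. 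Uniqueness of $h$ as a morphism reduces to uniqueness of the underlying function, since $e$ is monic. Finally, $\edges{E}\subseteq\edges{A}$ (and $\trace{E}\subseteq\trace{A}$) yields that $E$ is finite whenever $A$ is.

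I do not anticipate a serious difficulty: once the closure-under-adjacency observation is in place, the construction is a routine pullback of the set-level equalizer, and the morphism verifications are the same bookkeeping with $\seqto{(-)}{\alpha}$ used throughout Section~\ref{sec-cat}. The only point deserving emphasis is precisely that closure step, since it is what guarantees the set-theoretic equalizer is a legitimate sub-monograph.
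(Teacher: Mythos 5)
Your proposal is correct and follows essentially the same route as the paper: both construct $E$ as the sub-monograph of $A$ on the edges where $f$ and $g$ agree, verify closure under adjacency (you do it pointwise via $f(x_{\iota})=f(x)_{\iota}=g(x)_{\iota}=g(x_{\iota})$, the paper via the sequence equation $\seqto{f}{\alpha}\circ A(x)=B\circ f(x)=B\circ g(x)=\seqto{g}{\alpha}\circ A(x)$, which is the same fact), and both establish the universal property by corestricting $d$ and cancelling the injective $\seqto{e}{\alpha}$. No gap; the finiteness claim also matches the paper's.
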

\begin{proof}
  Let $\alpha$ be an ordinal for $A$,
  $\edges{E}\defeq\setof{x\in\edges{A}}{f(x)=g(x)}$,
  $e:\edges{E}\hookrightarrow \edges{A}$ be the canonical injection and
  $E(x)\defeq A(x)$ for all $x\in\edges{E}$. Since
  \[\seqto{f}{\alpha}\circ A(x) = B\circ f(x) = B\circ g(x) =
    \seqto{g}{\alpha}\circ A(x)\]
  then $E(x)$ is an $\edges{E}$-sequence, hence $E$ is a
  monograph. Besides $\seqto{e}{\alpha}\circ E(x) = A(x) = A\circ
  e(x)$, hence $e:E\rightarrow A$ is a morphism such that $f\circ e =
  g\circ e$.

  \begin{center}
    \begin{tikzpicture}[xscale=1.8,yscale=1.3]
      \node (EA) at (1,0) {$\edges{A}$}; \node (EB) at (2,0) {$\edges{B}$};
      \node (EE) at (0,0) {$\edges{E}$}; \node (ED) at (0,1) {$\edges{D}$};
      \path[->] (EA) edge [bend left] node[above, font=\footnotesize] {$f$} (EB);
      \path[->] (EA) edge [bend right] node[below, font=\footnotesize] {$g$} (EB);
      \path[{Hooks[right]}->] (EE) edge node[below, font=\footnotesize] {$e$} (EA);
      \path[->] (ED) edge node[above, font=\footnotesize] {$d$} (EA); 
      \path[->,dashed] (ED) edge node[left, font=\footnotesize] {$h$} (EE); 
      \node (A) at (4,0) {${A}$}; \node (B) at (5,0) {${B}$};
      \node (E) at (3,0) {${E}$}; \node (D) at (3,1) {${D}$};
      \path[->] (A) edge [bend left] node[above, font=\footnotesize] {$f$} (B);
      \path[->] (A) edge [bend right] node[below, font=\footnotesize] {$g$} (B);
      \path[{Hooks[right]}->] (E) edge node[below, font=\footnotesize] {$e$} (A);
      \path[->] (D) edge node[above, font=\footnotesize] {$d$} (A); 
      \path[->,dashed] (D) edge node[left, font=\footnotesize] {$h$} (E); 
    \end{tikzpicture}
  \end{center}

  For any $d:D\rightarrow A$ such that $f\circ d = g\circ d$, we have
  $d(y)\in\edges{E}$ for all $y\in\edges{D}$, hence
  $h\defeq \restr{d}{\edges{D}}{\edges{E}}$ is the unique function
  such that $d = e\circ h$. We have
  \[\seqto{e}{\alpha}\circ \seqto{h}{\alpha}\circ D =
    \seqto{d}{\alpha}\circ D = A\circ d = A\circ e\circ h =
    \seqto{e}{\alpha}\circ E\circ h\]
  and $\seqto{e}{\alpha}:\seqto{(\edges{E})}{\alpha}\hookrightarrow
  \seqto{(\edges{A})}{\alpha}$ is the canonical injection, hence
  $\seqto{h}{\alpha}\circ D = E\circ h$ and $h:D\rightarrow E$ is a
  morphism. 
\end{proof}
\begin{corollary}\label{cr-monoinj}
  The monomorphisms in $\MonGr$ are the injective morphisms.
\end{corollary}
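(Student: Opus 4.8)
The plan is to prove both inclusions, dualizing the argument of Corollary~\ref{cr-episurj}. That every injective morphism $f:A\rightarrow B$ is a monomorphism is immediate from the fact that a morphism of monographs is determined by its underlying edge-function: if $g,h:C\rightarrow A$ satisfy $f\circ g=f\circ h$, then $\edges{f}\circ\edges{g}=\edges{f}\circ\edges{h}$ in $\Sets$, and since $\edges{f}$ is injective, hence a monomorphism in $\Sets$, we get $\edges{g}=\edges{h}$ and therefore $g=h$. This direction requires no real work.

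For the converse, suppose $f:A\rightarrow B$ is a monomorphism. I would form the product $\tuple{A\times A,\projl',\projr'}$ of Lemma~\ref{lm-prod} and take the equalizer $\tuple{E,e}$ of the parallel pair $f\circ\projl',f\circ\projr':A\times A\rightarrow B$ as constructed in Lemma~\ref{lm-equalizer}; the resulting span $\tuple{E,\projl'\circ e,\projr'\circ e}$ is the kernel pair of $f$. By construction $f\circ(\projl'\circ e)=f\circ(\projr'\circ e)$, so the monomorphism hypothesis yields $\projl'\circ e=\projr'\circ e$, which means $x=y$ for every edge $\tuple{x,y}$ of $E$. It then remains to identify $\edges{E}$ explicitly so as to read off injectivity. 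By the construction of the equalizer we have $\edges{E}=\setof{\tuple{x,y}\in\edges{(A\times A)}}{f(x)=f(y)}$, where $\projl'\circ e$ and $\projr'\circ e$ send $\tuple{x,y}$ to $x$ and $y$ respectively.

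The crux—and the only place where a computation is needed—is to show that the set $H\defeq\setof{\tuple{x,y}\in\edges{A}\times\edges{A}}{f(x)=f(y)}$ is already one of the adjacency-closed subsets appearing in the construction of $A\times A$, so that $H\subseteq\edges{(A\times A)}$ and hence $\edges{E}=H$. Indeed, $f(x)=f(y)$ forces $\card{f(x)}=\card{f(y)}$, so $\card{x}=\card{y}$ since morphisms preserve length; and for each $\iota<\card{x}$ the morphism identity $f(x_{\iota})=f(x)_{\iota}=f(y)_{\iota}=f(y_{\iota})$ gives $\tuple{x_{\iota},y_{\iota}}\in H$. Thus $H$ satisfies the defining conditions of the subsets whose greatest element is $\edges{(A\times A)}$, so $H\subseteq\edges{(A\times A)}$ and $\edges{E}=H$. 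Combining this with $\projl'\circ e=\projr'\circ e$, every pair $\tuple{x,y}$ with $f(x)=f(y)$ is an edge of $E$ and therefore satisfies $x=y$; this is exactly the injectivity of $f$. I expect the step verifying that $H$ is adjacency-closed to be the only subtle point, the rest being the standard characterization of monomorphisms via their kernel pair.
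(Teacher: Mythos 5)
Your proof is correct and takes essentially the same route as the paper's: form the product $A\times A$ of Lemma~\ref{lm-prod}, take the equalizer of $f\circ\projl'$ and $f\circ\projr'$ from Lemma~\ref{lm-equalizer}, use the monomorphism hypothesis to get $\projl'\circ e=\projr'\circ e$, and read off injectivity. The only difference is that you explicitly check that every pair $\tuple{x,y}$ with $f(x)=f(y)$ lies in an adjacency-closed set of equal-length pairs and is therefore an edge of $A\times B$ (here $A\times A$) — a step the paper's proof leaves implicit at ``hence $\tuple{x,y}\in\edges{E}$'' and only verifies in general later, in the proof of Lemma~\ref{lm-Epresreflpb}.
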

\begin{proof}
  Assume $f:A\rightarrow B$ is a monomorphism. Let
  $\tuple{A\times A,\projl,\projr}$ be a product of $\tuple{A,A}$ and
  $\tuple{E,e}$ be the equalizer of
  $f\circ\projl,f\circ\projr:A\times A\rightarrow B$ constructed in
  the proof of Lemma~\ref{lm-equalizer}, then
  $f\circ \projl\circ e = f\circ \projr\circ e$, hence
  $\projl\circ e = \projr\circ e$. For all $x,y\in \edges{A}$, if
  $f(x)=f(y)$ then $f\circ\projl\tuple{x,y} = f\circ\projr\tuple{x,y}$
  hence $\tuple{x,y}\in\edges{E}$ and therefore
  $x= \projl\circ e\tuple{x,y} = \projr\circ e\tuple{x,y} = y$, hence
  $f$ is injective. The converse is obvious.
\end{proof}

A well-known consequence of Lemmas~\ref{lm-prod} and
\ref{lm-equalizer} is that all non-empty finite diagrams in $\MonGr$
have limits. Since a limit of $O$-monographs (resp. standard
monographs) is an $O$-monograph (resp. standard), this holds for all
categories of Definition~\ref{def-catmono}.  In particular they all
have pullbacks.

We shall now investigate the limits of the empty diagram in these
categories, i.e., their possible terminal objects.

\begin{definition}
  For any set of ordinals $O$, let \[\mathrm{T}_O = \left\{
    \begin{array}{ll}
      \setof{\tuple{\lambda, \repet{0}{\lambda}}}{\lambda\in O} & \text{if }
                                                           0\in O\\
      \ensvide & \text{otherwise.}
    \end{array}\right.\] 
\end{definition}

If $0\in O$ then $0$ is a node of $\mathrm{T}_O$ and obviously
$\edges{\mathrm{T}_O}=\trace{\mathrm{T}_O}=O$. Hence in all cases
$\mathrm{T}_O$ is a standard $O$-monograph.

\begin{lemma}\label{lm-terminalOS}
  $\mathrm{T}_O$ is terminal in $\OSMonGr{O}$.
\end{lemma}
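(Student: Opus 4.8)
The plan is to show that every standard $O$-monograph $A$ admits exactly one morphism to $\mathrm{T}_O$, treating separately the cases $0\in O$ and $0\notin O$.

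First I would handle $0\in O$. Here the edges of $\mathrm{T}_O$ are the ordinals $\lambda\in O$, and since $\mathrm{T}_O(\lambda)=\repet{0}{\lambda}$ the length of the edge $\lambda$ is exactly $\lambda$. This observation does the decisive work: it means the length function on $\edges{\mathrm{T}_O}$ is the identity, so for any morphism $g:A\to\mathrm{T}_O$ the preservation of lengths $\card{g(x)}=\card{x}$ forces $g(x)=\card{x}$. Uniqueness is therefore settled before existence, and the only candidate is $f(x)\defeq\card{x}$.

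For existence I would check that $f$ is well defined and satisfies the morphism characterization $\card{f(x)}=\card{x}$ and $f(x_\iota)=f(x)_\iota$. Well-definedness as a map $\edges{A}\to O$ follows from $A$ being an $O$-monograph, since $\card{x}\in\trace{A}\subseteq O$. The length condition is immediate because the edge $\card{x}$ of $\mathrm{T}_O$ has length $\card{x}$. The adjacency condition is where standardness of $A$ enters: for $\iota<\card{x}$ the edge $x_\iota$ is a node, so $f(x_\iota)=\card{x_\iota}=0$, and on the other side $f(x)_\iota=(\repet{0}{\card{x}})_\iota=0$, so the two sides agree.

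Finally, for $0\notin O$ I would argue that $\OSMonGr{O}$ has $\ensvide$ as its only object, so that $\mathrm{T}_O=\ensvide$ is vacuously terminal. If a standard $A$ had an edge $x$, then $\card{x}\neq 0$ (as $0\notin O$), so $A(x)$ would contain some element $x_0$ with $\occin{x_0}{A(x)}$; standardness would force $x_0\in\nodes{A}$ and hence $0\in\trace{A}\subseteq O$, a contradiction. Thus $\edges{A}=\ensvide$ and $A=\ensvide$. I expect the only real subtlety to be remembering this case split and the elementary fact that $\card{\lambda}=\lambda$ inside $\mathrm{T}_O$; the rest is a direct application of the morphism characterization.
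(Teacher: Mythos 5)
Your proof is correct and follows essentially the same route as the paper: the same case split on $0\in O$, the same candidate morphism sending each edge to its length, standardness used in the same way to verify the adjacency condition, and uniqueness from length preservation plus the fact that $\mathrm{T}_O$ has exactly one edge of each length. The only (welcome) additions are that you spell out why $\ensvide$ is the only standard $O$-monograph when $0\notin O$, which the paper asserts without proof, and that you phrase the verification via the pointwise characterization of morphisms rather than the sequence equation.
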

\begin{proof}
  If $0\not\in O$ then $\ensvide=\mathrm{T}_O$ is the only standard
  $O$-monograph, hence it is terminal. Otherwise let $A$ be any
  standard $O$-monograph, $\alpha$ an ordinal for $A$ and
  $\ell:\edges{A}\rightarrow O$ be the function that maps every edge
  $x\in\edges{A}$ to its length $\card{x}$. Since $A$ is standard
  then $(\seqto{\ell}{\alpha}\circ A(x))_{\iota} = \card{A(x)_{\iota}} =
  0$ for all $\iota<\card{x}$, hence $\seqto{\ell}{\alpha}\circ A(x) =
  \repet{0}{\card{x}} = \mathrm{T}_O\circ \ell(x)$, so that $\ell:A\rightarrow
  \mathrm{T}_O$ is a morphism. Since morphisms preserve the length of
  edges and there is exactly one edge of each length in
  $\mathrm{T}_O$, then $\ell$ is unique.
\end{proof}

We now use the fact that every ordinal is a set of ordinals.

\begin{lemma}\label{lm-ordmono}
  For any monograph $T$ and morphism $f:\mathrm{T}_{\alpha}\rightarrow
  T$, any ordinal for $T$ is equal to or greater than $\alpha$.
\end{lemma}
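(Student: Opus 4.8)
The plan is to reduce the statement to two facts already recorded above: that morphisms preserve the length of edges, and that $\trace{\mathrm{T}_\alpha}=\alpha$. The underlying idea is that being \emph{an ordinal for} $T$ is just a containment condition on $\trace{T}$, while a morphism out of $\mathrm{T}_\alpha$ forces $\trace{T}$ to contain $\trace{\mathrm{T}_\alpha}=\alpha$; combining the two pins $\alpha$ below every ordinal for $T$.

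First I would recall, from the remark following the first definition of $O$-monographs, that an ordinal $\beta$ is for $T$ exactly when $T$ is a $\beta$-monograph, i.e. when $\trace{T}\subseteq\beta$. Next, since $f:\mathrm{T}_\alpha\rightarrow T$ is a morphism, the discussion preceding Definition~\ref{def-catmono} gives $\trace{\mathrm{T}_\alpha}\subseteq\trace{T}$. Using $\trace{\mathrm{T}_\alpha}=\alpha$ (noted just after the definition of $\mathrm{T}_O$), these chain into
\[\alpha=\trace{\mathrm{T}_\alpha}\subseteq\trace{T}\subseteq\beta,\]
so $\alpha\subseteq\beta$ as sets. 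Finally I would invoke the characterisation of ordinals from Section~\ref{sec-defs}, by which the subset relation coincides with the ordinal order: $\alpha\subseteq\beta$ is equivalent to $\alpha\leq\beta$. This is precisely the claim that any ordinal for $T$ is equal to or greater than $\alpha$.

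There is essentially no real obstacle here, since every ingredient is already in place; the only point requiring a moment's care is the passage from the set-theoretic inclusion $\alpha\subseteq\beta$ to the ordinal inequality $\alpha\leq\beta$, which rests on trichotomy of ordinals together with the fact that no ordinal is a member of itself. The degenerate case $\alpha=0$ needs no separate treatment: then $\alpha=\ensvide\subseteq\beta$ for every ordinal $\beta$, matching the fact that $\mathrm{T}_0=\ensvide$ has empty trace and so constrains nothing.
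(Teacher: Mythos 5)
Your proof is correct and follows essentially the same route as the paper's: the paper's own argument is precisely the chain $\alpha = \trace{\mathrm{T}_{\alpha}}\subseteq \trace{T}\subseteq \beta$, hence $\alpha\leq\beta$. You merely spell out the justifications (length preservation by morphisms, $\trace{\mathrm{T}_\alpha}=\alpha$, and the passage from $\subseteq$ to $\leq$ for ordinals) that the paper leaves implicit.
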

\begin{proof}
  Let $\beta$ be an ordinal for $T$, then by the existence of $f$ we
  have $\alpha = \trace{\mathrm{T}_{\alpha}}\subseteq
  \trace{T}\subseteq \beta$, hence $\alpha\leq\beta$.
\end{proof}

\begin{lemma}\label{lm-notermo}
  $\MonGr$, $\SMonGr$ and $\FMonGr$ have no terminal object.
\end{lemma}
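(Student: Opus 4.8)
The plan is to argue by contradiction in each of the three cases, using Lemma~\ref{lm-ordmono} together with the family of standard monographs $\mathrm{T}_\alpha$. The underlying intuition is that a terminal object $T$ would have to receive a morphism from $\mathrm{T}_\alpha$ for arbitrarily large $\alpha$, which by Lemma~\ref{lm-ordmono} forces every ordinal for $T$ to exceed any prescribed bound; but once we fix a single ordinal $\beta$ for $T$, taking $\alpha = \beta+1$ already contradicts this, since $\mathrm{T}_{\beta+1}\to T$ would make $\beta\geq\beta+1$.

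Concretely, suppose first that $T$ is terminal in $\MonGr$ and let $\beta$ be an ordinal for $T$. Since $\mathrm{T}_{\beta+1}$ is a standard $(\beta+1)$-monograph, it is in particular a monograph, so terminality yields a morphism $\mathrm{T}_{\beta+1}\to T$. By Lemma~\ref{lm-ordmono} every ordinal for $T$ is $\geq\beta+1$; in particular the fixed ordinal $\beta$ satisfies $\beta\geq\beta+1$, which is absurd. The same argument applies verbatim in $\SMonGr$: the probe $\mathrm{T}_{\beta+1}$ is standard, hence an object of $\SMonGr$, and Lemma~\ref{lm-ordmono} holds for an arbitrary codomain monograph $T$, so the contradiction is reached identically.

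The only point requiring care is $\FMonGr$, where the probe used above must itself be a finite $\omega$-monograph. Here I would first observe that a terminal $T$ of $\FMonGr$ is a finite $\omega$-monograph, so $\trace{T}$ is a finite set of finite ordinals and therefore admits a \emph{finite} ordinal $\beta$ (for instance the successor of the maximal edge length). Then $\beta+1$ is finite, so $\mathrm{T}_{\beta+1}$ has finitely many edges (namely the elements of $\beta+1$) and trace $\beta+1\subseteq\omega$, whence $\mathrm{T}_{\beta+1}\in\FMonGr$. Terminality again provides a morphism $\mathrm{T}_{\beta+1}\to T$, and Lemma~\ref{lm-ordmono} forces $\beta\geq\beta+1$, the same contradiction.

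The main, and really the only, obstacle is this finiteness bookkeeping in the $\FMonGr$ case: one must keep the probe $\mathrm{T}_{\beta+1}$ inside the category while still making it large enough to violate the chosen ordinal $\beta$ for $T$. Choosing $\beta$ finite resolves both constraints simultaneously, since the successor of a finite ordinal is finite and its trace is then automatically contained in $\omega$. For $\MonGr$ and $\SMonGr$ no such restriction is needed, as those categories impose no bound on length or number of edges.
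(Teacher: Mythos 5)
Your proof is correct and takes essentially the same route as the paper's: probe a putative terminal object $T$ with $\mathrm{T}_{\beta+1}$, where $\beta$ is an ordinal for $T$, and invoke Lemma~\ref{lm-ordmono} to get the contradiction $\beta+1\leq\beta$, handling $\SMonGr$ by noting $\mathrm{T}_{\beta+1}$ is standard and $\FMonGr$ by choosing $\beta$ finite. Your extra bookkeeping for the finite case (why a finite ordinal for $T$ exists and why $\mathrm{T}_{\beta+1}$ then lies in $\FMonGr$) just makes explicit what the paper states in one line.
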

\begin{proof}
  Suppose that $T$ is a terminal monograph, then there is an ordinal
  $\beta$ for $T$ and there is a morphism from $\mathrm{T}_{\beta +1}$
  to $T$; by Lemma \ref{lm-ordmono} this implies that
  $\beta+1\leq\beta$, a contradiction. This still holds if $T$ is
  standard since $\mathrm{T}_{\beta +1}$ is standard. And it also
  holds if $T$ is a finite $\omega$-monograph, since then $\beta$ can
  be chosen finite, and then $\mathrm{T}_{\beta +1}$ is also a finite
  $\omega$-monograph.
\end{proof}

Since terminal objects are limits of empty diagrams obviously these
categories are not finitely complete.

\begin{theorem}
  $\OSMonGr{O}$ is finitely complete for every set of ordinals
  $O$. The categories   $\MonGr$, $\SMonGr$ and $\FMonGr$ are not
  finitely complete.
\end{theorem}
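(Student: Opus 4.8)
The plan is to reduce finite completeness to the standard criterion that a category is finitely complete as soon as it possesses a terminal object together with all binary products and all equalizers (equivalently, a terminal object and all pullbacks); see \cite{Herrlich-Strecker79}. Just as in the colimit discussion following Lemma~\ref{lm-coeq}, the empty diagram must be treated separately from the non-empty ones, and it is precisely here that the two halves of the statement diverge: the non-empty finite limits are available uniformly in every category of Definition~\ref{def-catmono}, whereas the limit of the empty diagram, namely a terminal object, exists only in the standard $O$-monograph case.

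For the positive part, I would first observe that the product construction of Lemma~\ref{lm-prod} and the equalizer construction of Lemma~\ref{lm-equalizer}, when applied to standard $O$-monographs, again yield standard $O$-monographs. Indeed the product carrier $\edges{P}$ consists of pairs $\tuple{x,y}$ with $\card{x}=\card{y}$, so $\trace{P}\subseteq O$, and it is standard because the components of $P\tuple{x,y}$ are pairs of nodes, hence nodes of $P$; likewise the equalizer inherits its trace and its standardness directly from $A$ since its edge-function is a restriction of $A$. This is exactly the remark already made in the text for all categories of Definition~\ref{def-catmono}, so $\OSMonGr{O}$ has binary products and equalizers. Adjoining the terminal object $\mathrm{T}_O$ supplied by Lemma~\ref{lm-terminalOS} and invoking the criterion then gives finite completeness of $\OSMonGr{O}$.

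For the negative part, I would use that a terminal object is precisely the limit of the empty diagram, so any finitely complete category must have one. By Lemma~\ref{lm-notermo}, none of $\MonGr$, $\SMonGr$, $\FMonGr$ admits a terminal object, whence none of them can be finitely complete.

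There is no serious obstacle here: the substantive work has already been carried out in Lemmas~\ref{lm-prod}, \ref{lm-equalizer}, \ref{lm-terminalOS} and \ref{lm-notermo}, and the theorem is merely their assembly together with a standard completeness criterion. The only point requiring a little care is to confirm that the product and equalizer carriers constructed in $\MonGr$ stay inside the subcategory $\OSMonGr{O}$, that is, that both the $O$-trace condition and standardness are preserved; this is immediate from the explicit forms of $\edges{P}$ and $\edges{E}$ noted above.
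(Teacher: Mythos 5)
Your proposal is correct and takes essentially the same approach as the paper: the paper's proof consists precisely of citing Lemmas~\ref{lm-prod}, \ref{lm-equalizer}, \ref{lm-terminalOS} and~\ref{lm-notermo}, with the surrounding text supplying the same glue you describe (non-empty finite limits constructed in $\MonGr$ remain in the full subcategories of Definition~\ref{def-catmono}, and the negative half follows because a finitely complete category must have a terminal object). Your explicit check that the carriers $\edges{P}$ and $\edges{E}$ preserve standardness and the $O$-trace condition merely spells out what the paper states as an immediate remark after Lemma~\ref{lm-equalizer}.
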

\begin{proof}
  By Lemmas~\ref{lm-prod}, \ref{lm-equalizer}, \ref{lm-terminalOS}
  and~\ref{lm-notermo}. 
\end{proof}

The category $\Graphs$ is also known to be adhesive, a property of
pushouts and pullbacks that has important consequences on algebraic
transformations (see \cite{LackS05}) and that we shall therefore investigate.

\begin{definition}[van Kampen squares, adhesive categories]\label{def-adhesive}
  A pushout square $\tuple{A,B,C,D}$ is a \emph{van Kampen square} if
  for any commutative cube
  \begin{center}
    \begin{tikzpicture}[scale=0.5]
  \cubenodes{C}{D}{A}{B}{C'}{D'}{A'}{B'};
  \path[->] (LL) edge  (LF);
  \path[<-] (LF) edge (LR);
  \path[<-] (LR) edge (LB);
  \path[->] (LB) edge (LL);
  \path[<-] (LL) edge (UL); 
  \path[-] (LF) edge [draw=white, line width=3pt] (UF); 
  \path[<-] (LF) edge (UF); 
  \path[<-] (LR) edge (UR); 
  \path[<-] (LB) edge (UB); 
  \path[-] (UL) edge [draw=white, line width=3pt] (UF);
  \path[->] (UL) edge (UF);
  \path[<-] (UF) edge (UR);
  \path[<-] (UR) edge (UB);
  \path[->] (UB) edge (UL);
  \end{tikzpicture}
  \end{center}
  where the back faces $\tuple{A',A,B',B}$ and $\tuple{A',A,C',C}$ are
  pullbacks, it is the case that the top face $\tuple{A',B',C',D'}$ is
  a pushout iff the front faces $\tuple{B',B,D',D}$ and
  $\tuple{C',C,D',D}$ are both pullbacks.

  A category \emph{has pushouts along monomorphisms} if all sources
  $\tuple{A,f,g}$ have pushouts whenever $f$ or $g$ is a monomorphism.

  A category is \emph{adhesive} if it has pullbacks, pushouts along
  monomorphisms and all such pushouts are van Kampen squares.
\end{definition}

As in the proof that $\Graphs$ is adhesive, we will use the fact that
the category $\Sets$ is adhesive.

\begin{lemma}\label{lm-Erefliso}
  $\edges{}$ reflects isomorphisms.
\end{lemma}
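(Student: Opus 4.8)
The plan is to show that if $f:A\rightarrow B$ is a morphism of monographs whose underlying function $\edges{f}$ is a bijection, then $f$ is an isomorphism in $\MonGr$. Since the candidate inverse is forced, the whole argument reduces to a single verification: that the set-theoretic inverse $\invf{f}:\edges{B}\rightarrow\edges{A}$ of the bijection $\edges{f}$ is itself a morphism $B\rightarrow A$. Once this is established, the equalities $\invf{f}\circ f = \id{A}$ and $f\circ\invf{f} = \id{B}$ hold in $\MonGr$ because their underlying functions are the set identities, so $f$ is an isomorphism with inverse $\invf{f}$.

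First I would fix an ordinal $\alpha$ that serves both $A$ and $B$. This is legitimate because $\edges{f}$ is in particular surjective, so by the remarks following the definition of morphisms we have $\trace{A} = \trace{B}$; hence every ordinal for $A$ is an ordinal for $B$ and conversely, and we may take a common one. Next I would observe that $\seqto{f}{\alpha}$ is itself a bijection, with inverse $\seqto{(\invf{f})}{\alpha}$. This is immediate from the functorial identities recorded in Section~\ref{sec-defs}, namely $\seqto{(g\circ f)}{\alpha} = \seqto{g}{\alpha}\circ\seqto{f}{\alpha}$ and $\seqto{\Id{E}}{\alpha} = \Id{\seqto{E}{\alpha}}$, applied to $\invf{f}\circ f = \Id{\edges{A}}$ and $f\circ\invf{f} = \Id{\edges{B}}$.

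The core computation then starts from the morphism equation for $f$, namely $\seqto{f}{\alpha}\circ A = B\circ f$. Composing on the left with $\seqto{(\invf{f})}{\alpha}$ and on the right with $\invf{f}$, and using $\seqto{(\invf{f})}{\alpha}\circ\seqto{f}{\alpha} = \Id{\seqto{(\edges{A})}{\alpha}}$ together with $f\circ\invf{f} = \Id{\edges{B}}$, yields $A\circ\invf{f} = \seqto{(\invf{f})}{\alpha}\circ B$. This is exactly the equation certifying that $\invf{f}:B\rightarrow A$ is a morphism with respect to the ordinal $\alpha$ for $B$, and concluding is then immediate as described above.

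I expect no serious obstacle here: the proof is essentially a one-line cancellation, and the only points requiring a word of justification are the choice of a common ordinal (handled by the trace equality under surjectivity) and the fact that the pointwise-lifting functor $\seqto{(-)}{\alpha}$ carries the inverse of $f$ to the inverse of $\seqto{f}{\alpha}$. If there is any subtlety at all, it lies in keeping track of the fact that these composites are equalities of functional relations rather than of functions; but the equation $\seqto{f}{\alpha}\circ A = B\circ f$ is already stated in that form and behaves as expected under the compositions used.
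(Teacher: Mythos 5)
Your proof is correct, and it follows the same overall decomposition as the paper: the only possible inverse is the set-theoretic inverse $\invf{f}$, and the whole lemma reduces to checking that $\invf{f}:B\rightarrow A$ is a morphism, after which the identities $\invf{f}\circ f=\id{A}$ and $f\circ\invf{f}=\id{B}$ are automatic. The difference lies in how that check is performed. The paper verifies the pointwise characterization of morphisms: for $y\in\edges{B}$ and $x=\invf{f}(y)$ it computes $\invf{f}(y_{\iota}) = \invf{f}(f(x)_{\iota}) = \invf{f}(f(x_{\iota})) = x_{\iota} = \invf{f}(y)_{\iota}$, using the property $f(x_{\iota})=f(x)_{\iota}$ of $f$; that one line is the entire proof, and it needs no discussion of ordinals. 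You instead work with the defining relational equation $\seqto{f}{\alpha}\circ A = B\circ f$ and cancel on both sides, after observing via the functorial identities for $\seqto{(-)}{\alpha}$ that $\seqto{(\invf{f})}{\alpha}$ is inverse to $\seqto{f}{\alpha}$; this forces you to fix a common ordinal for $A$ and $B$ (which your surjectivity/trace argument, or simply the paper's remark that any set of monographs has a common ordinal, justifies) and to rely on associativity of composition of functional relations, both of which you handle correctly. Your route is more algebraic and arguably more robust: it uses only the functoriality of the lifting $\seqto{(-)}{\alpha}$ and never unfolds edges elementwise, so it would survive any reformulation in which the pointwise criterion were unavailable. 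The paper's route is shorter and more concrete, trading that generality for a three-line computation. Both are sound.
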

\begin{proof}
  Let $f:A\rightarrow B$ such that ${f}$ is bijective, then it
  has an inverse $\invf{f}:\edges{B}\rightarrow\edges{A}$. For all
  $y\in\edges{B}$ and all $\iota<\card{y}$, let $x=\invf{f}(y)$, we
  have
  \[\invf{f}(y_{\iota}) = \invf{f}(f(x)_{\iota}) = \invf{f}(f(x_{\iota}))
    = x_{\iota} = \invf{f}(y)_{\iota}\] hence $\invf{f}:B\rightarrow A$ is a
  morphism, and $f$ is therefore an isomorphism.
\end{proof}

A side consequence is that $\MonGr$ is balanced, i.e., if $f$ is both
a monomorphism and an epimorphism, then by
Corollaries~\ref{cr-episurj} and \ref{cr-monoinj} ${f}$ is bijective,
hence is an isomorphism. More important is that we can use
\cite[Theorem 24.7]{Herrlich-Strecker79}, i.e., that a faithful and
isomorphism reflecting functor from a category that has some limits or
colimits and preserves them, also reflects them.

\begin{lemma}\label{lm-Epresreflcolimits}
  $\edges{}$ preserves and reflects finite colimits.
\end{lemma}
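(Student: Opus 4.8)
The plan is to prove that the forgetful functor $\edges{}$ both preserves and reflects finite colimits. The strategy rests on the explicit colimit constructions already carried out in Lemmas~\ref{lm-coprod} and~\ref{lm-coeq}, together with the general categorical machinery flagged in the text just before the statement, namely \cite[Theorem 24.7]{Herrlich-Strecker79}.

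First I would establish \emph{preservation}. Since finite colimits are built from finite coproducts, coequalizers and the initial object, it suffices to check these three cases. In each case the proofs of Lemmas~\ref{lm-coprod} and~\ref{lm-coeq} construct the colimit monograph so that its set of edges is \emph{exactly} the corresponding colimit in $\Sets$: for the coproduct $A+B$ we have $\edges{(A+B)} = \edges{A}+\edges{B}$ with the set-level injections $\mu_1,\mu_2$, and for the coequalizer $Q$ we have $\edges{Q} = \edges{B}/{\sim}$ with the set-level canonical surjection $c$. The initial monograph $\ensvide$ has empty edge set, which is the initial object in $\Sets$. Thus $\edges{}$ sends each of these colimit cocones to a colimit cocone in $\Sets$, and since $\Sets$ is finitely cocomplete and these constructions realise arbitrary finite colimits, $\edges{}$ preserves all finite colimits.

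For \emph{reflection} I would invoke \cite[Theorem 24.7]{Herrlich-Strecker79} as the text suggests: a faithful, isomorphism-reflecting functor out of a category possessing certain colimits and preserving them also reflects them. The functor $\edges{}$ is faithful because a morphism of monographs \emph{is} its underlying function on edges (Definition~\ref{def-catmono}), so two morphisms with the same image under $\edges{}$ are equal. It reflects isomorphisms by Lemma~\ref{lm-Erefliso}. The source category $\MonGr$ has all finite colimits by Theorem~\ref{th-po}, and preservation was just established. Hence all three hypotheses of the cited theorem are met, and $\edges{}$ reflects finite colimits.

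The main obstacle, and the only point requiring genuine care, is the preservation step: one must verify that the cocones produced in Lemmas~\ref{lm-coprod} and~\ref{lm-coeq} really are colimit cocones in $\Sets$ and not merely cocones whose apex happens to have the right underlying set. This is immediate from the constructions, since $\mu_1,\mu_2$ and $c$ are literally the coproduct injections and the coequalizer quotient map in $\Sets$, but it is the step where a mismatch could hide — contrast this with the product case of Lemma~\ref{lm-prod}, where $\edges{(A\times B)}$ is only a \emph{subset} of $\edges{A}\times\edges{B}$, so $\edges{}$ does \emph{not} preserve products. The asymmetry between colimits (built on the nose in $\Sets$) and limits (built as subobjects) is exactly why colimits behave well here, and I would make sure to point to the $\Sets$-level universal property explicitly rather than leaving it implicit.
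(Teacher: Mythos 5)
Your proof is correct and takes essentially the same approach as the paper: preservation is read off from the explicit constructions in Lemmas~\ref{lm-coprod} and~\ref{lm-coeq} (where the edge sets and structure maps of the colimit objects are literally the coproduct and coequalizer in $\Sets$), and reflection then follows from \cite[Theorem 24.7]{Herrlich-Strecker79} together with faithfulness of $\edges{}$ and Lemma~\ref{lm-Erefliso}. The paper's proof is just a terser statement of exactly this argument, leaving the details you spell out (including the contrast with products) implicit.
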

\begin{proof}
  It is easy to see from the proofs of Lemmas~\ref{lm-coprod} and
  \ref{lm-coeq} that $\edges{}$ preserves both coproducts and
  coequalizers, so that $\edges{}$ preserves all finite co-limits and
  hence also reflects them.
\end{proof}

This is particularly true for pushouts.  The situation for pullbacks
is more complicated since $\edges{}$ does not preserve products.

\begin{lemma}\label{lm-Epresreflpb}
  $\edges{}$ preserves and reflects pullbacks.
\end{lemma}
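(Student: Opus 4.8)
The plan is to exploit the explicit pullback construction obtained by combining the product of Lemma~\ref{lm-prod} with the equalizer of Lemma~\ref{lm-equalizer}, and to observe that---in contrast to products---the edge set of a pullback is the \emph{full} set-theoretic pullback of the underlying functions. Reflection then comes for free from the same Herrlich--Strecker argument already used for colimits. Concretely, I would fix a cospan $A\xrightarrow{f}C\xleftarrow{g}B$ and recall that its pullback is the equalizer $\tuple{P,e}$ of $f\circ\projl',g\circ\projr':A\times B\rightarrow C$. By the proof of Lemma~\ref{lm-equalizer}, $\edges{P}$ consists of those $\tuple{x,y}\in\edges{A\times B}$ with $f(x)=g(y)$.

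The key step is to show that the set $S\defeq\setof{\tuple{x,y}\in\edges{A}\times\edges{B}}{f(x)=g(y)}$---the pullback of $\edges{f}$ and $\edges{g}$ in $\Sets$---is already contained in $\edges{A\times B}$, so that $\edges{P}=S$. Indeed, if $f(x)=g(y)$ then $\card{x}=\card{f(x)}=\card{g(y)}=\card{y}$, and for every $\iota<\card{x}$ the morphism equations give $f(x_{\iota})=f(x)_{\iota}=g(y)_{\iota}=g(y_{\iota})$, so $\tuple{x_{\iota},y_{\iota}}\in S$. Thus $S$ is a set of length-matching pairs closed under taking components, and by the maximality of the edge set of $A\times B$ established in the proof of Lemma~\ref{lm-prod} we get $S\subseteq\edges{A\times B}$; hence the equalizer condition cuts out exactly $S$. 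Since $\projl'\circ e$ and $\projr'\circ e$ are restrictions of the set-theoretic projections, $\edges{}$ sends this particular pullback square to the pullback square of $\edges{f},\edges{g}$ in $\Sets$. As every pullback of the cospan is isomorphic to $P$ and $\edges{}$ respects isomorphisms, $\edges{}$ preserves all pullbacks.

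I expect this component-propagation argument to be the crux: it is precisely the point where pullbacks behave better than products. For an arbitrary product the maximal closed set $\edges{A\times B}$ may be strictly smaller than the full length-matching set, but the equalizer condition $f(x)=g(y)$ is itself inherited by components, which is exactly what forces $S$ into $\edges{A\times B}$. Once preservation is in hand, reflection is routine: $\MonGr$ has all pullbacks, and $\edges{}$ is faithful and reflects isomorphisms by Lemma~\ref{lm-Erefliso}, so \cite[Theorem 24.7]{Herrlich-Strecker79} yields that $\edges{}$ reflects pullbacks as well.
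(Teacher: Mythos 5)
Your proposal is correct and takes essentially the same route as the paper's own proof: both realize the pullback canonically as the equalizer of $f\circ\projl',\,g\circ\projr'$ on the product of Lemma~\ref{lm-prod}, both hinge on the observation that the set-theoretic pullback $\setof{\tuple{x,y}}{f(x)=g(y)}$ is length-matching and closed under components and hence contained in $\edges{(A\times B)}$ by maximality, and both obtain reflection from faithfulness and isomorphism-reflection via \cite[Theorem 24.7]{Herrlich-Strecker79}. The only cosmetic difference is that you make explicit the reduction of arbitrary pullbacks to the canonical one, which the paper handles with a ``w.l.o.g.''
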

\begin{proof}
  We first prove that $\edges{}$ preserves pullbacks.
  Let $f:A\rightarrow C$, $g:B\rightarrow C$ and $\alpha$ be an ordinal
  for $A$ and $B$, we assume w.l.o.g. a canonical pullback
  $\tuple{E,h,k}$ of $\tuple{f,g,C}$, i.e., let
  $\tuple{A\times B,\projl',\projr'}$ be the product of $\tuple{A,B}$
  and $\tuple{E,e}$ be the equalizer of
  $\tuple{f\circ \projl', g\circ \projr'}$ with $h=\projl'\circ e$
  and $k=\projr'\circ e$. Let
  $\tuple{\edges{A}\times\edges{B},\projl,\projr}$ be the product of
  $\tuple{\edges{A},\edges{B}}$ in $\Sets$, we have by the proof of
  Lemma~\ref{lm-prod} that
  $\edges{(A\times B)}\subseteq\edges{A}\times \edges{B}$,
  $\projl'=\restr{\projl}{\edges{(A\times B)}}{}$ and
  $\projr'=\restr{\projr}{\edges{(A\times B)}}{}$.

  Let $H\defeq \setof{\tuple{x,y}\in\edges{A}\times \edges{B}}{f(x)=g(y)}$ and
  $j:H\hookrightarrow \edges{A}\times\edges{B}$ be the canonical
  injection. By canonical construction
  $\tuple{H, \projl\circ j,\projr\circ j}$ is a pullback of
  $\tuple{f,g,\edges{C}}$ in $\Sets$; we next prove that it is the
  image by $\edges{}$ of the pullback $\tuple{E,h,k}$ of
  $\tuple{f,g,C}$ in $\MonGr$.
  
  \begin{center}
    \begin{tikzpicture}[scale=1.2]
      \node (A) at (2,2) {$A$}; \node (B) at (0,0) {$B$};
      \node (C) at (2,0) {$C$}; \node (Ep) at (0,2) {$E$};
      \node (P) at (1,1) {$A\times B$};
      \path[->] (A) edge node[right, font=\footnotesize] {$f$} (C);
      \path[->] (B) edge node[below, font=\footnotesize] {$g$} (C); 
      \path[->] (Ep) edge node[above, font=\footnotesize] {$h$} (A); 
      \path[->] (Ep) edge node[left, font=\footnotesize] {$k$} (B);
      \path[->] (Ep) edge node[above right, font=\footnotesize] {$\!\!e$} (P);
      \path[->] (P) edge node[below, near end, font=\footnotesize] {$\projl'$} (A);
      \path[->] (P) edge node[right,font=\footnotesize] {$\projr'$} (B); 
      \node (EA) at (6,2) {$\edges{A}$}; \node (EB) at (4,0) {$\edges{B}$};
      \node (EC) at (6,0) {$\edges{C}$}; \node (E) at (4,2) {$H$};
      \node (EP) at (5,1) {$\edges{A}\times \edges{B}$};
      \path[->] (EA) edge node[right, font=\footnotesize] {$f$} (EC);
      \path[->] (EB) edge node[below, font=\footnotesize] {$g$} (EC); 
      \path[->] (E) edge node[above, font=\footnotesize] {$\projl\circ j$} (EA); 
      \path[->] (E) edge node[left, font=\footnotesize] {$\projr\circ j$} (EB);
      \path[{Hooks[right]}->] (E) edge node[above, near end,font=\footnotesize] {$j$} (EP);
      \path[->] (EP) edge node[below,near end, font=\footnotesize] {$\ \projl$} (EA);
      \path[->] (EP) edge node[right, font=\footnotesize] {$\projr$} (EB); 
    \end{tikzpicture}
  \end{center}

  By the construction of $E$ in Lemma~\ref{lm-equalizer} we have
  $\edges{E}=\setof{\tuple{x,y}\in\edges{(A\times
      B)}}{f(x)=g(y)}\subseteq H$ and
  $e:\edges{E}\hookrightarrow \edges{(A\times B)}$ is the canonical
  injection.  For all $\tuple{x,y}\in H$ we have
  $\card{x}=\card{f(x)} =\card{g(y)}= \card{y}$, and 
  for all $\iota<\card{x}$ we have
  $f(x_{\iota}) = f(x)_{\iota} =g(y)_{\iota} =g(y_{\iota})$ so that
  $\tuple{x_{\iota},y_{\iota}}\in H$ and therefore
  $H\subseteq \edges{(A\times B)}$ by the construction of $A\times B$
  in Lemma~\ref{lm-prod}. We thus have $H=\edges{E}$ hence
  $\projl\circ j = \projl'\circ e=h$ and
  $\projr\circ j = \projr'\circ e=k$, so that $\edges{}$ preserves
  pullbacks and hence as above $\edges{}$ also reflects them.
\end{proof}

{\sloppy
\begin{theorem}\label{th-adhesive}
  The categories of Definition~\ref{def-catmono} are adhesive.
\end{theorem}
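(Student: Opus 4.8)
The plan is to deduce adhesiveness of each category from the adhesiveness of $\Sets$ by transporting the van Kampen condition along the forgetful functor $\edges{}$, exactly as one does for $\Graphs$. All the ingredients are already available: every category of Definition~\ref{def-catmono} has pullbacks and is finitely co-complete, hence has pushouts along monomorphisms; $\edges{}$ preserves and reflects pullbacks (Lemma~\ref{lm-Epresreflpb}) and finite colimits, in particular pushouts (Lemma~\ref{lm-Epresreflcolimits}); and by Corollary~\ref{cr-monoinj} the monomorphisms of $\MonGr$ are exactly the injective morphisms, so $\edges{}$ sends monomorphisms to monomorphisms of $\Sets$. It therefore only remains to run the cube argument and check that it descends to the subcategories.

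First I would treat $\MonGr$ by starting from a pushout square $\tuple{A,B,C,D}$ in which one of the two morphisms out of the apex is a monomorphism, and considering any commutative cube over it, as in Definition~\ref{def-adhesive}, whose back faces $\tuple{A',A,B',B}$ and $\tuple{A',A,C',C}$ are pullbacks. Applying $\edges{}$ yields a commutative cube in $\Sets$ whose bottom face is again a pushout along a monomorphism (preservation of pushouts and of monomorphisms) and whose back faces are again pullbacks (preservation of pullbacks). Since $\Sets$ is adhesive, this bottom pushout is a van Kampen square there, so the image top face is a pushout in $\Sets$ if and only if the image front faces are pullbacks in $\Sets$. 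I then transfer each implication back: if the top face is a pushout in $\MonGr$ then its image is a pushout in $\Sets$, hence the image front faces are pullbacks, hence by reflection of pullbacks (Lemma~\ref{lm-Epresreflpb}) the front faces are pullbacks in $\MonGr$; conversely, if the front faces are pullbacks in $\MonGr$ then their images are pullbacks, hence the image top face is a pushout in $\Sets$, hence by reflection of pushouts (Lemma~\ref{lm-Epresreflcolimits}) the top face is a pushout in $\MonGr$. This is precisely the van Kampen property, so $\MonGr$ is adhesive.

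Next I would descend to the subcategories $\SMonGr$, $\FMonGr$, $\OMonGr{O}$ and $\OSMonGr{O}$. Each is a full subcategory closed under the product, equalizer, coproduct and coequalizer constructions of Lemmas~\ref{lm-coprod}, \ref{lm-coeq}, \ref{lm-prod} and~\ref{lm-equalizer} (these preserve finiteness, standardness and the trace bound $O$, as already recorded), so pullbacks and pushouts along monomorphisms computed there coincide with those in $\MonGr$, and the restriction of $\edges{}$ still preserves and reflects both. Moreover, since each of these categories has binary products and equalizers, the argument of Corollary~\ref{cr-monoinj} applies verbatim and its monomorphisms are again the injective morphisms, so $\edges{}$ still preserves monomorphisms. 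Hence the cube argument above goes through unchanged in each case.

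The delicate points, rather than any calculation, are the two places where the functor must behave correctly. The first is that $\edges{}$ must \emph{preserve} monomorphisms — not automatic for a merely faithful functor — so that the bottom face of the image cube is genuinely a pushout along a monomorphism, to which the adhesivity of $\Sets$ can be applied; this relies essentially on the concrete characterisation of Corollary~\ref{cr-monoinj}. The second is that the reflection of pushouts used in the last step must be legitimate, which holds because $\edges{}$ is faithful, isomorphism-reflecting and pushout-preserving on a finitely co-complete category. Everything else is a routine chase through the cube.
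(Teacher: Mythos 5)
Your proposal is correct and follows essentially the same route as the paper's proof: transporting the van Kampen condition along the forgetful functor $\edges{}$ using Corollary~\ref{cr-monoinj} for preservation of monomorphisms and Lemmas~\ref{lm-Epresreflcolimits} and~\ref{lm-Epresreflpb} for preservation and reflection of pushouts and pullbacks, then invoking adhesivity of $\Sets$. The paper states this more tersely, treating all categories of Definition~\ref{def-catmono} at once, whereas you spell out the bidirectional transfer of the van Kampen condition and the closure of the subcategories under the relevant constructions, which the paper leaves implicit.
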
}
\begin{proof}
  The existence of pullbacks and pushouts is already established.  In
  any of these categories a commutative cube built on a pushout along
  a monomorphism as bottom face and with pullbacks as back faces, has
  an underlying cube in $\Sets$ that has the same properties by
  Corollary~\ref{cr-monoinj}, Lemmas~\ref{lm-Epresreflcolimits} and
  \ref{lm-Epresreflpb}. Since $\Sets$ is an adhesive category (see
  \cite{LackS05}) the underlying bottom face is a van Kampen square,
  hence such is the bottom face of the initial cube by
  Lemmas~\ref{lm-Epresreflcolimits} and \ref{lm-Epresreflpb}.
\end{proof}

\section{Drawing Monographs}\label{sec-draw}

Obviously we may endeavour to draw a monograph $A$ only if $\edges{A}$
is finite and if its edges have finite lengths, i.e., if $A$ is a
finite $\omega$-monograph. If we require that any monograph
$\MFunc{G}$ should be drawn as the graph $G$, then a node should be
represented by a bullet
\begin{tikzpicture}[point/.style={circle,inner sep=0pt,minimum size=3pt,fill=black}]
  \node (E) [point] at (0,0) {};
\end{tikzpicture}
and an edge of length 2 by an arrow
\raisebox{0.5ex}{\begin{tikzpicture}[> /.tip={Stealth[width=5pt, length=4pt]}]
  \draw[->,bend left] (0,0) to (1,0);
\end{tikzpicture}} joining its two adjacent nodes. But generally the
adjacent edges may not be nodes and there might be more than 2 of
them, hence we adopt the following convention: an edge $e$ of length
at least 2 is represented as a sequence of connected arrows with an
increasing number of tips 
\begin{center}
  \begin{tikzpicture}[> /.tip={Stealth[width=5pt, length=4pt]}]
  \draw[->] (0,0) to (1,0);
  \draw[->>] (1,0) to (2,0);
  \draw[->>>] (2,0) to (3,0);
  \draw[dotted] (3,0) to (3.5,0);
  \draw (0,0.3) to (0,-0.3); \node at (0,-0.5){$x_0$}; 
  \draw (1,0.3) to (1,-0.3); \node at (1,-0.5){$x_1$}; 
  \draw (2,0.3) to (2,-0.3); \node at (2,-0.5){$x_2$}; 
  \draw (3,0.3) to (3,-0.3); \node at (3,-0.5){$x_3$}; 
\end{tikzpicture}
\end{center}
(where $A(e)=x_0x_1x_2x_3\dotsb$) and such that any arrow should enter
$x_i$ at the same angle as the next arrow leaves $x_i$.  For the sake
of clarity we represent symmetric adjacencies by a pair of crossings
rather than a single one, e.g., if $A(e)=xe'y$ and $A(e')=xey$, where
$x$ and $y$ are nodes, the drawing may be
\begin{center}
  \begin{tikzpicture}[point/.style={circle,inner sep=0pt,minimum
      size=3pt,fill=black},> /.tip={Stealth[width=5pt, length=4pt]}]
    \node (D) [point] at (0,0) {};
    \node (A) [point] at (3,0) {};
    \path[out=45,in=135] (D) edge (1,0);
    \path[->,out=-45,in=-135] (D) edge (1,0);
    \path[out=45,in=135] (1,0) edge (2,0); 
    \path[->, out=-45,in=-135] (1,0) edge (2,0);
    \path[->>,out=45,in=135] (2,0) edge (A); 
    \path[->>, out=-45,in=-135] (2,0) edge (A);
  \end{tikzpicture}\ \ \raisebox{1.5ex}{but not}\ \ 
  \begin{tikzpicture}[point/.style={circle,inner sep=0pt,minimum
      size=3pt,fill=black},> /.tip={Stealth[width=5pt, length=4pt]}] 
    \node (D) [point] at (0,0) {};
    \node (A) [point] at (2,0) {};
    \path[->,out=45,in=135] (D) edge (1,0);
    \path[->,out=-45,in=-135] (D) edge (1,0);
    \path[->>,out=45,in=135] (1,0) edge (A); 
    \path[->>, out=-45,in=-135] (1,0) edge (A);
  \end{tikzpicture}
\end{center}

It is sometimes necessary to name the edges in a drawing. We may then
adopt the convention sometimes used for drawing diagrams in a
category: the bullets are replaced by the names of the corresponding
nodes, and arrows are interrupted to write their name at a place free
from crossing, as in
\begin{center}
  \begin{tikzpicture}[scale=1.5,> /.tip={Stealth[width=5pt, length=4pt]}]
    \node (D) at (0,0) {\footnotesize $x$};
    \node (A) at (3,0) {\footnotesize $y$};
    \path[out=32,in=135] (D) edge node[fill=white, font=\footnotesize] {$e$} (1,0);
    \path[->,out=-32,in=-135] (D) edge (1,0);
    \path[out=45,in=135] (1,0) edge node[fill=white, font=\footnotesize] {$e'$} (2,0); 
    \path[->, out=-45,in=-135] (1,0) edge (2,0);
    \path[->>,out=45,in=148] (2,0) edge (A); 
    \path[->>, out=-45,in=-148] (2,0) edge (A);
\end{tikzpicture}
\end{center}
Note that no confusion is possible between the names of nodes and
those of other edges, e.g., in
\begin{center}
  \begin{tikzpicture}[scale=0.6,> /.tip={Stealth[width=5pt, length=4pt]}]
    \node (X) at (180:1) {\footnotesize $x$};
    \node (Y) at (60:1) {\footnotesize $y$};
    \node (Z) at (-60:1) {\footnotesize $z$};
    \draw[out=90,in=150,distance=0.4cm] (X) to (Y); 
    \draw[->,out=-30,in=30,distance=0.4cm] (Y) to (Z); 
    \draw[->>,out=-150,in=-80,distance=0.5cm] (Z) to (X); 
  \end{tikzpicture}
\end{center}
it is clear that $x$ and $z$ are nodes since arrow tips point to them,
and that $y$ is the name of an edge of length 3.

As is the case of graphs, monographs may not be planar and drawing
them may require crossing edges that are not adjacent; in this case no
arrow tip is present at the crossing and no confusion is possible
with the adjacency crossings. However, it may seem preferable in such
cases to erase one arrow in the proximity of the other, as in 
\raisebox{-1ex}{\begin{tikzpicture}
    \path (-0.2,-0.2) edge (0.2,0.2);
    \path (0.2,-0.2) edge [draw=white, line width=3pt] (-0.2,0.2); 
    \path (0.2,-0.2) edge (-0.2,0.2); 
\end{tikzpicture}}.

There remains to represent the edges of length 1. Since $A(e)=x$ is
standardly written $A:e\mapsto x$, the edge $e$ will be drawn as
\begin{center}
  \begin{tikzpicture}[> /.tip={Stealth[width=5pt, length=4pt]}]
    \draw[thick] (0,0.2) -- (0,-0.2);
    \draw[->] (0,0) to (1,0);
    \draw (1,0.3) to (1,-0.3); \node at (1,-0.5){$x$}; 
  \end{tikzpicture}
\end{center}
In order to avoid confusion there should be
only one arrow out of the thick dash, e.g., if $A(e)=e'$ and
$A(e')=ex$ where $x$ is a node, the drawing may be
\begin{center}
  \begin{tikzpicture}[point/.style={circle,inner sep=0pt,minimum
      size=3pt,fill=black},> /.tip={Stealth[width=5pt, length=4pt]}] 
    \node (A) [point] at (1,0.5) {};
    \draw[thick] (0,0.2) -- (0,-0.2);
    \draw[->] (0,0) to (1,0);
    \path[out=-90,in=-90,distance=0.33cm] (0.5,0) edge (1,0);
    \path[->] (1,0) edge (A);
\end{tikzpicture}\ \ \raisebox{1.5ex}{but not}\ \ 
\raisebox{-2.5ex}{\begin{tikzpicture}[point/.style={circle,inner sep=0pt,minimum
    size=3pt,fill=black},> /.tip={Stealth[width=5pt, length=4pt]}] 
    \node (A) [point] at (1,0.5) {};
    \draw[thick] (0,0.2) -- (0,-0.2);
    \path[->] (0,0) edge node[fill=white, font=\footnotesize] {$e$} (1,0);
    \path[out=180,in=180,distance=0.33cm] (0,0) edge (0,-0.5);
    \path (0,-0.5) edge node[fill=white, font=\footnotesize] {$e'$} (0.75,-0.5);
    \draw[out=0,in=-90] (0.75,-0.5) edge  (1,-0.25);
    \path[->] (1,-0.25) edge (A);
\end{tikzpicture}}
\end{center}
since this last drawing may be interpreted as the monograph $A(e')=x$
and $A(e)=e'e'$, that is not isomorphic to the intended
monograph.

Other conventions may be more appropriate depending on the context or
on specific monographs. Consider for instance a monograph with one
node $x$ and two edges $\repet{x}{3}$ and $\repet{x}{4}$. The
concentration of many arrow tips on a single bullet would make things
confused unless it is sufficiently large. One possibility is to
replace the bullet by a circle and treat it as a standard edge without
tips. This monograph could then be drawn as
\begin{center}
  \begin{tikzpicture}[scale=0.7,> /.tip={Stealth[width=5pt, length=4pt]}]
    \draw (0,0) circle[radius=1cm];
    \node[fill=white,font=\footnotesize] at (0,1) {$x$};
    \path[->,out=-45,in=0,distance=0.5cm] (135:1) edge (180:1);
    \path[->>,out=180,in=-135,distance=0.8cm] (180:1) edge (-135:1);
    \path[->,out=15,in=-15,distance=0.5cm] (15:1) edge (-15:1);
    \path[->>,out=165,in=-135,distance=0.5cm] (-15:1) edge (45:1);
    \path[->>>,out=45,in=-45,distance=2cm] (45:1) edge (-45:1);
  \end{tikzpicture}
\end{center}

These conventions are designed so that it is only possible to read a drawing of any
finite $\omega$-monograph $A$ as the monograph $A$ itself if all edges
are named in the drawing, or as some monograph isomorphic to $A$
otherwise. This would not be possible if a monograph $A$ was a
function rather than a functional relation, since then its codomain
$\seqto{(\edges{A})}{\alpha}$ would not be pictured. It would of course be
possible to add the ordinal $\alpha$ to the drawing, but then would it
still qualify as a drawing?

Note that the drawing of a graph or of a standard
$\set{0,2}$-monograph can be read either as a graph $G$ or as a
monograph $A$, and then $\MFunc{G}\iso A$.

One particularity of monographs is that edges can be adjacent to
themselves, as in
\begin{center}
  \begin{tabular}{c@{\hspace*{1em}}c@{\hspace*{2em}}cc}
\raisebox{3.7ex}{\begin{tikzpicture}[scale=0.7,> /.tip={Stealth[width=5pt, length=4pt]}]
  \draw[out=0,in=0,distance=0.67cm] (0,0) to (0,1);
  \draw[->,out=180,in=180,distance=0.67cm] (0,1) to (0,0);
  \draw[thick] (0,0.2) -- (0,-0.2);
\end{tikzpicture}} &
\begin{tikzpicture}[> /.tip={Stealth[width=5pt, length=4pt]}]
  \draw[->] (0.1,-0.1) .. controls (-1,-1) and (-1,1) .. (0,0) .. controls
  (1,-1) and (1,1) .. (-0.1,0.1);
\end{tikzpicture} &
\raisebox{3.7ex}{\begin{tikzpicture}[scale=0.6,> /.tip={Stealth[width=5pt, length=4pt]}]
  \draw[out=60,in=220] (-0.7,0) to (0,1);
  \draw[out=40,in=30,distance=1cm] (0,1) to (0.7,0);
  \draw[out=210,in=-30] (0.7,0) to (-0.7,0);
  \draw[->,out=150,in=140,distance=1cm] (-0.7,0) to (0,1);
  \draw[->>,out=-40,in=120] (0,1) to (0.7,0);
\end{tikzpicture}} &
\raisebox{-2.2ex}{
  \begin{tikzpicture}[scale=0.8,> /.tip={Stealth[width=5pt, length=4pt]}]
    \draw[out=90,in=0,distance=1.5cm] (0.1,0.1) to (0.1,0.1);
    \draw (0.1,0.1) to (-0.1,0.1);
    \draw[->,out=180,in=90,distance=1.5cm] (-0.1,0.1) to (-0.1,0.1);
    \draw (-0.1,0.1) to (-0.1,-0.1);
    \draw[->>,out=-90,in=180,distance=1.5cm] (-0.1,-0.1) to (-0.1,-0.1);
    \draw (-0.1,-0.1) to (0.1,-0.1);
    \draw[->>>,out=0,in=-90,distance=1.5cm] (0.1,-0.1) to (0.1,-0.1);
  \end{tikzpicture}}
\end{tabular}
\end{center}

We may also draw typed monographs, then
every edge $e\in\edges{A}$ has a type $a(e)$ that can be written at
the proximity of $e$. For instance, a monograph typed by
$T=\set{\tuple{u,\,v},\,\tuple{v,\,u}}$ is drawn with labels $u$ and
$v$ as in

\begin{center}
\begin{tikzpicture}[type/.style={font=\footnotesize},scale=0.5,>
  /.tip={Stealth[width=5pt, length=4pt]}] 
      \draw[thick] (0.6,0) -- (1.4,0);
      \draw[thick] (-0.6,0) -- (-1.4,0);
      \draw[thick] (0,0.6) -- (0,1.4);
      \draw[thick] (0,-0.6) -- (0,-1.4);
      \draw[thick] (-2.5,-0.1) -- (-2.5,-0.9);
      \draw[thick] (-4,-0.1) -- (-4,-0.9);
      \draw[thick] (-2.4,1) -- (-1.6,1);
      \draw (1,0) to (1,-0.5);
      \draw[out=-90,in=0] (1,-0.5) to (0.5,-1);
      \draw[->] (0.5,-1) to (0,-1);
      \draw (0,-1) to (-0.5,-1); 
      \draw[out=180,in=-90] (-0.5,-1) to (-1,-0.5); 
      \draw[->] (-1,-0.5) to (-1,0); 
      \draw (-1,0) to (-1,0.5);
      \draw[out=90,in=180] (-1,0.5) to (-0.5,1);
      \draw[->] (-0.5,1) to (0,1);
      \draw (0,1) to (0.5,1); 
      \draw[out=0,in=90] (0.5,1) to (1,0.5); 
      \draw[->] (1,0.5) to (1,0); 
      \draw[->] (-2.5,-0.5) to (-1,-0.5);
      \draw[->] (-4,-0.5) to (-2.5,-0.5);
      \draw[->] (-2,1) to (-2,-0.5);
      \node[type] at (1.05,1.05) {$u$};
      \node[type] at (-1.05,-1.05) {$u$};
      \node[type] at (-1.05,1.05) {$v$};
      \node[type] at (1.05,-1.05) {$v$};
      \node[type] at (-1.8,-0.7) {$v$};
      \node[type] at (-3.25,-0.7) {$u$};
      \node[type] at (-2.2,0.4) {$u$};
\end{tikzpicture}
\end{center}

Of course, knowing that $a$ is a morphism sometimes allows to deduce
the type of an edge, possibly from the types of adjacent edges. In the
present case, indicating a single type would have been enough to
deduce all the others.

In particular applications it may be convenient to adopt completely
different ways of drawing (typed) monographs.

\begin{example}
  In \cite{Plump99} term graphs are defined from structures
  $\tuple{V,E,lab,att}$ where $V$ is a set of \emph{nodes}, $E$ a set
  of \emph{hyperedges}, $att:E\rightarrow \seqto{V}{\omega}$ defines
  the adjacencies and $lab:E\rightarrow\Op$ such that $\card{att(e)}$
  is 1 plus the arity of $lab(e)$ for all $e\in E$ (for the sake of
  simplicity, we consider only ground terms of a signature
  $\Sig:\Op\rightarrow \seqto{S}{\omega}$ such that
  $\Op\cap S=\ensvide$). The first element of the sequence $att(e)$ is
  considered as the \emph{result node} of $e$ and the others as its
  \emph{argument nodes}, so that $e$ determines \emph{paths} from its
  result node to all its argument nodes. \emph{Term graphs} are those
  structures such that paths do not cycle, every node is reachable
  from a root node and is the result node of a unique hyperedge. This
  definition is given for unsorted signatures but can easily be
  generalized, as we do now.

  We consider the type monograph $\mathrm{T}_{\Sig}$ defined by
  $\edges{\mathrm{T}_{\Sig}}\defeq S\cup\Op$, and
  \begin{align*}
    \mathrm{T}_{\Sig}(s) &\defeq\empstr \text{ for all }s\in S,\\
    \mathrm{T}_{\Sig}(o) &\defeq \Sig(o) \text{ for all }o\in \Op.
  \end{align*}
  Note that $\mathrm{T}_{\Sig}$ is a standard $\omega$-monograph, and
  indeed that any standard $\omega$-monograph has this form for a
  suitable $\Sig$.
  
  Any typed monograph $a:A\rightarrow \mathrm{T}_{\Sig}$ corresponds
  to a structure $\tuple{V,E,lab,att}$ where $V=\nodes{A}$,
  $E=\edges{A}\setminus \nodes{A}$, $lab(e)=a(e)$ and $att(e)=A(e)$
  for all $e\in E$. The only difference (due to our definition of
  signatures) is that the result node of $e$ is now the last node of
  the sequence $A(e)$.

  We now consider the signature $\Sig$ with two sorts \texttt{s},
  $\texttt{s}'$, a binary function symbol \texttt{f} with
  $\Sig(\texttt{f})= \texttt{s}'\, \texttt{s}'\, \texttt{s}$ and a
  constant symbol $\texttt{c}$ with $\Sig(\texttt{c})= \texttt{s}'$.
  We represent the term graph $\texttt{f}(\texttt{c}, \texttt{c})$,
  where the two occurrences of \texttt{c} are shared, as a typed
  monograph $a:A\rightarrow \mathrm{T}_{\Sig}$. We need two edges $e$,
  $e'$ and their result nodes $x$, $x'$, the first for $\texttt{f}$
  and the second for $\texttt{c}$. Thus $A$ is defined by
  \[\edges{A}=\set{x,\,x',\, e,\, e'},\
    A(x) = A(x') = \empstr,\ A(e) = x'\,x'\,x \text{ and } A(e') =
    x'.\] The typing morphism $a:A\rightarrow \mathrm{T}_{\Sig}$ is
  given by
  \[a(x)=\texttt{s},\ a(x')=\texttt{s}',\ a(e)=\texttt{f} \text{ and }
    a(e')=\texttt{c}.\] We give below the standard drawing of the
  monograph $A$ typed by $a$ and the (clearly preferable) standard
  depiction of the corresponding term graph.

  \begin{center}
    \begin{tikzpicture}[type/.style={font=\footnotesize},
      point/.style={circle,inner sep=0pt,minimum size=3pt,fill=black},>
      /.tip={Stealth[width=5pt, length=4pt]}]
      \node[point] (A) at (0,1) {}; \node[point] (B) at (1,1) {};
      \draw[thick] (0.2,0) -- (-0.2,0);
      \draw[->] (0,0) to (A);
      \draw[-, out=45,in=0,distance=0.33cm] (A) to (-0.2,1.8);
      \draw[->, out=180,in=180,distance=0.55cm] (-0.2,1.8) to (A);
      \path[->>] (A) edge (B);
      \node[type] at (-0.05,1.25) {$\texttt{s}'$};
      \node[type] at (1,1.2) {$\texttt{s}$};
      \node[type] at (-0.15,0.45) {$\texttt{a}$};
      \node[type] at (-0.7,1.5) {$\texttt{f}$};
      \node[point] (C) at (4,2){}; \node[point] (D) at (4,0){};
      \node[draw](E) at (4,1.3){\footnotesize $\texttt{f}$};
      \node[draw](F) at (4,-0.7){\footnotesize $\texttt{a}$};
      \path[-] (C) edge (E); \path[-] (D) edge (F);
      \path[->, bend right] (E) edge (D);
      \path[->, bend left] (E) edge (D);
      \node[type] at (4.2,2) {$\texttt{s}$};
      \node[type] at (4.2,0) {$\texttt{s}'$};
      
    \end{tikzpicture}
  \end{center}
\end{example}

\section{Graph Structures and Typed Monographs}\label{sec-graphstruct}

The procedure of reading the drawing of a graph as a
$\GraphSig$-algebra $\Glg$, where $\GraphSig$ is the signature of
graphs given in Section~\ref{sec-intro}, is rather simple: every
bullet is interpreted as an element of $\Glg_{\texttt{nodes}}$, every
arrow as an element of $\Glg_{\texttt{edges}}$ and the images of this
element by the functions $\texttt{src}^\Glg$ and $\texttt{tgt}^\Glg$
are defined according to geometric proximity in the drawing. A
procedure for reading E-graphs would be similar, except that bullets
may be interpreted either as $\texttt{nodes}$ or
$\texttt{values}$, and this typing information should
therefore be indicated in the drawing.

Since the drawing of a graph is nothing else than the drawing of a
standard $\set{0,2}$-monograph, we may skip the drawing step and
directly transform a standard $\set{0,2}$-monograph $A$ as a
$\GraphSig$-algebra $\Glg$. Then
\[\Glg_{\texttt{nodes}}=\nodes{A},\ 
\Glg_{\texttt{edges}}=\setof{x\in\edges{A}}{\card{x}=2},\ 
\texttt{src}^\Glg(x)=x_0\text{ and }
\texttt{tgt}^\Glg(x)=x_1\] for all $x\in
\Glg_{\texttt{edges}}$. Thus every node of
$A$ is typed by \texttt{nodes} and all other edges are typed by
\texttt{edges}. This typing is obviously a morphism from
$A$ to the monograph $\set{\tuple{\texttt{nodes},\, \empstr},\,
  \tuple{\texttt{edges} ,\, \texttt{nodes}\,
    \texttt{nodes}}}$ that is isomorphic to the terminal object of
$\OSMonGr{\set{0,2}}$ (see Lemma~\ref{lm-terminalOS}).

More generally, for any given graph structure $\Msig$ we may ask which
monographs, equipped with a suitable morphism to a type monograph $T$,
can be interpreted in this way as $\Msig$-algebras. As above, the
edges of $T$ should be the sorts of $\Msig$. But this is not
sufficient since there is no canonical way of linking adjacencies in
$T$ (such as $\texttt{edges}_0=\texttt{nodes}$ and
$\texttt{edges}_1=\texttt{nodes}$) with the operator names of $\Msig$
(such as \texttt{src} and \texttt{tgt}). We will therefore use a
notion of morphism between signatures in order to rename operators,
and we also rename sorts in order to account for functoriality in $T$.

\begin{definition}[categories $\SigCat$, $\MonSig$, $\SigCats$]
  A \emph{morphism} $r$ from $\Sig:\Op\rightarrow \seqto{S}{\omega}$
  to $\Sig':\Op'\rightarrow \seqto{S'}{\omega}$ is a pair
  $\tuple{\opn{r},\sorts{r}}$ of functions
  $\opn{r}:\Op\rightarrow \Op'$ and $\sorts{r}:S\rightarrow S'$ such
  that \[\seqto{\sorts{r}}{\omega}\circ \Sig = \Sig'\circ \opn{r}.\]
  For any morphism $r':\Sig'\rightarrow \Sig''$ let
  $r'\circ r \defeq \tuple{\opn{r'}\circ
    \opn{r},\sorts{r'}\circ\sorts{r}}:\Sig \rightarrow \Sig''$,
  $\id{\Sig}\defeq \tuple{\Id{\Op},\Id{S}}$, and $\SigCat$ be the
  category of signatures and their morphisms. Let $\MonSig$ be the
  full subcategory of graph structures.

  Let $\SigCats$ be the subcategory of $\SigCat$ restricted to
  morphisms of the form $\tuple{\opn{r},j}$ where $j$ is a canonical
  injection. We write ${\dotiso}$ for the isomorphism relation between
  objects in $\SigCats$.
\end{definition}

The question is therefore to elucidate the link between $T$ and
$\Msig$. As explained above, the edges of $T$ correspond to the sorts
of $\Msig$. We also see that every adjacency in $T$ corresponds to an
operator name in $\Msig$, e.g., an edge $e$ of length 2 adjacent to
$e_0$ and $e_1$ (i.e. such that $T(e)=e_0\,e_1$) corresponds to two
operator names, say $\texttt{src}_e$ and $\texttt{tgt}_e$, of domain
sort $e$ and range sort $e_0$ and $e_1$ respectively. Since edges may
have length greater than 2, we create canonical operator names of the
form $\opname{e}{\iota}$ for the $\iota^{\mathrm{th}}$ adjacency of
the edge $e$ for every $\iota<\card{e}$ (hence we favor
$\opname{e}{0}$ and $\opname{e}{1}$ over $\texttt{src}_e$ and
$\texttt{tgt}_e$).

\begin{definition}[functor $\SigFunc{}: \MonGr\rightarrow \MonSig$]\label{def-Sfunc}
  To every monograph $T$ we associate the set of operator names
  $\Funcs{T}\defeq \setof{\opname{e}{\iota}}{e\in\edges{T} \text{ and
    } \iota<\card{e}}$ and the graph structure
  $\SigFunc{T}:\Funcs{T}\rightarrow \seqto{(\edges{T})}{\omega}$
  defined by $\SigFunc{T}(\opname{e}{\iota}) \defeq e\, e_{\iota}$ for
  all $\opname{e}{\iota}\in \Funcs{T}$, i.e., we let
  $\Dom{\opname{e}{\iota}}\defeq e$ and
    $\Rng{\opname{e}{\iota}}\defeq e_{\iota}$.

  To every morphism $f:T\rightarrow T'$ in $\MonGr$ we associate the
  morphism $\SigFunc{f} :\SigFunc{T}\rightarrow \SigFunc{T'}$ defined
  by: $\opn{(\SigFunc{f})}$ is the function that maps every operator
  name $\opname{e}{\iota}\in\Funcs{T}$ to the operator name
  $\opname{f(e)}{\iota}\in\Funcs{T'}$, and $\sorts{(\SigFunc{f})}$ is
  the function ${f}:\edges{T}\rightarrow \edges{T'}$.
\end{definition}

We see that $\SigFunc{f}$ is indeed a morphism of graph structures:
\[ \seqto{\sorts{(\SigFunc{f})}}{\omega}\circ \SigFunc{T} (\opname{e}{\iota})
  = f(e)\, f(e_{\iota})= f(e)\, f(e)_{\iota} = \SigFunc{T'}(\opname{f(e)}{\iota})
  = \SigFunc{T'}\circ \opn{(\SigFunc{f})} (\opname{e}{\iota})\]
for all $\opname{e}{\iota}\in \Funcs{T}$, and it is obvious that
$\SigFunc{}$ is a faithful functor.

The next lemma is central as it shows that no graph structure is
omitted by the functor $\SigFunc{}$ if we allow sort-preserving
isomorphisms of graph structures.  We assume the Axiom of Choice
through its equivalent formulation known as the Numeration Theorem
\cite{Suppes72}.

\begin{lemma}\label{lm-Sig2Mono}
  For every graph structure $\Msig$ there exists a monograph $T$
  such that $\SigFunc{T}\dotiso\Msig$.
\end{lemma}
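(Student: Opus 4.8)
The plan is to build $T$ directly on the sort set of $\Msig$. Since a morphism of $\SigCats$ carries a canonical injection as its sort-component, an isomorphism in $\SigCats$ must have inclusions going both ways that compose to identities; this forces the two sort sets to coincide and the sort-component to be the identity, while the operator-components become mutually inverse bijections. So I will arrange $\edges{T}=S$ and then exhibit a bijection $\opn{r}\colon\Funcs{T}\to\Op$ realising the compatibility equation $\seqto{\Id{S}}{\omega}\circ\SigFunc{T}=\Msig\circ\opn{r}$. Writing each monadic $\Msig(o)$ as the length-$2$ sequence $\Dom{o}\,\Rng{o}$ and recalling $\SigFunc{T}(\opname{e}{\iota})=e\,e_{\iota}$, this equation reduces to the demand that $\opn{r}$ send each $\opname{e}{\iota}$ to an operator of domain sort $e$ and range sort $e_{\iota}$.

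This reading dictates the edge lengths. For each sort $s\in S$ let $\Op_{s}\defeq\setof{o\in\Op}{\Dom{o}=s}$ collect the operators whose single domain sort is $s$; since every operator of a graph structure is monadic, it has exactly one domain sort, so the family $(\Op_{s})_{s\in S}$ partitions $\Op$. By the Numeration Theorem each $\Op_{s}$ is equinumerous with a unique ordinal $\lambda_{s}$, and I fix a bijection $b_{s}\colon\lambda_{s}\to\Op_{s}$. I then define $T$ by $\edges{T}\defeq S$ and $T(s)\defeq(\Rng{b_{s}(\iota)})_{\iota<\lambda_{s}}$, an $S$-sequence of length $\lambda_{s}$. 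Taking $\alpha$ to be the supremum of the set $\setof{\lambda_{s}+1}{s\in S}$ yields an ordinal for $T$, so $T$ is a genuine monograph; note that its edges may have infinite ordinal length (when some fibre $\Op_{s}$ is infinite), which is exactly the situation for which arbitrary ordinal lengths were admitted. By construction $\card{s}=\lambda_{s}$ and $s_{\iota}=\Rng{b_{s}(\iota)}$ for every $\iota<\lambda_{s}$.

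Finally I verify the isomorphism. Define $\opn{r}(\opname{s}{\iota})\defeq b_{s}(\iota)$; because the $\Op_{s}$ partition $\Op$ and each $b_{s}$ is a bijection, $\opn{r}\colon\Funcs{T}\to\Op$ is a bijection. For each $\opname{s}{\iota}\in\Funcs{T}$ we have $\Dom{b_{s}(\iota)}=s$ and $\Rng{b_{s}(\iota)}=s_{\iota}$, hence $\Msig(\opn{r}(\opname{s}{\iota}))=s\,s_{\iota}=\SigFunc{T}(\opname{s}{\iota})$, which is precisely the required identity with $\sorts{r}=\Id{S}$. Thus $r\defeq\tuple{\opn{r},\Id{S}}$ is a morphism of $\SigCats$; its operator- and sort-components are bijections, so $\tuple{\opn{r}^{-1},\Id{S}}$ is an inverse morphism and $r$ is an isomorphism, giving $\SigFunc{T}\dotiso\Msig$. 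The only non-routine ingredient is the appeal to the Numeration Theorem to enumerate each fibre $\Op_{s}$ as an ordinal-indexed sequence of ranges; the remaining steps are bookkeeping, the one point of care being the choice of a common ordinal $\alpha$ bounding all the $\lambda_{s}$.
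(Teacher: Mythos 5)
Your proposal is correct and follows essentially the same route as the paper's proof: partition $\Op$ into the fibres $\Op_s$ of the domain map, enumerate each fibre by an ordinal $\lambda_s$ via the Numeration Theorem, define $T$ on $\edges{T}=S$ by $T(s)_{\iota}=\Rng{b_s(\iota)}$, and check that $\opname{s}{\iota}\mapsto b_s(\iota)$ together with $\Id{S}$ is an isomorphism in $\SigCats$. The only differences are cosmetic (your explicit choice of a bounding ordinal $\alpha$ and your partition-based bijectivity argument versus the paper's direct injectivity/surjectivity check).
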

\begin{proof}
  Let $\Msig:\Op\rightarrow \seqto{S}{\omega}$ and for every
  sort $s\in S$ let $\Op_s$ be the set of operator names $o\in\Op$ whose
  domain sort is $s$, i.e., $\Op_s\defeq \relim{\invf{\mathrm{Dom}}}{s}$.  By
  the Numeration Theorem there exists an ordinal $\lambda_s$
  equipollent to $\Op_s$, i.e., such that there exists a bijection
  $\nu_s:\lambda_s\rightarrow \Op_s$. Let $T$ be the monograph such
  that $\edges{T}\defeq S$ and
  $T(s)_{\iota}\defeq \Rng{\nu_s(\iota)}$ for all
  $\iota<\lambda_s$, so that $T(s)$ is an $S$-sequence of length
  $\lambda_s$.
  \begin{center}
    \begin{tikzpicture}[xscale=2,yscale=1.2]
      \node (S1) at (0,0) {$s_1$}; \node (S0) at (0,-1) {$s_0$}; \node (S2) at (0,1) {$s_2$};
      \node (SG) at (-1,0) {$s$}; \node (ST) at (1,0) {$s$};
      \node at (-1.3,0) {$\Msig$}; \node at (1.3,0) {$\SigFunc{T}$};
      \node at (0, 1.5){$\vdots$}; \node at (0,-1.5) {$T(s)$};
      \node at (0.35,-0.4) {\footnotesize $\opname{s}{0}$};
      \node at (0.35,0.2) {\footnotesize $\opname{s}{1}$};
      \node at (0.35,0.9) {\footnotesize $\opname{s}{2}$};
      \node at (-0.36,-0.4) {\footnotesize $\nu_s(0)$};
      \node at (-0.36,0.2) {\footnotesize $\nu_s(1)$};
      \node at (-0.36,0.9) {\footnotesize $\nu_s(2)$};
      \path[->] (SG) edge (S0);
      \path[->] (ST) edge (S0);
      \path[->] (SG) edge (S1);
      \path[->] (ST) edge (S1);
      \path[->] (SG) edge (S2);
      \path[->] (ST) edge (S2);
      \draw[-{Stealth[width=5pt,length=4pt]}] (S0) to (S1);
      \draw[-{Stealth[width=5pt,length=4pt]Stealth[width=5pt,length=4pt]}] (S1) to (S2);
    \end{tikzpicture}
  \end{center}

  We now consider the function $\opn{r}:\Funcs{T}\rightarrow \Op$
  defined by $\opn{r}(\opname{s}{\iota})\defeq \nu_s(\iota)$. This
  function is surjective since for all $o\in \Op$, by taking
  $s=\Dom{o}$ and $\iota=\invf{\nu_s}(o)$ we get
  $\iota<\lambda_s=\card{s}$ hence $\opname{s}{\iota}\in\Funcs{T}$ and
  obviously $\opn{r}(\opname{s}{\iota}) = o$. It is also injective
  since $\opn{r}(\opname{s}{\iota}) = \opn{r}(\opname{s'}{\iota'})$
  entails $s=\Dom{\nu_s(\iota)} = \Dom{\nu_{s'}(\iota')} = s'$
  hence $\iota=\iota'$ and therefore
  $\opname{s}{\iota} = \opname{s'}{\iota'}$. Finally, we see that
  \[\seqto{\Id{S}}{\omega}\circ \SigFunc{T}(\opname{s}{\iota}) =
    s\,s_{\iota} = \Dom{\nu_s(\iota)}\,\Rng{\nu_s(\iota)} =
    \Msig(\nu_s(\iota)) = \Msig\circ\opn{r}(\opname{s}{\iota})\]
  for all $\opname{s}{\iota}\in\Funcs{T}$, hence
  $\tuple{\opn{r},\Id{S}}:\SigFunc{T}\rightarrow \Msig$ is an
  isomorphism, so that $\SigFunc{T}\dotiso\Msig$.
\end{proof}

The reason why monographs require edges of ordinal length now becomes
apparent: the length of an edge $s$ is the cardinality of $\Op_s$, i.e.,
the number of operator names whose domain sort is $s$, and no
restriction on this cardinality is ascribed to graph structures. The
bijections $\nu_s$ provide linear orderings of the sets
$\Op_s$. Since $T(s)$ depends on $\nu_s$ the monograph $T$ such that
$\SigFunc{T}\dotiso\Msig$ may not be unique, even though $\SigFunc{}$
is injective on objects, as we now show.

\begin{theorem}
  $\SigFunc{}$ is an isomorphism-dense embedding of $\MonGr$ into $\MonSig$.
\end{theorem}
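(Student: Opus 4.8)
The plan is to verify the three defining properties in turn: that $\SigFunc{}$ is faithful, that it is injective on objects (together these make it an \emph{embedding} in the sense of \cite{Herrlich-Strecker79}), and that it is isomorphism-dense. Faithfulness has already been observed immediately after Definition~\ref{def-Sfunc}, and isomorphism-density will be read off from Lemma~\ref{lm-Sig2Mono}. So the genuine content lies in injectivity on objects, which is where I would spend most of the effort.

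For injectivity on objects I would exploit the fact that a signature is recorded as a \emph{function}, i.e. as a triple carrying its domain, its underlying functional relation, and its codomain. Thus $\SigFunc{T} = \SigFunc{T'}$ forces equality of all three pieces of data: the domains $\Funcs{T} = \Funcs{T'}$, the codomains $\seqto{(\edges{T})}{\omega} = \seqto{(\edges{T'})}{\omega}$, and the functional relations. From equality of codomains I would first recover $\edges{T} = \edges{T'} =: E$, since $\seqto{(\edges{T})}{\omega}$ determines its underlying set (for instance as the set of values of its length-one members). Next, since the operator names $\opname{e}{\iota}$ are pairwise distinct formal symbols, equality of the domains $\Funcs{T} = \Funcs{T'}$ gives $\card{e} = \card{e}$ computed in $T$ and in $T'$ respectively, i.e. all edge lengths agree. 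Finally, evaluating the common signature at each $\opname{e}{\iota}$ yields $e\,(T(e))_\iota = e\,(T'(e))_\iota$, hence $(T(e))_\iota = (T'(e))_\iota$ for all $\iota<\card{e}$; together with equal lengths this gives $T(e) = T'(e)$ for every $e\in E$, whence $T = T'$.

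The step I expect to be the real obstacle — or at least the point that must not be glossed over — is the recovery of $\edges{T}$ from the codomain, and specifically the treatment of nodes. A node is an edge of length $0$, so it contributes no operator name to $\Funcs{T}$ (there is no $\iota<0$) and occurs in no value of the signature. Hence the elements of $\nodes{T}$ are completely invisible in both the domain and the functional relation of $\SigFunc{T}$; their only trace is in the codomain $\seqto{(\edges{T})}{\omega}$. This is exactly why it is indispensable here that a signature be a genuine function rather than a mere functional relation: without the codomain as part of the data, two monographs differing only in their sets of isolated nodes would share the same image, and $\SigFunc{}$ would fail to be injective on objects.

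For isomorphism-density I would simply invoke Lemma~\ref{lm-Sig2Mono}: for every graph structure $\Msig$ there is a monograph $T$ with $\SigFunc{T}\dotiso\Msig$, the relation $\dotiso$ being witnessed, as in that proof, by a sort-preserving isomorphism $\tuple{\opn{r},\Id{S}}$. Such a pair is an isomorphism in $\SigCat$, and since $\MonSig$ is the \emph{full} subcategory of $\SigCat$ on graph structures and both $\SigFunc{T}$ and $\Msig$ are graph structures, it is also an isomorphism in $\MonSig$. Hence $\SigFunc{T}\iso\Msig$ in $\MonSig$, so every object of $\MonSig$ is isomorphic to one in the image of $\SigFunc{}$. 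Combining this with faithfulness and injectivity on objects gives the statement.
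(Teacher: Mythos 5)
Correct, and essentially the paper's own argument: isomorphism-density is read off Lemma~\ref{lm-Sig2Mono} (with $\dotiso$ entailing $\iso$, valid in $\MonSig$ since it is a full subcategory of $\SigCat$), faithfulness was noted after Definition~\ref{def-Sfunc}, and injectivity on objects follows by recovering $\edges{T}$ from the codomain, the edge lengths from the operator names, and the adjacencies from the values $\SigFunc{T}(\opname{e}{\iota})=e\,e_{\iota}$ --- exactly the paper's computation, which you merely spell out in more detail. One slip in your aside, though it carries no weight in the argument: a node is not in general invisible in the functional relation of $\SigFunc{T}$, since it occurs in the value $\SigFunc{T}(\opname{e}{\iota})$ whenever some edge $e$ has it as its $\iota^{\mathrm{th}}$ adjacency; only \emph{isolated} nodes (as your final formulation correctly says) leave no trace outside the codomain.
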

\begin{proof}
  It is trivial by Lemma~\ref{lm-Sig2Mono} that $\SigFunc{}$ is
  isomorphism-dense since $\SigFunc{T}\dotiso\Msig$ entails
  $\SigFunc{T}\iso\Msig$.  Assume that $\SigFunc{T}=\SigFunc{T'}$ then
  $\edges{T}=\edges{T'}$ and $\Funcs{T}=\Funcs{T'}$, hence
  $\card{T(e)}=\card{T'(e)}$ for all $e\in\edges{T}$, and
  $T(e)_{\iota} = (\SigFunc{T}(\opname{e}{\iota}))_1 =
  (\SigFunc{T'}(\opname{e}{\iota}))_1 = T'(e)_{\iota}$ for all
  $\iota<\card{e}$, thus $T=T'$.
\end{proof}

It is therefore clear that if $\SigFunc{}$ were full it would be an
equivalence of categories, but this is not the case as we now
illustrate on graphs.

\begin{example}\label{ex-graphs}
  We consider the graphs structure $\GraphSig$. We have
  $\Op_{\texttt{nodes}}=\ensvide$ and
  $\Op_{\texttt{edges}} = \set{\texttt{src}, \texttt{tgt}}$, hence
  $\lambda_{\texttt{edges}} =2$. Let
  $\nu_{\texttt{edges}}:2\rightarrow \Op_{\texttt{edges}}$ be the bijection defined by
  $\nu_{\texttt{edges}}: 0\mapsto \texttt{src}, 1\mapsto \texttt{tgt}$, the
  corresponding monograph is
  $\mathrm{T_g}\defeq\set{\tuple{\texttt{nodes},\empstr},\tuple{\texttt{edges},
      \texttt{nodes}\,\texttt{nodes}}}$, and we easily check that
  $\SigFunc{\mathrm{T_g}}\dotiso \GraphSig$. However, the only
  automorphism of $\mathrm{T_g}$ is $\id{\mathrm{T_g}}$, while
  $\GraphSig$ has a non trivial automorphism
  $m=\tuple{(\texttt{src}\ \texttt{tgt}), \Id{\set{\texttt{nodes},
        \texttt{edges}}}}$ (in cycle notation), hence $\SigFunc{}$ is
  not surjective on morphisms.
\end{example}

This automorphism reflects the fact that a graph structure does not
define an order between its operator names. Directing edges as arrows
from $\texttt{src}$ to $\texttt{tgt}$ or the other way round is a
matter of convention that is reflected in the choice of $\nu_{\texttt{edges}}$ in
Example~\ref{ex-graphs}. This contrasts with monographs where edges
are inherently directed by ordinals, and also with the structure of
graphs where the source function comes first. In the translation from $\MonGr$
to $\MonSig$ the direction of edges are necessarily lost, hence these
categories are not equivalent.

\begin{example}\label{ex-Egraphs}
  The signature $\EGraphSig$ of E-graphs from \cite{EhrigEPT06} has
  five sorts $\texttt{edges}$, $\texttt{nv-edges}$,
  $\texttt{ev-edges}$, $\texttt{nodes}$, $\texttt{values}$ and six
  operator names $\texttt{src}_{\texttt{e}}$,
  $\texttt{tgt}_{\texttt{e}}$, $\texttt{src}_{\texttt{nv}}$,
  $\texttt{tgt}_{\texttt{nv}}$, $\texttt{src}_{\texttt{ev}}$,
  $\texttt{tgt}_{\texttt{ev}}$ whose domain and range sorts are
  defined as in Section~\ref{sec-intro}. We have
  $\Op_{\texttt{nodes}} = \Op_{\texttt{values}} =\ensvide$,
  $\Op_{\texttt{edges}}=\set{\texttt{src}_{\texttt{e}},
    \texttt{tgt}_{\texttt{e}}}$,
  $\Op_{\texttt{nv-edges}}=\set{\texttt{src}_{\texttt{nv}},
    \texttt{tgt}_{\texttt{nv}}}$ and
  $\Op_{\texttt{ev-edges}}=\set{\texttt{src}_{\texttt{ev}},
    \texttt{tgt}_{\texttt{ev}}}$. There are four possible monographs
  $T$ such that $\SigFunc{T}\dotiso \EGraphSig$ given by
  \[\begin{array}{ll}
      T(\texttt{nodes}) = T(\texttt{values}) =
      \empstr, &
                T(\texttt{nv-edges}) =
                \texttt{nodes}\, \texttt{values}\text{ or }
                \texttt{values}\, \texttt{nodes},\\
      T(\texttt{edges}) =
      \texttt{nodes}\, \texttt{nodes}, &
                                       T(\texttt{ev-edges}) =
                                       \texttt{edges}\, \texttt{values}\text{ or }
                                       \texttt{values}\, \texttt{edges}.
    \end{array}\]
  These four monographs are depicted below.
\[\begin{tikzpicture}[point/.style={circle,inner sep=0pt,minimum
      size=3pt,fill=black},scale=0.7,> /.tip={Stealth[width=5pt,
      length=4pt]}] 
    \node (N1) [point] at (0.5,1) {};
    \node (V1) [point] at (0,0) {};
    \path[-,out=120,in=90,distance=0.5cm] (N1) edge (-0.5,1);
    \path[->,out=-90,in=-120,distance=0.5cm] (-0.5,1) edge (N1);
    \path[->,out=180,in=180,distance=0.7cm] (-0.5,1) edge (V1); 
    \path[->, out=0,in=0,distance=0.7cm] (N1) edge (V1);
    \node (N2) [point] at (3.5,1) {};
    \node (V2) [point] at (3,0) {};
    \path[-,out=120,in=90,distance=0.5cm] (N2) edge (2.5,1);
    \path[->,out=-90,in=-120,distance=0.5cm] (2.5,1) edge (N2);
    \path[<-,out=180,in=180,distance=0.7cm] (2.5,1) edge (V2); 
    \path[->, out=0,in=0,distance=0.7cm] (N2) edge (V2);
    \node (N3) [point] at (6.5,1) {};
    \node (V3) [point] at (6,0) {};
    \path[-,out=120,in=90,distance=0.5cm] (N3) edge (5.5,1);
    \path[->,out=-90,in=-120,distance=0.5cm] (5.5,1) edge (N3);
    \path[<-,out=180,in=180,distance=0.7cm] (5.5,1) edge (V3); 
    \path[<-, out=0,in=0,distance=0.7cm] (N3) edge (V3);
    \node (N4) [point] at (9.5,1) {};
    \node (V4) [point] at (9,0) {};
    \path[-,out=120,in=90,distance=0.5cm] (N4) edge (8.5,1);
    \path[->,out=-90,in=-120,distance=0.5cm] (8.5,1) edge (N4);
    \path[->,out=180,in=180,distance=0.7cm] (8.5,1) edge (V4); 
    \path[<-, out=0,in=0,distance=0.7cm] (N4) edge (V4);
    \node at (0,-0.5) {$T_1$}; 
    \node at (3,-0.5) {$T_2$}; 
    \node at (6,-0.5) {$T_3$}; 
    \node at (9,-0.5) {$T_4$}; 
  \end{tikzpicture}\]
The type indicated by the syntax (and consistent with the drawings of
E-graphs in \cite{EhrigEPT06}) is of course $T_1$.
\end{example}

The restrictions of $\SigFunc{}$ to the categories of
Definition~\ref{def-catmono} are isomorphism-dense embeddings into
full subcategories of $\MonSig$ that are easy to define. The
$O$-monographs correspond to graph structures $\Msig:\Op\rightarrow
\seqto{S}{\omega}$ such that $\card{\Op_s}\in O$ for all $s\in S$, and
the standard monographs to $\Op_{\Rng{o}}=\ensvide$ for all
$o\in\Op$. The finite monographs correspond to finite $S$, hence
$\FMonGr$ corresponds to finite signatures.

We can now describe precisely how a monograph $A$ typed by $T$ through
$a:A\rightarrow T$ can be read as an $\SigFunc{T}$-algebra $\Alg$. As
mentioned above, every edge $x$ of $A$ is typed by $a(x)\in\edges{T}$
and should therefore be interpreted as an element of $\Alg_{a(x)}$,
hence $\Alg_{a(x)}$ is the set of all edges $x\in\edges{A}$ that are
typed by $a(x)$. Then, for every $\iota<\card{x}=\card{a(x)}$, the
$\iota^{\mathrm{th}}$ adjacent edge $x_{\iota}$ of $x$ is the image of
$x$ by the $\iota^{\mathrm{th}}$ operator name for this type of edge,
that is $\opname{a(x)}{\iota}$. Note that the sort of this image is
$a(x_{\iota}) = a(x)_{\iota}$ that is precisely the range sort of the
operator name $\opname{a(x)}{\iota}$ in $\SigFunc{T}$ (see
Definition~\ref{def-Sfunc}), so that $\Alg$ is indeed an
$\SigFunc{T}$-algebra. This leads to the following definition.

\begin{definition}[functor $\AlgFunc{T}{}:\sliceM{T}\rightarrow
  \AlgCat{\SigFunc{T}}$]\label{def-algfunc}
  Given a monograph $T$, we define the function $\AlgFunc{T}{}$ that
  maps every object $a:A\rightarrow T$ of $\sliceM{T}$ to the
  $\SigFunc{T}$-algebra $\Alg = \AlgFunc{T}{a}$ defined by
  \begin{itemize}
  \item $\carrier{\Alg}{e}\defeq \relim{\invrel{a}}{e}$ for all
    $e\in \edges{T}$, and
  \item $\interp{\opname{e}{\iota}}{\Alg}(x) \defeq x_{\iota}$
    for all $\opname{e}{\iota}\in \Funcs{T}$ and $x\in \carrier{\Alg}{e}$.
  \end{itemize}
  Besides, $\AlgFunc{T}{}$ also maps every morphism
  $f:a\rightarrow b$, where $b:B\rightarrow T$, to the
  $\SigFunc{T}$-homomorphism $\AlgFunc{T}{f}$ from $\Alg$ to
  $\Blg=\AlgFunc{T}{b}$ defined by
  \[(\AlgFunc{T}{f})_e \defeq\restr{f}{\Alg_e}{\Blg_e} \text{ for all }
  e\in \edges{T}.\]
\end{definition}

\begin{figure}[t]
  \centering
  \begin{tikzpicture}[xscale=3,yscale=2]
    \node at (0.5,1){$e\ e_{\iota}$};
    \node at (0,1) {$\opname{e}{\iota}\,\mapsto$};
    \node at (-0.8,1) {$\SigFunc{T}$};
    \node (X) at (0,0) {$x$}; \node (A) at (1,0) {$x_{\iota}$};
    \node at (-0.8,0) {$\Alg$};
    \node at (1.6,0){$\edges{A}$};
    \node at (1.6,1){$\edges{T}$};
    \path[{Bar}->] (X) edge (A);
    \node at (0.5,0.1) {${\scriptstyle \interp{\opname{e}{\iota}}{\Alg}}$};
    \draw (0,0) circle (0.3cm and 0.2cm);
    \draw (1,0) circle (0.3cm and 0.2cm);
    \draw (0.5,0) circle (1cm and 0.4cm);
    \path[->] (0,0.2) edge node [left, font=\footnotesize]{$a$} (0.4,0.9);
    \path[->] (1,0.2) edge node [right, font=\footnotesize]{$a$} (0.6,0.9);
  \end{tikzpicture}
  \caption{The $\SigFunc{T}$-algebra $\Alg=\AlgFunc{T}{a}$ where
    $a:A\rightarrow T$}
  \label{fig-AlgFunc}
\end{figure}

The $\SigFunc{T}$-algebra $\Alg$ can be pictured as in
Figure~\ref{fig-AlgFunc}. The carrier sets $\Alg_e$ form a partition
of $\edges{A}$. Since $f:a\rightarrow b$ (not pictured) is a function
$f:\edges{A}\rightarrow\edges{B}$ such that $b\circ f = a$, then
$\relim{b\circ f}{\Alg_e} = \relim{a}{\relim{\invf{a}}{e}} \subseteq
\set{e}$ hence
$\relim{f}{\Alg_e}\subseteq \relim{\invf{b}}{e} = \Blg_e$, so that
$\restr{f}{\Alg_e}{\Blg_e}$ 
is well-defined. We also see that
$h=\AlgFunc{T}{f}$ is an $\SigFunc{T}$-homomorphism from
$\Alg$ to
$\Blg$ since for every operator name
$\opname{e}{\iota}\in\Funcs{T}$ we have
$\Dom{\opname{e}{\iota}}=e$, $\Rng{\opname{e}{\iota}}=e_{\iota}$ and
\[\interp{\opname{e}{\iota}}{\Blg}\circ h_e(x) =
  \interp{\opname{e}{\iota}}{\Blg}(f(x)) = f(x)_{\iota}= f(x_{\iota}) =
  f(\interp{\opname{e}{\iota}}{\Alg}(x)) = h_{e_{\iota}}\circ
  \interp{\opname{e}{\iota}}{\Alg}(x)\]
for all $x\in\Alg_e$. It is obvious from
Definition~\ref{def-algfunc} that $\AlgFunc{T}$ preserves identities
and composition of morphisms, hence that it is indeed a functor.

\begin{theorem}\label{thm-sliceiso}
  For every monograph $T$, $\AlgFunc{T}{}$ is an equivalence.
\end{theorem}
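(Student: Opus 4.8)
The plan is to establish that $\AlgFunc{T}{}$ is faithful, full and isomorphism-dense, and then conclude by the standard characterization (valid under the Axiom of Choice, already assumed) that such a functor is an equivalence \cite{Herrlich-Strecker79}. Faithfulness is formal: the carrier sets $\carrier{\Alg}{e}=\relim{\invrel{a}}{e}$ partition $\edges{A}$, and $\AlgFunc{T}{f}$ is precisely the family $(\restr{f}{\Alg_e}{\Blg_e})_{e\in\edges{T}}$ of restrictions of $f$ to the blocks of this partition; since these restrictions jointly determine $f$, the equality $\AlgFunc{T}{f}=\AlgFunc{T}{f'}$ forces $f=f'$.

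For fullness I would reverse the forward computation already carried out after Definition~\ref{def-algfunc}. Let $h:\Alg\to\Blg$ be an $\SigFunc{T}$-homomorphism with $\Alg=\AlgFunc{T}{a}$ and $\Blg=\AlgFunc{T}{b}$, where $b:B\rightarrow T$. Because the $\carrier{\Alg}{e}$ partition $\edges{A}$, the family $(h_e)_{e\in\edges{T}}$ glues into a single function $f:\edges{A}\rightarrow\edges{B}$ with $\restr{f}{\Alg_e}{}=h_e$. Reading the homomorphism equation $\interp{\opname{e}{\iota}}{\Blg}\circ h_e=h_{e_{\iota}}\circ\interp{\opname{e}{\iota}}{\Alg}$ through the definition of the operations yields $f(x)_{\iota}=f(x_{\iota})$ for all $x\in\Alg_e$ and $\iota<\card{x}$, while $h_e(x)\in\Blg_e$ gives $\card{f(x)}=\card{e}=\card{x}$. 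By the characterization of morphisms in Section~\ref{sec-mono}, $f:A\rightarrow B$ is a morphism; moreover $f$ maps $\Alg_e$ into $\Blg_e$, so $b\circ f=a$ and $f:a\rightarrow b$ lies in $\sliceM{T}$, with $\AlgFunc{T}{f}=h$ by construction.

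For isomorphism-density, given an arbitrary $\SigFunc{T}$-algebra $\Alg$ I would build a typed monograph whose image recovers $\Alg$ up to isomorphism. Since the carriers of $\Alg$ need not be disjoint, I take $\edges{A}\defeq\coprod_{e\in\edges{T}}\carrier{\Alg}{e}$, write $\jmath_e$ for the injection of $\carrier{\Alg}{e}$, and set $a(\jmath_e(x))\defeq e$ together with $A(\jmath_e(x))_{\kappa}\defeq\jmath_{e_{\kappa}}(\interp{\opname{e}{\kappa}}{\Alg}(x))$ for all $\kappa<\card{e}$. As $\Rng{\opname{e}{\kappa}}=e_{\kappa}$, each value lands in the correct summand, so $A$ is a monograph with $\trace{A}\subseteq\trace{T}$ and $a:A\rightarrow T$ is a morphism (both $\card{a(x)}=\card{x}$ and $a(x_{\kappa})=a(x)_{\kappa}$ hold by construction). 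Then $\carrier{(\AlgFunc{T}{a})}{e}=\relim{\invrel{a}}{e}=\relim{\jmath_e}{\carrier{\Alg}{e}}$, and the injections $\jmath_e$ assemble into an isomorphism $\AlgFunc{T}{a}\iso\Alg$ in $\AlgCat{\SigFunc{T}}$, the operations being transported correctly by the very definition of $A$.

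The only genuinely delicate step is density. Because the carrier sets of a $\SigFunc{T}$-algebra may overlap, one cannot take $\edges{A}=\bigcup_e\carrier{\Alg}{e}$; the disjoint-union tagging is essential both to produce a well-defined monograph and to pin down the comparison isomorphism, and is the point requiring the most care. Fullness is merely the forward verification of Definition~\ref{def-algfunc} read backwards, and faithfulness is purely formal, so the argument rests almost entirely on this construction.
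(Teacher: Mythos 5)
Your proposal is correct and follows essentially the same route as the paper: faithfulness is immediate from the partition $\edges{A}=\biguplus_e\relim{\invrel{a}}{e}$, fullness is obtained by gluing the family $(h_e)_{e\in\edges{T}}$ into a single function (the paper writes this as $f(x)\defeq h_{a(x)}(x)$), and isomorphism-density uses the same tagged disjoint-union construction — your $\coprod_{e\in\edges{T}}\carrier{\Alg}{e}$ with injections $\jmath_e$ is exactly the paper's $\bigcup_{e\in\edges{T}}\Clg_e\times\set{e}$ with $\jmath_e(x)=\tuple{x,e}$, including the observation that tagging is needed because carriers may overlap.
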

\begin{proof}
  Let $a:A\rightarrow T$ and $b:B\rightarrow T$ be objects of
  $\sliceM{T}$ and $\Alg\defeq\AlgFunc{T}a$,
  $\Blg\defeq\AlgFunc{T}b$. It is trivial that $\AlgFunc{T}$ is faithful.
  
  \emph{$\AlgFunc{T}$ is full.} For any
  $\SigFunc{T}$-homomorphism  $h:\Alg\rightarrow \Blg$, let
  $f:\edges{A}\rightarrow \edges{B}$ be the function defined by
  $f(x)\defeq h_{a(x)}(x)$ for all $x\in\edges{A}$. Let $e=a(x)$ so
  that $x\in\Alg_e$, since $h_e(x)\in\Blg_e = \relim{\invrel{b}}{e}$
  then $b\circ f(x)=b(h_e(x)) = e$, hence $b\circ f=a$ and
  $\card{f(x)} = \card{b(f(x))}=\card{a(x)} = \card{x}$. For all
  $\iota<\card{x}$ we have $a(x_{\iota})=a(x)_{\iota} = e_{\iota}$ and
  since $h$ is an $\SigFunc{T}$-homomorphism then
  \[f(x_{\iota}) = h_{e_{\iota}}(\interp{\opname{e}{\iota}}{\Alg}(x))
    = \interp{\opname{e}{\iota}}{\Blg}(h_e(x)) = f(x)_{\iota}\] hence
  $f:a\rightarrow b$ is a morphism. Since
  $(\AlgFunc{T}f)_e(x) = \restr{f}{\Alg_e}{\Blg_e}(x) = h_e(x)$ for
  all $e\in\edges{T}$ and all $x\in\Alg_e$, then $\AlgFunc{T}f=h$.

  \emph{$\AlgFunc{T}$ is isomorphism-dense.} For any
  $\SigFunc{T}$-algebra $\Clg$, let
  \[\edges{C}\defeq \bigcup_{e\in\edges{T}}\Clg_e\times\set{e}\ \text{
      and }\ (C\tuple{x,e})_{\iota}\defeq
    \tuple{\interp{\opname{e}{\iota}}{\Clg}(x),e_{\iota}}\] for all
  $\tuple{x,e}\in\edges{C}$ and $\iota<\card{e}$.  Since
  $\Rng{\opname{e}{\iota}}=e_{\iota}$ then
  $\interp{\opname{e}{\iota}}{\Clg}(x) \in\Clg_{e_{\iota}}$ hence
  $(C\tuple{x,e})_{\iota}\in \edges{C}$, so that $C$ is a monograph
  such that $\card{\tuple{x,e}}=\card{e}$. Let
  $c:\edges{C}\rightarrow\edges{T}$ be defined by
  $c\tuple{x,e}\defeq e$, we have
  \[c(\tuple{x,e}_{\iota}) = 
    c \tuple{\interp{\opname{e}{\iota}}{\Clg}(x),e_{\iota}} =
    e_{\iota}=(c\tuple{x,e})_{\iota},\] hence $c:C\rightarrow T$ is a
  morphism.   For all $e\in\edges{T}$ we have
  $(\AlgFunc{T}c)_e = \relim{\invf{c}}{e} = \Clg_e\times\set{e}$, and
  we let $h_e:\Clg_e\rightarrow (\AlgFunc{T}c)_e$ be defined by
  $h_e(x)\defeq \tuple{x,e}$ for all $x\in\Clg_e$. The functions $h_e$
  are bijective and $h\defeq (h_e)_{e\in\edges{T}}$ is an
  $\SigFunc{T}$-homomorphism since
  \[\interp{\opname{e}{\iota}}{\AlgFunc{T}c}\circ h_e(x)
    =\interp{\opname{e}{\iota}}{\AlgFunc{T}c}\tuple{x,e}
    =\tuple{x,e}_{\iota} =
    \tuple{\interp{\opname{e}{\iota}}{\Clg}(x),e_{\iota}} =
    h_{e_{\iota}}\circ \interp{\opname{e}{\iota}}{\Clg}(x),\]
  for all $\opname{e}{\iota}\in\Funcs{T}$ and $x\in\Clg_{e}$,
  hence $\Clg\iso\AlgFunc{T}c$.
\end{proof}

It is easy to see that for any two signatures $\Sig$ and $\Sig'$, if
$\Sig\iso\Sig'$ then $\AlgCat{\Sig}\iso \AlgCat{\Sig'}$. We conclude
that all graph structured algebras can be represented as typed
monographs.

\begin{corollary}\label{cr-gs2slicem}
  For every graph structure $\Msig$ there exists a monograph $T$
  such that $\AlgCat{\Msig} \equivCat \sliceM{T}$.
\end{corollary}
\begin{proof}
  By Lemma~\ref{lm-Sig2Mono} there exists $T$ such that
  $\Msig\iso\SigFunc{T}$, hence
  $\sliceM{T}\equivCat \AlgCat{\SigFunc{T}} \iso\AlgCat{\Msig}$.
\end{proof}

\newcommand{\dfdiag}[2]{\,\raisebox{-3ex}{\begin{tikzpicture}[xscale=0.8,yscale=0.2]
      \path[<-] (0,1) edge node[above,font=\footnotesize]{$#1$} (1,1);
      \path[<-] (0,0) edge node[below,font=\footnotesize]{$#2$} (1,0);
    \end{tikzpicture}}\ }
\newcommand{\tginf}{\mathrm{T_{\infty}}}

\begin{example}\label{ex-poly}
  Following \cite{Burroni1993}, an
  \emph{$\infty$-graph} $\Glg$ is given by a diagram of sets
  \[\Glg_0\dfdiag{s_0}{t_0}\Glg_1\dfdiag{s_1}{t_1} \cdots
    \dfdiag{s_{n-1}}{t_{n-1}} \Glg_n \dfdiag{s_n}{t_n}\Glg_{n+1}
    \dfdiag{s_{n+1}}{t_{n+1}} \cdots\]
  such that, for every $n\in\omega$, the following equations hold:
  \[s_n\circ s_{n+1}= s_n\circ t_{n+1},\ \ \ \ \  t_n\circ s_{n+1}= t_n\circ t_{n+1}.\]
  This means that every element $x$ of $\Glg_{n+2}$ is an edge whose
  source $x_0$ and target $x_1$ are edges of $\Glg_{n}$ that are
  parallel, i.e., that have same source $(x_0)_0=(x_1)_0$ and same
  target $(x_0)_1=(x_1)_1$. Graphically:
  \begin{center}
    \begin{tikzpicture}[xscale=1.5,> /.tip={Stealth[width=5pt, length=4pt]}]
      \node (S) at (-1,0) {\footnotesize $(x_0)_0$}; \node (T) at (1,0) {\footnotesize $(x_1)_1$};
      \path[->,bend left] (S) edge node[above,font=\footnotesize]{$x_0$} (T);
      \path[->,bend right] (S) edge node[below,font=\footnotesize]{$x_1$} (T);
      \path[->] (0,0.38) edge node[right,font=\footnotesize]{$x$} (0,-0.37);
    \end{tikzpicture}
  \end{center}
  This is known as the \emph{globular condition}.  We consider the
  type monograph $\tginf$ defined by
  $\edges{\tginf}=\omega$,$\tginf(0) = \empstr$ and
  $\tginf(n+1) = n\,n$ for all $n\in\omega$.
  This is an infinite non-standard $\set{0,2}$-monograph that can be
  pictured as
  \[\begin{tikzpicture}[point/.style={circle,inner sep=0pt,minimum
        size=3pt,fill=black},scale=0.7,> /.tip={Stealth[width=5pt,
        length=4pt]}] 
      \node (N1) [point] at (0,0) {};
      \path[-,out=60,in=90,distance=0.5cm] (N1) edge (1,0);
      \path[->,out=-90,in=-60,distance=0.5cm] (1,0) edge (N1);
      \path[-,out=30,in=90,distance=0.5cm] (0.9,0.28) edge (1.9,0);
      \path[->,out=-90,in=-30,distance=0.5cm] (1.9,0) edge (0.9,-0.28);
      \path[-,out=30,in=90,distance=0.5cm] (1.82,0.28) edge (2.8,0);
      \path[->,out=-90,in=-30,distance=0.5cm] (2.8,0) edge (1.82,-0.28);
      \node at (3.3,0){$\cdots$};
    \end{tikzpicture}\]
  We express the globular condition on typed monographs
  $g:G\rightarrow \tginf$ as:
  \[ \text{for all } x\in \edges{G},\text{ if } g(x)\geq 2\text{ then
    } G(x_0) = G(x_1).\]

  We rapidly check that this is equivalent to the globular condition on
  the $\SigFunc{\tginf}$-algebra $\Glg = \AlgFunc{\tginf}g$. The set
  of sorts of $\SigFunc{\tginf}$ is $\omega$ and its operator names
  are $\opname{n+1}{0}$ and $\opname{n+1}{1}$ with domain sort $n+1$ and range
  sort $n$, for all $n\in\omega$. We let $s_n\defeq
  \interp{\opname{n+1}{0}}{\Glg}$ and $t_n\defeq
  \interp{\opname{n+1}{1}}{\Glg}$, that are functions from
  $\Glg_{n+1}$ to $\Glg_n$ as in the diagram of $\infty$-graphs.

  By Definition~\ref{def-algfunc} we have for all
  $x\in\Glg_{n+2}=\relim{\invf{g}}{n+2}$ and all $i,j\in 2$ that 
  \[\interp{\opname{n+1}{j}}{\Glg} \circ
    \interp{\opname{n+2}{i}}{\Glg}(x) =
    \interp{\opname{n+1}{j}}{\Glg}(x_i) = (x_i)_j\]
  hence
  \begin{align*}
    G(x_0) = G(x_1) &\text{ iff } (x_0)_0 = (x_1)_0 \text{ and } (x_0)_1=(x_1)_1\\
                    &\text{ iff } \interp{\opname{n+1}{0}}{\Glg} \circ
    \interp{\opname{n+2}{0}}{\Glg}(x) = \interp{\opname{n+1}{0}}{\Glg} \circ
    \interp{\opname{n+2}{1}}{\Glg}(x)\\&\ \ \text{ and } \interp{\opname{n+1}{1}}{\Glg} \circ
    \interp{\opname{n+2}{0}}{\Glg}(x) = \interp{\opname{n+1}{1}}{\Glg} \circ
    \interp{\opname{n+2}{1}}{\Glg}(x)\\
    &\text{ iff } s_n\circ s_{n+1}(x)= s_n\circ t_{n+1}(x)\text{ and } t_n\circ s_{n+1}(x)= t_n\circ t_{n+1}(x).
    \end{align*} 
\end{example}

\begin{example}\label{ex-hypergraphs}
  The signature $\mathrm{\Gamma_h}$ of \emph{hypergraphs} (see
  \cite[Example 3.4]{Lowe93}) is defined by the set of sorts
  $\mathrm{S_h}\defeq\set{\texttt{V}}\cup \setof{\texttt{H}_{n,m}}{n,m
    \in\omega}$ and for all $n,m\in\omega$ by $n$ operator names
  $\texttt{src}^{n,m}_i$ and $m$ operator names $\texttt{tgt}^{n,m}_j$
  with domain sort $\texttt{H}_{n,m}$ and range sort \texttt{V} for
  all $1\leq i\leq n$ and $1\leq j\leq m$. Hence there are $n+m$
  operator names of domain $\texttt{H}_{n,m}$, and $(n+m)!$ bijections
  from the ordinal $n+m$ to this set of operator names. But since they
  all have the same range sort \texttt{V}, the type monograph
  $\mathrm{T_h}$ does not depend on these bijections (one for every
  pair $\tuple{n,m}$). It is defined by
  $\edges{\mathrm{T_h}}\defeq \mathrm{S_h}$ and
  \begin{align*}
    \mathrm{T_h}(\texttt{V}) &= \empstr \\
    \mathrm{T_h}(\texttt{H}_{n,m}) &= \repet{\texttt{V}}{(n+m)} \text{ for all }n,m\in\omega.
  \end{align*}
  This is a standard $\omega$-monograph. It is easy to
  see that any standard $\omega$-monograph can by typed by
  $\mathrm{T_h}$, though not in a unique way. Every edge of length
  $l>0$ can be typed by any sort $\texttt{H}_{n,m}$ such that $n+m=l$,
  and every node can be typed by $\texttt{V}$ (or by $\texttt{H}_{0,0}$
  if it is not adjacent to any edge). To any such typing corresponds an
  $\SigFunc{\mathrm{T_h}}$-algebra by the equivalence
  $\AlgFunc{\mathrm{T_h}}$, and then to a hypergraph (a
  $\mathrm{\Gamma_h}$-algebra) since
  $\mathrm{\Gamma_h}\iso \SigFunc{\mathrm{T_h}}$.

  But to know which hypergraph $\mathcal{H}$ corresponds exactly to a
  typed monograph we need to be more specific, since there are
  infinitely many isomorphisms between $\mathrm{\Gamma_h}$ and
  $\SigFunc{\mathrm{T_h}}$. The natural isomorphism stems from the
  obvious orderings
  $\texttt{src}^{n,m}_1<\cdots <\texttt{src}^{n,m}_n <
  \texttt{tgt}^{n,m}_1 <\cdots <\texttt{tgt}^{n,m}_m$ for all
  $n,m\in\omega$. In this isomorphism the canonical operator name
  $\opname{\texttt{H}_{n,m}}{i}$ for all $i<n+m$ corresponds to
  $\texttt{src}^{n,m}_{i+1}$ if $i<n$, and to
  $\texttt{tgt}^{n,m}_{i+1-n}$ if $i\geq n$. Thus an edge $x$, say of
  length 3 typed by $\texttt{H}_{2,1}$, must be interpreted as an
  hyperedge $x\in\mathcal{H}_{\texttt{H}_{2,1}}$ with
  $\interp{(\texttt{src}^{2,1}_1)}{\mathcal{H}}(x)=x_0$,
  $\interp{(\texttt{src}^{2,1}_2)}{\mathcal{H}}(x)=x_1$,
  $\interp{(\texttt{tgt}^{2,1}_1)}{\mathcal{H}}(x)=x_2$ and $x_0,
  x_1, x_2\in\mathcal{H}_{\texttt{V}}$.
\end{example}

The results of this section apply in particular to typed graphs. It is
easy to see that $\SigFunc{}\circ\MFunc{}$ is an isomorphism-dense
embedding of $\Graphs$ into the full subcategory of graph structures
$\Msig: \Op\rightarrow \seqto{S}{\omega}$ such that for every operator
name $o\in \Op$ we have $\card{\Op_{\Dom{o}}}=2$ and
$\Op_{\Rng{o}}=\ensvide$. Hence for every such $\Msig$ there exists a
graph $G$ such that
$\sliceCat{\Graphs}{G} \equivCat \sliceM{\MFunc{G}} \equivCat
\AlgCat{\Msig}$. The type graph $G$ is determined only up to the
orientation of its edges.

\section{Submonographs and Partial Morphisms}\label{sec-partial}

Graph structures have been characterized in \cite{Lowe93} as the
signatures that allow the transformation of the corresponding algebras
by the single pushout method. This method is based on the construction
of pushouts in categories of partial homomorphisms, defined as
standard homomorphisms from subalgebras of their domain algebra, just
as partial functions are standard functions from subsets of their
domain (in the categorical theoretic sense of the word
\emph{domain}). The results of Section~\ref{sec-graphstruct} suggest
that a similar approach can be followed with monographs. We first need
a notion of submonograph, their (inverse) image by morphisms and
restrictions of morphisms to submonographs.

\begin{definition}[submonographs and their images, restricted morphisms]\label{def-submono}
  A monograph $A$ is a \emph{submonograph} of a monograph $M$ if
  $A\subseteq M$.  For any monograph $N$ and morphism
  $f:M\rightarrow N$, let
  $f(A)\defeq\setof{\tuple{f(x), N\circ f(x)}}{x\in \edges{A}}$. For
  any submonograph $C\subseteq N$, let
  $\invf{f}(C)\defeq \setof{\tuple{x,M(x)}}{x\in
    \relim{\invf{f}}{\edges{C}}}$.  If $f(A)\subseteq C$, let
  $\restr{f}{A}{C}:A\rightarrow C$ be the morphism whose underlying
  function is $\restr{f}{\edges{A}}{\edges{C}}$.
\end{definition}

In the sequel we will use the following obvious facts without explicit
reference. $f(A)$ and $\invf{f}(C)$ are submonographs of $N$ and
$M$ respectively.  If $A$ and $B$ are submonographs of
$M$ then so are $A\cup B$ and $A\cap B$. We have $f(A\cup B) =
f(A)\cup f(B)$ thus $A\subseteq B$ entails $f(A)\subseteq
f(B)$. If $C$ and $D$ are submonographs of
$N$ we have similarly $\invf{f}(C\cup D) = \invf{f}(C) \cup
\invf{f}(D)$ and $C\subseteq D$ entails $\invf{f}(C)\subseteq
\invf{f}(D)$. We also have $A\subseteq
\invf{f}(f(A))$ and $f(\invf{f}(C))= C\cap
f(M)$. For any $g:N\rightarrow P$ and submonograph $E$ of
$P$, $\invf{(g\circ f)}(E) =
\invf{f}(\invf{g}(E))$.  If
$\tuple{A+B,\mu_1,\mu_2}$ is the coproduct of $\tuple{A,B}$ and
$C$ is a submonograph of
$A+B$ then $C=\invf{\mu_1}(C)+\invf{\mu_2}(C)$.

We may now define the notion of partial morphisms of monographs, with
a special notation in order to distinguish them from standard
morphisms, and their composition.

\begin{definition}[categories of partial morphisms of
  monographs]\label{def-catp}
  \sloppy
  A \emph{partial morphism} $\partm{f}:A\rightarrow B$ is a morphism
  $f:A'\rightarrow B$ where $A'$ is a submonograph of $A$. $f$ is
  called the \emph{underlying morphism} of $\partm{f}$. If the domain
  of $f$ is not otherwise specified, we write
  $\partm{f}:A\hookleftarrow A'\rightarrow B$. If the domain $A'$ of $f$ is
  specified but not the domain of $\partm{f}$ then they are assumed to
  be identical, i.e., $\partm{f}:A'\hookleftarrow A'\rightarrow B$.  For
  any $\partm{g}:B\hookleftarrow B'\rightarrow C$ we define the composition
  of partial morphisms as
  \[\partm{g}\circ \partm{f}\ \defeq\ \partm{g\circ
      \restr{f}{\invf{f}(B')}{B'}} : A\hookleftarrow 
    \invf{f}(B') \rightarrow C.\]

  Let $\MonGrP$ be the category of monographs and partial
  morphisms. Let $\SMonGrP$ be its full subcategory of standard
  monographs. For any set $O$ of ordinals, let $\OMonGrP{O}$
  (resp. $\OSMonGrP{O}$) be its full subcategory of $O$-monographs
  (resp. standard $O$-monographs).  Let $\FMonGrP$ be its full
  subcategory of finite $\omega$-monographs.
\end{definition}

Note that
$\tuple{\invf{f}(B'),\, \restr{f}{\invf{f}(B')}{B'}:
  \invf{f}(B')\rightarrow B',\,j': \invf{f}(B')\hookrightarrow A'}$ is a
pullback of $\tuple{j:B'\hookrightarrow B,\,f:A'\rightarrow B,\, B}$ and
is therefore an inverse image (i.e., a pullback along a monomorphism,
see \cite{Herrlich-Strecker79}), and it is therefore easy to see that
composition of partial morphisms is associative, see
\cite{RobinsonR1988}. (Note however that $\MonGrP$ is not a category of
partial maps in the sense of \cite{RobinsonR1988}, since partial maps
are defined modulo isomorphic variations of $A'$.)

We now see how these inverse images allow to formulate a sufficient
condition ensuring that restrictions of coequalizers are again
coequalizers.

\begin{lemma}[coequalizer restriction]\label{lm-coeq-restr}
  Let $A'$ and $B'$ be submonographs of $A$ and $B$ respectively and
  $f,g:A\rightarrow B$ be parallel morphisms such that
  \[\invf{f}(B') = A' = \invf{g}(B'),\] if $\tuple{Q,c}$ is a
  coequalizer of $\tuple{f,g}$ then $\tuple{Q',c'}$ is a coequalizer
  of $\tuple{\restr{f}{A'}{B'}, \restr{g}{A'}{B'}}$, where
  $Q' =c(B')$, $c'=\restr{c}{B'}{Q'}$ and $\invf{c}(Q') = B'$.
\end{lemma}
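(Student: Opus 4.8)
The plan is to reduce everything to the explicit coequalizer construction of Lemma~\ref{lm-coeq}, with the whole weight of the argument resting on the auxiliary identity $\invf{c}(Q')=B'$, which I would establish first. Throughout I unwind the hypothesis $\invf{f}(B')=A'=\invf{g}(B')$ into the equality of edge sets $\edges{A'}=\setof{x\in\edges{A}}{f(x)\in\edges{B'}}=\setof{x\in\edges{A}}{g(x)\in\edges{B'}}$. In particular $x\in\edges{A'}$ forces $f(x),g(x)\in\edges{B'}$, so $f(A')\subseteq B'$ and $g(A')\subseteq B'$ and the restricted morphisms $\restr{f}{A'}{B'},\restr{g}{A'}{B'}:A'\rightarrow B'$ are well defined; moreover $c'=\restr{c}{B'}{Q'}$ coequalizes them since $c\circ f=c\circ g$ restricts to $\edges{A'}$.

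First I would prove $\invf{c}(Q')=B'$. Recalling the relation $\sim$ and $R=\setof{\tuple{f(x),g(x)}}{x\in\edges{A}}$ from the proof of Lemma~\ref{lm-coeq}, the edge set of $\invf{c}(Q')$ is $\setof{y\in\edges{B}}{c(y)\in\relim{c}{\edges{B'}}}$, i.e.\ the set of $y$ that are $\sim$-equivalent to some element of $\edges{B'}$, so $\edges{B'}\subseteq\invf{c}(Q')$ is immediate. For the converse I show that $\edges{B'}$ is a union of $\sim$-classes. The key point is that $R$ never straddles $\edges{B'}$: if $\tuple{f(x),g(x)}\in R$ and $f(x)\in\edges{B'}$ then $x\in\invf{f}(B')=A'=\invf{g}(B')$, whence $g(x)\in\edges{B'}$, and symmetrically. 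Since $\sim$-equivalence of $y$ and $y'$ is witnessed by a finite $R$-chain $y_0,\dots,y_n$, membership in $\edges{B'}$ is constant along such a chain; thus $y'\in\edges{B'}$ forces $y\in\edges{B'}$, giving $\invf{c}(Q')=B'$. The very same chain shows moreover that each of its links arises from some $x\in\edges{A'}$, so that for $y,y'\in\edges{B'}$ we have $y\sim y'$ exactly when $y,y'$ are related by the smallest equivalence $\sim'$ on $\edges{B'}$ generated by $\setof{\tuple{f(x),g(x)}}{x\in\edges{A'}}$.

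It then remains to verify the universal property, and the preceding identification of $\sim$ with $\sim'$ on $\edges{B'}$ makes the fibres of $c'$ precisely the $\sim'$-classes. Note also that $Q'(c'(y))=\seqto{c'}{\alpha}\circ B'(y)$, because $Q'=c(B')$ gives $Q'(c(y))=\seqto{c}{\alpha}\circ B(y)$, while $B'\subseteq B$ gives $B'(y)=B(y)$ with entries in $\edges{B'}$, on which $c$ and $c'$ agree. So for any $d':B'\rightarrow D$ with $d'\circ\restr{f}{A'}{B'}=d'\circ\restr{g}{A'}{B'}$, the function $d'$ is constant on the generating pairs, hence on $\sim'$-classes, hence on fibres of $c'$; this yields a unique function $h$ with $h\circ c'=d'$, and the verification that $h:Q'\rightarrow D$ is a morphism is the identical computation $D\circ h(c'(y))=\seqto{d'}{\alpha}\circ B'(y)=\seqto{h}{\alpha}\circ Q'(c'(y))$ carried out in Lemma~\ref{lm-coeq}. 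Uniqueness of $h$ follows from the surjectivity of $c'$.

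The single genuine obstacle is the closure $\invf{c}(Q')=B'$. Once one observes that the generator $R$ relates two edges that either both lie in $\edges{B'}$ or both lie outside it, the finite-chain description of $\sim$ delivers both the closure and the collapse of $\sim$ to $\sim'$ on $\edges{B'}$ in one stroke. This is the unique place where the hypothesis $\invf{f}(B')=A'=\invf{g}(B')$ is used; without it an $R$-step could leave $\edges{B'}$, the fibres of $c'$ would be strictly coarser than the $\sim'$-classes, and the factorization through $c'$ would fail.
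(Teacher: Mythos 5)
Your proposal is correct and rests on exactly the same key observation as the paper's proof: the hypothesis $\invf{f}(B') = A' = \invf{g}(B')$ forces every generating pair $\tuple{f(x),g(x)}$ of $\sim$ to lie either entirely inside or entirely outside $\edges{B'}$, so finite chains cannot cross the boundary, which simultaneously gives $\invf{c}(Q')=B'$ and the identification of $\sim$ restricted to $\edges{B'}$ with the equivalence generated by the restricted pairs. The only difference is packaging: the paper invokes Lemma~\ref{lm-coeq} a second time to construct the canonical coequalizer of $\tuple{\restr{f}{A'}{B'},\restr{g}{A'}{B'}}$ and then identifies it with $\tuple{c(B'),\restr{c}{B'}{Q'}}$, whereas you define that data directly and re-verify the universal property by hand; both routes are sound and of comparable length.
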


\begin{center}
  \begin{tikzpicture}[xscale=2,yscale=2]
    \node (A) at (0,1) {${A}$}; \node (B) at (1,1) {${B}$}; \node (Q) at (2,1) {$Q$}; 
    \path[->] (A) edge [bend left] node[above, font=\footnotesize] {$f$} (B);
    \path[->] (A) edge [bend right] node[below, font=\footnotesize] {$g$} (B);
    \path[->] (B) edge node[above, font=\footnotesize] {$c$} (Q);
    \node (A') at (0,0) {${A'}$}; \node (B') at (1,0) {${B'}$}; \node (Q') at (2,0) {$Q'$}; 
    \path[->] (A') edge [bend left] node[above, font=\footnotesize] {$\restr{f}{A'}{B'}$} (B');
    \path[->] (A') edge [bend right] node[below, font=\footnotesize] {$\restr{g}{A'}{B'}$} (B');
    \path[->] (B') edge node[above, font=\footnotesize] {$c'$} (Q');
    \path [arrows={Hooks[right]->}] (A') edge (A); 
    \path [arrows={Hooks[right]->}] (B') edge (B); 
    \path [arrows={Hooks[right]->}] (Q') edge (Q); 
  \end{tikzpicture}
\end{center}

\begin{proof}
  We assume w.l.o.g. that $(Q,c)$ is the coequalizer of $\tuple{f,g}$
  constructed in Lemma~\ref{lm-coeq} with ${\sim}$ being the
  equivalence relation generated by
  $R=\setof{\tuple{f(x), g(x)}}{x\in \edges{A}}$, and we let $(Q',c')$
  be the coequalizer of $\tuple{\restr{f}{A'}{B'}, \restr{g}{A'}{B'}}$
  constructed similarly with the equivalence relation ${\approx}$
  generated by
  $R'=\setof{\tuple{\restr{f}{A'}{B'} (x), \restr{g}{A'}{B'}
      (x)}}{x\in \edges{A'}}$. By the properties of $f$ and $g$ we
  have that
  \[f(x)\in\edges{B'}\text{ iff } x\in \relim{\invf{f}}{\edges{B'}}
    \text{ iff } x\in\edges{A'}\text{ iff } x\in
    \relim{\invf{g}}{\edges{B'}} \text{ iff } g(x)\in\edges{B'}\] for
  all $x\in \edges{A}$, hence for all $y,y'\in\edges{B}$ we have that
  $y\mathrel{R'}y'$ iff $y\mathrel{R}y'$ and at least one of $y,y'$ is
  in $\edges{B'}$. By an easy induction we see that $y\approx y'$ iff
  $y\sim y'$ and $y'\in\edges{B'}$, hence the $\approx$-classes are
  the $\sim$-classes of the elements of $\edges{B'}$, i.e.,
  $\edges{Q'}=\relim{c}{\edges{B'}}$. It follows trivially that
  $Q' =c(B')$, $c'=\restr{c}{B'}{Q'}$ and $\invf{c}(Q') = B'$.
\end{proof}

It is then easy to obtain a similar result on pushouts.

\begin{lemma}[pushout restriction]\label{lm-pushout-restr}
  Let $A'$, $B'$, $C'$ be submonographs of $A$, $B$, $C$ respectively
  and $f:A\rightarrow B$, $g:A\rightarrow C$ be morphisms such that
  \[\invf{f}(B') = A' = \invf{g}(C'),\] if $\tuple{h,k,Q}$ is a pushout
  of $\tuple{A,f,g}$, let $Q'=h(B')\cup k(C')$, $\invf{h}(Q') = B'$ and $\invf{k}(Q') =
  C'$, then
  $\tuple{\restr{h}{B'}{Q'},\restr{k}{C'}{Q'}, Q'}$ is a pushout of
  $\tuple{A',\restr{f}{A'}{B'}, \restr{g}{A'}{C'}}$.
\end{lemma}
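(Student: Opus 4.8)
The plan is to reduce the statement to the coequalizer restriction of Lemma~\ref{lm-coeq-restr} by invoking the standard construction of a pushout as a coproduct followed by a coequalizer. Since pushouts are unique up to isomorphism, I may assume without loss of generality that $\tuple{h,k,Q}$ is the canonical pushout: take the coproduct $\tuple{B+C,\mu_1,\mu_2}$ of $\tuple{B,C}$ (Lemma~\ref{lm-coprod}), let $\tuple{Q,c}$ be the coequalizer of $p\defeq\mu_1\circ f$ and $q\defeq\mu_2\circ g$ (Lemma~\ref{lm-coeq}), and set $h\defeq c\circ\mu_1$, $k\defeq c\circ\mu_2$. I would then observe that $B'+C'$ is a submonograph of $B+C$, namely $\relim{\mu_1}{\edges{B'}}\cup\relim{\mu_2}{\edges{C'}}$, and that it is itself the coproduct of $B'$ and $C'$ with injections $\restr{\mu_1}{B'}{B'+C'}$ and $\restr{\mu_2}{C'}{B'+C'}$. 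The single fact that carries the whole argument is that, by disjointness of the two summands, $\invf{\mu_1}(B'+C')=B'$ and $\invf{\mu_2}(B'+C')=C'$.

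Next I would check the hypotheses of Lemma~\ref{lm-coeq-restr} for the parallel pair $p,q:A\rightarrow B+C$ and the submonograph $B'+C'$. Using the composition rule $\invf{(g'\circ f')}(E)=\invf{f'}(\invf{g'}(E))$ for inverse images (recorded after Definition~\ref{def-submono}) together with the previous fact,
\[\invf{p}(B'+C')=\invf{f}(\invf{\mu_1}(B'+C'))=\invf{f}(B')=A',\]
and symmetrically $\invf{q}(B'+C')=\invf{g}(C')=A'$. Hence $\invf{p}(B'+C')=A'=\invf{q}(B'+C')$, and Lemma~\ref{lm-coeq-restr} yields a coequalizer $\tuple{Q',c'}$ of $\tuple{\restr{p}{A'}{B'+C'},\restr{q}{A'}{B'+C'}}$ with $Q'=c(B'+C')$, $c'=\restr{c}{B'+C'}{Q'}$ and $\invf{c}(Q')=B'+C'$.

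Finally I would translate back into pushout language. Since the injections of $B'+C'$ are restrictions of $\mu_1,\mu_2$, one has $\restr{p}{A'}{B'+C'}=\restr{\mu_1}{B'}{B'+C'}\circ\restr{f}{A'}{B'}$ and likewise for $q$; thus the coequalizer above is exactly the one produced by the canonical pushout construction applied to $\tuple{A',\restr{f}{A'}{B'},\restr{g}{A'}{C'}}$, so that $\tuple{\restr{h}{B'}{Q'},\restr{k}{C'}{Q'},Q'}$ is a pushout, because $\restr{h}{B'}{Q'}=c'\circ\restr{\mu_1}{B'}{B'+C'}$ and $\restr{k}{C'}{Q'}=c'\circ\restr{\mu_2}{C'}{B'+C'}$. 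The side claims then follow immediately: $Q'=c(B'+C')=h(B')\cup k(C')$ agrees with the stated definition of $Q'$, while $\invf{h}(Q')=\invf{\mu_1}(\invf{c}(Q'))=\invf{\mu_1}(B'+C')=B'$ and symmetrically $\invf{k}(Q')=C'$. I expect no genuine obstacle here; the only care needed is in keeping the identifications of the restricted coproduct injections straight, everything else being a mechanical consequence of Lemma~\ref{lm-coeq-restr}.
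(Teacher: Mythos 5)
Your proposal is correct and follows essentially the same route as the paper's own proof: assume w.l.o.g. the canonical pushout (coproduct then coequalizer), verify via the inverse-image composition rule that $\invf{(\mu_1\circ f)}(B'+C') = A' = \invf{(\mu_2\circ g)}(B'+C')$, apply Lemma~\ref{lm-coeq-restr}, and translate the resulting coequalizer back into the canonical pushout of the restricted span, reading off $Q'=h(B')\cup k(C')$, $\invf{h}(Q')=B'$ and $\invf{k}(Q')=C'$ along the way. The only cosmetic difference is that you isolate the disjointness fact $\invf{\mu_1}(B'+C')=B'$ explicitly, which the paper uses implicitly through the identities recorded after Definition~\ref{def-submono}.
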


  \begin{center}
    \begin{tikzpicture}[xscale=0.7,yscale=0.7]
      \cubenodes{A'}{C'}{B'}{B'+C'}{A}{C}{B}{B+C}
      \node (Q) at (11,3.2) {$Q$}; \node (Q') at (11,0) {$Q'$}; 
      \path [arrows={Hooks[right]->}] (LL) edge (UL); 
      \path [arrows={Hooks[right]->}] (Q') edge (Q); 
      \path [arrows={Hooks[right]->}] (LB) edge (UB); 
      \path[->] (LL) edge node[above, font=\footnotesize] {$\restr{f}{A'}{B'}$} (LB);
      \path[->] (LR) edge node[fill=white, font=\footnotesize,near start] {$c'$} (Q'); 
      \path[->] (LB) edge node[above right, font=\footnotesize] {$\restr{h}{B'}{Q'}$} (Q');
      \path[->] (LF) edge node[below right, font=\footnotesize] {$\restr{k}{C'}{Q'}$} (Q');
      \path[->] (LL) edge node[below, font=\footnotesize] {$\restr{g}{A'}{C'}$} (LF);
      \path[->] (UL) edge node[above, font=\footnotesize] {$f$} (UB);
      \path[-] (UL) edge[draw=white, line width=3pt]  (UF);
      \path[->] (UL) edge node[below, font=\footnotesize] {$g$} (UF);
      \path[->] (UB) edge node[below, font=\footnotesize] {$\mu_1$} (UR);
      \path[->] (UF) edge node[above, font=\footnotesize] {$\mu_2$} (UR);
      \path[->] (LB) edge node[below, font=\footnotesize] {$\mu'_1$} (LR);
      \path[->] (LF) edge node[above, font=\footnotesize] {$\mu'_2$} (LR);
      \path[->] (UB) edge node[above right, font=\footnotesize] {$h$} (Q);
      \path[-] (LR) edge[draw=white, line width=3pt]  (UR);
      \path [arrows={Hooks[right]->}] (LR) edge (UR);
      \path[-] (UF) edge[draw=white, line width=3pt]  (Q);
      \path[->] (UF) edge node[below right, font=\footnotesize] {$k$} (Q);
      \path[->] (UR) edge node[fill=white, font=\footnotesize,near start] {$c$} (Q); 
      \path[-] (LF) edge[draw=white, line width=3pt]  (UF);
      \path [arrows={Hooks[right]->}] (LF) edge (UF);
      \end{tikzpicture}
  \end{center}

\begin{proof}
  We assume w.l.o.g. that $\tuple{h,k,Q}$ is obtained by the canonical
  construction of pushouts, i.e., that $h=c\circ\mu_1$ and
  $k=c\circ\mu_2$ where $\tuple{Q,c}$ is a coequalizer of
  $\tuple{\mu_1\circ f,\mu_2\circ g}$ and $\tuple{B+C,\mu_1,\mu_2}$ is
  the coproduct of $\tuple{B,C}$. Let $\tuple{B'+C',\mu'_1,\mu'_2}$ be
  the coproduct of $\tuple{B',C'}$, then obviously
  $B'+C'\subseteq B+C$, $\mu'_1 = \restr{\mu_1}{B'}{B'+C'}$ and
  $\mu'_2 = \restr{\mu_2}{C'}{B'+C'}$. Since
  \[\invf{(\mu_1\circ f)}(B'+C') = \invf{f}(B') = A' = \invf{g}(C') =
    \invf{(\mu_2\circ g)}(B'+C')\] then by Lemma~\ref{lm-coeq-restr}
  $\tuple{Q',c'}$ is a coequalizer of
  \[\tuple{\restr{(\mu_1\circ f)}{A'}{B'+C'}, \restr{(\mu_2\circ
        g)}{A'}{B'+C'}} = \tuple{\mu'_1\circ\restr{f}{A'}{B'},
      \mu'_2\circ\restr{g}{A'}{C'}}\] where $Q'=c(B'+C')$,
  $c'=\restr{c}{B'+C'}{Q'}$ and $\invf{c}(Q')=B'+C'$. We thus have
  $\invf{h}(Q') = \invf{(c\circ\mu_1)}(Q') = \invf{\mu_1}(B'+C') = B'$
  and similarly $\invf{k}(Q') = C'$. We also have
  $\restr{h}{B'}{Q'} = \restr{(c\circ\mu_1)}{B'}{Q'} = c'\circ \mu'_1$
  and
  $\restr{k}{C'}{Q'} = \restr{(c\circ\mu_2)}{C'}{Q'} = c'\circ
  \mu'_2$, hence $\tuple{\restr{h}{B'}{Q'},\restr{k}{C'}{Q'}, Q'}$ is
  the canonical pushout of
  $\tuple{A',\restr{f}{A'}{B'}, \restr{g}{A'}{B'}}$, and therefore
  $Q' = \restr{h}{B'}{Q'}(B')\cup \restr{k}{C'}{Q'}(C')= h(B')\cup
  k(C')$.
\end{proof}

We can now show that categories of partial morphisms of monographs
have pushouts. The following construction is inspired by
\cite[Construction 2.6, Theorem 2.7]{Lowe93} though the proof uses
pushout restriction.

\begin{theorem}\label{th-ppo}
  The categories of Definition~\ref{def-catp} have pushouts.
\end{theorem}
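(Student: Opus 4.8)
The plan is to reduce everything to a pushout in $\MonGr$ (which exists by Theorem~\ref{th-po}) and then cut out the part that the partial morphisms delete, using pushout restriction (Lemma~\ref{lm-pushout-restr}). Write the two partial morphisms as $\partm{f}:A\hookleftarrow A_1\rightarrow B$ and $\partm{g}:A\hookleftarrow A_2\rightarrow C$, and set $A_0\defeq A_1\cap A_2$, the submonograph of $A$ on which both are defined, with total restrictions $f_0\defeq\restr{f}{A_0}{}$ and $g_0\defeq\restr{g}{A_0}{}$. First I would form the pushout $\tuple{h,k,Q}$ of the span $B\xleftarrow{f_0}A_0\xrightarrow{g_0}C$ in $\MonGr$; the coproduct/coequalizer construction of Lemmas~\ref{lm-coprod} and~\ref{lm-coeq} guarantees $\edges{Q}=\relim{h}{\edges{B}}\cup\relim{k}{\edges{C}}$ and that $Q$ stays standard / an $O$-monograph / finite whenever $B$ and $C$ are, so the whole construction will remain inside each category of Definition~\ref{def-catp}. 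The intuition is that all \emph{gluing} in a partial pushout occurs over the common domain $A_0$, while the one-sided domains contribute only \emph{deletions}.

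Next I would delete from $Q$ exactly the images of the edges one side keeps but the other removes. Let $S\defeq\relim{h}{\relim{f}{\edges{A_1}\setminus\edges{A_0}}}\cup\relim{k}{\relim{g}{\edges{A_2}\setminus\edges{A_0}}}\subseteq\edges{Q}$, let $D$ be the largest submonograph of $Q$ with $\edges{D}\cap S=\ensvide$ (this exists since such submonographs are closed under union, as in the construction of the product in Lemma~\ref{lm-prod}), and put $B'\defeq\invf{h}(D)$ and $C'\defeq\invf{k}(D)$. Using the facts recalled after Definition~\ref{def-submono} — namely $h(\invf{h}(D))=D\cap h(B)$ and $\invf{(g\circ f)}(E)=\invf{f}(\invf{g}(E))$ — together with $h\circ f_0=k\circ g_0$, one checks almost mechanically that $D=h(B')\cup k(C')$ and that $\invf{f_0}(B')=\invf{(h\circ f_0)}(D)=\invf{(k\circ g_0)}(D)=\invf{g_0}(C')$; call this common submonograph $A'$. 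Thus the hypotheses of Lemma~\ref{lm-pushout-restr} hold for $\tuple{A_0,f_0,g_0}$, its pushout $\tuple{h,k,Q}$ and the triple $A',B',C'$, and the lemma delivers that $\tuple{\restr{h}{B'}{D},\restr{k}{C'}{D},D}$ is a pushout in $\MonGr$ of $\tuple{A',\restr{f_0}{A'}{B'},\restr{g_0}{A'}{C'}}$, with $\invf{h}(D)=B'$ and $\invf{k}(D)=C'$.

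I would then take as candidate the partial morphisms $\partm{h'}:B\hookleftarrow B'\rightarrow D$ and $\partm{k'}:C\hookleftarrow C'\rightarrow D$ with underlying morphisms $h'=\restr{h}{B'}{D}$ and $k'=\restr{k}{C'}{D}$. Commutativity $\partm{h'}\circ\partm{f}=\partm{k'}\circ\partm{g}$ follows by evaluating both composites through Definition~\ref{def-catp}: since $\edges{D}\cap S=\ensvide$ forces $\invf{f}(B')\subseteq A_0$ (an edge $x\in\edges{A_1}\setminus\edges{A_0}$ has $h(f(x))\in S$, hence $f(x)\notin B'$), both composites have domain $A'$ and underlying morphism $h'\circ\restr{f_0}{A'}{B'}=k'\circ\restr{g_0}{A'}{C'}$. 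For the universal property I would take any cocone $\partm{u}:B\hookleftarrow B_u\rightarrow D''$, $\partm{v}:C\hookleftarrow C_v\rightarrow D''$ with $\partm{u}\circ\partm{f}=\partm{v}\circ\partm{g}$, set $A_u\defeq\invf{f}(B_u)=\invf{g}(C_v)\subseteq A_0$, and — once $B_u\subseteq B'$ and $C_v\subseteq C'$ are secured — apply Lemma~\ref{lm-pushout-restr} a \emph{second} time to the $\MonGr$-pushout $\tuple{h',k',D}$ with the triple $A_u,B_u,C_v$ (the required equalities $\invf{(\restr{f_0}{A'}{B'})}(B_u)=A_u=\invf{(\restr{g_0}{A'}{C'})}(C_v)$ hold because $A_u\subseteq A'$). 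This carves out of $D$ the pushout of the smaller span, exhibiting $D_w\defeq h'(B_u)\cup k'(C_v)$ with $\invf{h'}(D_w)=B_u$, $\invf{k'}(D_w)=C_v$; since $u$ and $v$ agree on $A_u$ they factor through a unique $w:D_w\rightarrow D''$, and $\partm{w}:D\hookleftarrow D_w\rightarrow D''$ is the required unique mediator.

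The main obstacle is precisely the inclusions $B_u\subseteq B'$ and $C_v\subseteq C'$, i.e.\ that any cocone compatible with $\partm{f},\partm{g}$ is already undefined on every edge the construction deletes. The difficulty is that $h$ and $k$ need not be injective, so an edge of $B_u$ may be glued in $Q$ to an image of a deleted edge without itself being such an image. I expect to handle this by propagating the equality $\invf{f}(B_u)=\invf{g}(C_v)\subseteq A_0$ along the identification chains produced by the coequalizer of Lemma~\ref{lm-coeq}: an edge of $B_u$ lying outside $B'$ would be glued, through a finite $\sim$-chain alternating between $f_0(A_0)$ and $g_0(A_0)$, to some $h(f(x))\in S$ with $x\in\edges{A_1}\setminus\edges{A_2}$; walking the chain and using $\invf{f}(B_u)=\invf{g}(C_v)$ at each step forces $x\in\invf{f}(B_u)=\invf{g}(C_v)\subseteq A_2$, a contradiction. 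Once these inclusions are in place, the remaining verifications are the routine bookkeeping of partial-morphism composition and the closure properties noted above, so the construction yields pushouts in every category of Definition~\ref{def-catp}.
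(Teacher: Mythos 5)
Your construction is correct and can be completed essentially as you sketch it, but it takes a genuinely different route from the paper's proof, so a comparison is worthwhile. The paper proceeds ``restrict, then glue'': it defines $I$ as the \emph{largest} submonograph of $A_1\cap A_2$ with $\invf{f}(f(I))=I$ and $\invf{g}(g(I))=I$, then the largest $X\subseteq B$ and $Y\subseteq C$ with $\invf{f}(X)\subseteq I$ and $\invf{g}(Y)\subseteq I$, and declares the partial pushout to be the total pushout of $X\leftarrow I\rightarrow Y$ in $\MonGr$; nothing is deleted afterwards, and Lemma~\ref{lm-pushout-restr} is invoked only once, inside the universal property. You proceed ``glue, then restrict'': push out over the whole intersection $A_0=A_1\cap A_2$, then carve out the largest submonograph $D$ avoiding the set $S$ of images of one-sided edges, invoking Lemma~\ref{lm-pushout-restr} twice (once to recognize $D$ as a pushout of the restricted span, once to build the mediator). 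The price of your route is exactly the obstacle you flag, and it is where the proofs differ most: the paper obtains its analogue of your inclusions $B_u\subseteq B'$, $C_v\subseteq C'$ in two lines purely from maximality (the common inverse image of the cocone's domains is $f$- and $g$-saturated, hence contained in $I$, hence those domains are contained in $X$ and $Y$), whereas you must open the canonical coequalizer and induct along identification chains. Your sketch of that induction does close the gap: the generating relation $R$ relates only pairs $\mu_1(f_0(a))$, $\mu_2(g_0(a))$ with $a\in\edges{A_0}$, and the hypothesis $\invf{f}(B_u)=\invf{g}(C_v)$ makes membership in $B_u$, respectively $C_v$, transfer across each $R$-step in both directions, so a chain joining $h(b)$, $b\in\edges{B_u}$, to an element of $S$ would force a one-sided edge into $A_u\subseteq A_1\cap A_2$, a contradiction. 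Two details should be spelled out when writing this up: first, $b\notin B'$ only means that some edge in the \emph{adjacency closure} of $h(b)$ lies in $S$, so you must first replace $b$ by the corresponding edge of $B_u$ (legitimate because $B_u$, being a submonograph, is adjacency-closed and $h$ maps adjacencies of $B$ onto adjacencies of $Q$); second, uniqueness of the mediator requires observing that any competing domain $E\subseteq D$ with $\invf{h'}(E)=B_u$ and $\invf{k'}(E)=C_v$ equals $h'(B_u)\cup k'(C_v)=D_w$, exactly as in the paper's final step. What each approach buys: the paper's maximality definitions keep the verification short and independent of how the pushout is constructed, while your construction makes the deletion semantics of Example~\ref{ex-partialpo} completely explicit --- the result is literally the total pushout over $A_1\cap A_2$ minus the gluing-and-adjacency closure of the deleted edges --- at the cost of a combinatorial argument on the coequalizer's chains.
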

\begin{proof}
  Let $\partm{f}:A\hookleftarrow A_1\rightarrow B$ and
  $\partm{g}:A\hookleftarrow A_2\rightarrow C$. The set of submonographs
  $J\subseteq A_1\cap A_2$ such that $\invf{f}(f(J))=J$ and
  $\invf{g}(g(J))=J$ contains $\ensvide$ and is closed under
  union, hence has a greatest element denoted $I$. There is also
  a greatest submonograph $X\subseteq B$ such that
  $\invf{f}(X)\subseteq I$, that must therefore be greater than
  $f(I)$, i.e., we have $f(I)\subseteq X$ hence
  $\invf{f}(f(I))\subseteq \invf{f}(X)$ and this yields
  $\invf{f}(X)=I$. Similarly, there is a greatest submonograph
  $Y\subseteq C$ such that $\invf{g}(Y)\subseteq I$, so that $g(I)\subseteq Y$
  and $\invf{g}(Y) = I$.

  Let $f'=\restr{f}{I}{X}$, $g'=\restr{g}{I}{Y}$ and $\tuple{h,k,Q}$
  be a pushout of $\tuple{I,f',g'}$ in $\MonGr$, we claim that
  $\tuple{\partm{h},\partm{k},Q}$ is a pushout of
  $\tuple{A,\partm{f},\partm{g}}$ in $\MonGrP$, where obviously
  $\partm{h}:B\hookleftarrow X \rightarrow Q$ and
  $\partm{k}:C\hookleftarrow Y \rightarrow Q$. We first see that
  \[\partm{h}\circ\partm{f}
    = \partm{h\circ\restr{f}{\invf{f}(X)}{X}}= \partm{h\circ f'} 
    = \partm{k\circ g'} = \partm{k\circ\restr{g}{\invf{g}(Y)}{Y}}
    = \partm{k}\circ\partm{g}.\]

  We now consider any pair of partial morphisms $\partm{v}:B\hookleftarrow
  B'\rightarrow U$ and $\partm{w}:C\hookleftarrow C'\rightarrow U$ such
  that $\partm{v}\circ\partm{f} = \partm{w}\circ\partm{g}$, hence
  $v\circ \restr{f}{J}{B'} = w\circ \restr{g}{J}{C'}$ where
  $J\defeq\invf{f}(B') = \invf{g}(C')$. Since $f(J)=
  f(\invf{f}(B')) \subseteq B'$ then $J\subseteq \invf{f}(f(J))
  \subseteq \invf{f}(B') = J$, hence $\invf{f}(f(J))=J$ and similarly
  $\invf{g}(g(J)) = J$, so that $J\subseteq I$. This can be written
  $\invf{f}(B')\subseteq I$ and thus entails $B'\subseteq X$ and
  similarly $C'\subseteq Y$, hence $\invf{f'}(B') = J =
  \invf{g'}(C')$.

  We can therefore apply Lemma~\ref{lm-pushout-restr} and get that
  $\tuple{\restr{h}{B'}{Q'},\restr{k}{C'}{Q'}, Q'}$ is a pushout of
  $\tuple{J,\restr{f'}{J}{B'}, \restr{g'}{J}{C'}}$ where
  $Q'=h(B')\cup k(C')$, $\invf{h}(Q') = B'$ and $\invf{k}(Q') =
  C'$. Since $v\circ \restr{f'}{J}{B'} = v\circ \restr{f}{J}{B'} =
  w\circ \restr{g}{J}{C'} = w\circ \restr{g'}{J}{C'}$ there exists a
  unique $u:Q'\rightarrow U$ such that $u\circ \restr{h}{B'}{Q'}=v$
  and $w=u\circ \restr{k}{C'}{Q'}$. We thus have a partial morphism
  $\partm{u}:Q\hookleftarrow Q'\rightarrow U$ such that
  \[\partm{u}\circ\partm{h} = \partm{u\circ
      \restr{h}{\invf{h}(Q')}{Q'}} = \partm{u\circ
      \restr{h}{B'}{Q'}}= \partm{v} \] and similarly
  $\partm{u}\circ\partm{k} = \partm{w}$.

\begin{center}
    \begin{tikzpicture}[xscale=1,yscale=-1]
      \node (J) at (0,0) {$J$}; \node (I) at (1,1) {$I$}; \node (A) at (2,2) {$A$};
      \node (A1) at (3,2) {$A_1$}; \node (A2) at (2,3) {$A_2$};
      \node (B) at (4,2) {$B$}; \node (C) at (2,4) {$C$};
      \node (X) at (5,1) {$X$}; \node (Y) at (1,5) {$Y$}; \node (Q) at (5,5) {$Q$}; 
      \node (B') at (6,0) {$B'$}; \node (C') at (0,6) {$C'$}; \node (Q') at (6,6) {$Q'$};
      \node (U) at (7,7) {$U$}; 
      \path[->] (A1) edge node[above, font=\footnotesize] {$f$} (B); 
      \path[->] (A2) edge node[left, font=\footnotesize] {$g$} (C); 
      \path[->] (I) edge node[above, font=\footnotesize] {$f'$} (X); 
      \path[->] (I) edge node[left, font=\footnotesize] {$g'$} (Y);
      \path[->] (Y) edge node[above, font=\footnotesize] {$k$} (Q); 
      \path[->] (X) edge node[left, font=\footnotesize] {$h$} (Q); 
      \path[->] (C') edge node[above, font=\footnotesize] {$\restr{k}{C'}{Q'}$} (Q'); 
      \path[->] (B') edge node[left, font=\footnotesize] {$\restr{h}{B'}{Q'}$} (Q'); 
      \path[->] (C') edge node[below, font=\footnotesize] {$w$} (U); 
      \path[->] (B') edge node[right, font=\footnotesize] {$v$} (U); 
      \path[->] (J) edge node[above, font=\footnotesize] {$\restr{f}{J}{B'}$} (B'); 
      \path[->] (J) edge node[left, font=\footnotesize] {$\restr{g}{J}{C'}$} (C'); 
      \path[->,dashed] (Q') edge node[above, font=\footnotesize] {$u$} (U); 
      \path [arrows={Hooks[left]->}] (A1) edge (A);
      \path [arrows={Hooks[right]->}] (A2) edge (A);
      \path [arrows={Hooks[left]->}] (I) edge (A1);
      \path [arrows={Hooks[right]->}] (I) edge (A2);
      \path [arrows={Hooks[right]->}] (J) edge (I);
      \path [arrows={Hooks[right]->}] (Y) edge (C);
      \path [arrows={Hooks[left]->}] (X) edge (B);
      \path [arrows={Hooks[right]->}] (C') edge (Y);
      \path [arrows={Hooks[left]->}] (B') edge (X);
      \path [arrows={Hooks[left]->},dashed] (Q') edge (Q);
      \draw (4.5,4.9) to (4.5,4.5) to (4.9,4.5);
    \end{tikzpicture}
  \end{center}

  Suppose there is a $\partm{u'}: Q\hookleftarrow D\rightarrow U$ such that
  $\partm{u'}\circ\partm{h} = \partm{v}$ and $\partm{u'}\circ\partm{k}
  = \partm{w}$, then $u'\circ \restr{h}{\invf{h}(D)}{D} = v$ hence
  $\invf{h}(D)=B'$ and similarly $\invf{k}(D)=C'$. Since $D\subseteq Q
  = h(X)\cup k(Y)$ then
  \[D = (D\cap h(X))\cup (D\cap k(Y)) = h(\invf{h}(D)) \cup
    k(\invf{k}(D)) = h(B')\cup k(C') = Q'\]
  and we get $\partm{u'}=\partm{u}$ by the unicity of $u$.

  If $B$ and $C$ are finite (resp. standard, resp. $O$-monographs)
  then so are $X$ and $Y$, hence so is $Q$ by Theorem~\ref{th-po}.
\end{proof}

One important feature of this construction is illustrated below.

\begin{example}\label{ex-partialpo}
  Suppose there are edges $x$ of $A_1\cap A_2$ and
  $y\in\edges{A_2}\setminus \edges{A_1}$ such that $g(x)=g(y)$. If $x$
  is an edge of $I = \invf{g}(g(I))$ then so is $y$, which is
  impossible since $I\subseteq A_1\cap A_2$. Hence $x$ is not an edge
  of $I=\invf{f}(X)$ and therefore $f(x)\not\in \edges{X}$. Since $y$
  is not an edge of $I=\invf{g}(Y)$ then similarly
  $g(x)=g(y)\not\in\edges{Y}$. This means that even though $x$ has
  images by both $f$ and $g$, none of these has an image (by $h$ or
  $k$) in $Q$, i.e., they are ``deleted'' from the pushout.
\end{example}

The result of the present section can be replicated by replacing every
monograph, say $A$, by a typed monograph with a fixed type $T$, say
$a:A\rightarrow T$. But then expressions like $A\subseteq B$ are
replaced by $a\subseteq b$, which ought to be interpreted as
$A\subseteq B$ \emph{and} $a=\restr{b}{A}{}$, so that $\AlgFunc{T}{a}$
is then a subalgebra of $\AlgFunc{T}{b}$. In this way the results of
\cite{Lowe93} on categories of partial homomorphisms could be deduced
from Corollary~\ref{cr-gs2slicem}. They cannot be obtained directly
from Theorem~\ref{th-ppo}.

\section{Algebraic Transformations of Monographs}\label{sec-dpo}

Rule-based transformations of graphs are conceived as substitutions of
subgraphs (image of a left hand side of a rule) by subgraphs (image of
its right hand side). Substitutions are themselves designed as an
operation of deletion (of nodes or edges) followed by an operation of
addition. This last operation is conveniently represented as a
pushout, especially when edges are added between existing nodes
(otherwise a coproduct would be sufficient).

The operation of deletion is however more difficult to represent in
category theory, since there is no categorical notion of a
complement. This is a central and active issue in the field of
Algebraic Graph Transformation, and many definitions have been
proposed, see
\cite{CorradiniHHK06,CorradiniDEPR15,CorradiniDEPR19,BdlTE21a}. The
most common and natural one, known as the double pushout method
\cite{EhrigPS73,CorradiniMREHL97,journals/mscs/HabelMP01}, assumes the
operation of deletion as the inverse of the operation of addition.

More precisely, in the following pushout diagram
\begin{center}
  \begin{tikzpicture}[scale=1.5]
    \node (M) at (0,0) {$M$}; 
    \node (K) at (1,1) {$K$}; \node (L) at (0,1) {$L$};
    \node (D) at (1,0) {$D$}; 
    \path[->] (K) edge node[above, font=\footnotesize] {$l$} (L); 
    \path[->] (K) edge node[right, font=\footnotesize] {$k$} (D); 
    \path[->] (L) edge node[left, font=\footnotesize] {$m$} (M); 
    \path[->] (D) edge node[below, font=\footnotesize] {$f$} (M); 
    \draw (0.3,0.1) to (0.3,0.3) to (0.1,0.3);
  \end{tikzpicture}
\end{center}
we understand $M$ as the result of adding edges to $D$ as specified by
$l$ and $k$. Images of edges of $K$ are present in both $D$ and $L$,
and therefore also in $M$, without duplications (since $f\circ k =
m\circ l$). The edges that are added to $D$ are therefore the images
by $m$ of the edges of $L$ that do not occur in $l(K)$. We may then
inverse this operation and understand $D$ as the result of removing
these edges from $M$. The monograph $M$ and the morphisms $m$, $l$
then appear as the input of the operation, and the monograph $D$ and
morphisms $k$, $f$ as its output. The problem of course is that the
pushout operation is not generally bijective, hence it cannot always
be inverted. We first analyze the conditions of existence of $D$.

\begin{definition}[pushout complement, gluing condition]
  A \emph{pushout complement} of morphisms $l:K\rightarrow L$ and
  $m:L\rightarrow M$ is a monograph $D$ and a pair of morphisms
  $k:K\rightarrow D$ and $f:D\rightarrow M$ such that $\tuple{m,f,M}$
  is a pushout of $\tuple{K,l,k}$.

  The morphisms $l:K\rightarrow L$ and $m:L\rightarrow M$ satisfy the
  \emph{gluing condition} ($\glucond{l}{m}$ for short) if, for
  $L'=\edges{L}\setminus \relim{l}{\edges{K}}$,
  \begin{itemize}
  \item[{(1)}] for all $x,x'\in\edges{L}$, $m(x)=m(x')$ and $x\in
    L'$ entail $x=x'$, and
  \item[{(2)}] for all $e,e'\in\edges{M}$, $\occin{e}{M(e')}$ and
    $e\in \relim{m}{L'}$ entail $e'\in \relim{m}{L'}$.
  \end{itemize}
\end{definition}

The edges of $M$ that should be removed from $M$ to obtain $D$ are the
elements of $\relim{m}{L'}$. We may say that an edge $m(x)$ of $M$ is
\emph{marked for removal} if $x\in L'$ and \emph{marked for
  preservation} if $x\in \relim{l}{\edges{K}}$. Condition (1) of the
gluing condition states that the restriction of $m$ to
$\relim{\invf{m}}{\relim{m}{L'}}$ should be injective, or in other
words that an edge can be deleted if it is marked for removal once,
and not marked for preservation. Condition (2) states that an edge can
be deleted only if all the edges that are adjacent to it are also
deleted (otherwise these edges would be adjacent to a non existent
edge). It is obvious that this gluing condition reduces to the
standard one known on graphs, when applied to standard
$\set{0,2}$-monographs. We now prove that it characterizes the
existence of pushout complements (note that $l$ is not assumed to be
injective).

\begin{lemma}\label{lm-gc}
  The morphisms $l:K\rightarrow L$ and $m:L\rightarrow M$ have a
  pushout complement iff they satisfy the gluing condition.
\end{lemma}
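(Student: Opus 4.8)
The plan is to prove the two implications separately, using the explicit pushout construction of Lemma~\ref{lm-coeq} for the forward direction and the fact that $\edges{}$ reflects finite colimits (Lemma~\ref{lm-Epresreflcolimits}) for the converse. Throughout, $L'=\edges{L}\setminus\relim{l}{\edges{K}}$ as in the gluing condition.

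For the forward direction, suppose $\tuple{m,f,M}$ is a pushout of $\tuple{K,l,k}$. Since any pushout is isomorphic to the canonical one of Lemma~\ref{lm-coeq} via an isomorphism commuting with $m$ and $f$, and conditions (1)--(2) are invariant under such isomorphisms, I may assume $\edges{M}=(\edges{L}+\edges{D})/{\sim}$ with $m=c\circ\mu_1$, $f=c\circ\mu_2$, and $\sim$ the equivalence generated by $R=\setof{\tuple{\mu_1(l(z)),\mu_2(k(z))}}{z\in\edges{K}}$. The key observation is that for $x\in L'$ the element $\mu_1(x)$ occurs in no pair of $R$, so its $\sim$-class is the singleton $\set{\mu_1(x)}$; condition~(1) follows at once, since $m(x)=m(x')$ forces $\mu_1(x)\sim\mu_1(x')$ and hence $x=x'$. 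The same singleton remark shows $\relim{m}{L'}$ is disjoint from $\relim{f}{\edges{D}}$. For condition~(2), if $e'\in\relim{f}{\edges{D}}$, say $e'=f(d)$, then $M(e')=\seqto{f}{\alpha}\circ D(d)$ because $f$ is a morphism, so every edge occurring in $M(e')$ lies in $\relim{f}{\edges{D}}$; contrapositively, an edge $e\in\relim{m}{L'}$ occurring in some $M(e')$ forces $e'\notin\relim{f}{\edges{D}}$, hence $e'=m(x')$ for some $x'$ (as $c$ is surjective), and $x'\in L'$ since $x'=l(z)$ would give $e'=f(k(z))$. This is exactly~(2).

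For the converse, assume $\glucond{l}{m}$ and write $P\defeq\edges{M}\setminus\relim{m}{L'}$ for the preserved edges, noting that (1) yields $\relim{(m\circ l)}{\edges{K}}\subseteq P$. I construct a pushout complement explicitly. Define an equivalence $\equiv$ on $\edges{K}$ by $z\equiv z'$ iff $m(l(z))=m(l(z'))$, set $\edges{D}\defeq(\edges{K}/{\equiv})+\big(P\setminus\relim{(m\circ l)}{\edges{K}}\big)$, and let $f:\edges{D}\to\edges{M}$ send $[z]\mapsto m(l(z))$ and each $p\mapsto p$; then $f$ is injective with image $P$. Using condition~(2) in contrapositive form, every edge occurring in $M(f(d))$ lies in $P=f(\edges{D})$, so $D(d)_\iota\defeq\invf{f}(M(f(d))_\iota)$ for $\iota<\card{f(d)}$ is well-defined, making $D$ a monograph and $f:D\to M$ a morphism. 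Setting $k(z)\defeq[z]$ gives a morphism $k:K\to D$ with $f\circ k=m\circ l$. It then remains to check $\tuple{m,f,M}$ is a pushout of $\tuple{K,l,k}$; by Lemma~\ref{lm-Epresreflcolimits} it suffices to verify this in $\Sets$, i.e. that the comparison $(\edges{L}+\edges{D})/{\sim}\to\edges{M}$ induced by $m,f$ is a bijection. Surjectivity is immediate from the definitions of $P$ and $\edges{D}$; injectivity uses (1) to separate the images of $L'$ and the definition of $\equiv$ to separate the classes coming from $\relim{l}{\edges{K}}$.

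The main obstacle is this converse construction. Two points are delicate: getting $\edges{D}$ right when $l$ is not injective — which is why the quotient by $\equiv$ is needed, to let $k$ absorb any failure of $m$ to be injective on $\relim{l}{\edges{K}}$ — and, above all, seeing that the monograph structure on $D$ is definable at all, which works precisely because condition~(2) guarantees that preserved edges are adjacent only to preserved edges. Finally, since $f$ embeds $\edges{D}$ into $\edges{M}$ while preserving length, $\trace{D}\subseteq\trace{M}$ and $D$ is finite, standard, or an $O$-monograph whenever $M$ is, so the equivalence transfers to each subcategory of Definition~\ref{def-catmono}.
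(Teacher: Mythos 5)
Your proof is correct, and the necessity half coincides with the paper's: both pass w.l.o.g. to the canonical coproduct-plus-coequalizer pushout and exploit that $\mu_1(x)$ is alone in its $\sim$-class when $x\in L'$, deriving (1), the disjointness of $\relim{m}{L'}$ and $\relim{f}{\edges{D}}$, and then (2). In the sufficiency half your underlying idea is also the paper's — the preserved edges $\edges{M}\setminus\relim{m}{L'}$ carry a monograph structure thanks to (2), and (1) forces $\relim{(m\circ l)}{\edges{K}}$ inside them — but the execution genuinely differs in two ways. First, the paper takes $D$ literally as a submonograph of $M$, with $f$ the canonical injection and $k\defeq\restr{(m\circ l)}{\edges{K}}{\edges{D}}$; no quotient of $\edges{K}$ is needed even when $l$ is not injective, so your $(\edges{K}/{\equiv})$-plus-remainder construction is an isomorphic repackaging (transport your $D$ along $f$ and you recover the paper's), not a necessity as your closing paragraph suggests. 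Second, and more substantively, the paper verifies the universal property by hand, constructing the mediating morphism $h:M\rightarrow M'$ by cases on $\edges{M}=\edges{D}\uplus\relim{m}{L'}$ and checking uniqueness and that $h$ is a morphism; you instead invoke Lemma~\ref{lm-Epresreflcolimits} to reduce the pushout property to a bijection check on underlying sets. That is a legitimate and rather clean use of infrastructure the paper has already built (and your bijection argument — surjectivity from $\edges{M}=\relim{m}{L'}\cup P$, injectivity from (1), injectivity of $f$, and the definition of $\equiv$ — carries essentially the same combinatorial content as the paper's case analysis of $h$). What the paper's version buys in exchange is a normal form exploited downstream: its complement comes with $f$ a canonical injection and $\relim{f}{\edges{D}}$ disjoint from $\relim{m}{L'}$, facts fed directly into Corollary~\ref{cr-detimD} and Definition~\ref{def-dpo}, whereas your $D$ must first be transported onto its image. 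One small debt in your write-up: the claim that $k(z)\defeq[z]$ is a morphism of monographs deserves a line — it follows since $\seqto{f}{\alpha}$ is injective and $f\circ k=m\circ l$ is a morphism.
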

\begin{proof}
  \emph{Necessary condition.} We assume w.l.o.g. that the pushout
  $\tuple{m,f,M}$ of $\tuple{K,l,k}$ is obtained by canonical
  construction, i.e., let $\tuple{L+D,\mu_1,\mu_2}$ be the coproduct
  of $\tuple{L,D}$, $\tuple{M,c}$ bet the coequalizer of
  $\tuple{\mu_1\circ l,\mu_2\circ k}$, $m=c\circ\mu_1$ and
  $f=c\circ\mu_2$. Thus $\edges{M}$ is the quotient of
  $\edges{L}+\edges{D}$ by the equivalence relation $\sim$
  generated by
  $R=\setof{\tuple{\mu_1\circ l(z),\mu_2\circ
      k(z)}}{z\in\edges{K}}$. Let
  $L'=\edges{L}\setminus \relim{l}{\edges{K}}$, we first prove
  {(1)} and then {(2)}.

  \begin{center}
    \begin{tikzpicture}
      \node (M) at (0,0) {$M$}; \node (LD) at (1,1) {$L+D$};
      \node (K) at (2,2) {$K$}; \node (L) at (0,2) {$L$};
      \node (D) at (2,0) {$D$};
      \path[->] (K) edge node[above, font=\footnotesize] {$l$} (L); 
      \path[->] (K) edge node[right, font=\footnotesize] {$k$} (D); 
      \path[->] (L) edge node[left, font=\footnotesize] {$m$} (M); 
      \path[->] (L) edge node[right,near start, font=\footnotesize] {$\mu_1$} (LD); 
      \path[->] (D) edge node[right,near end, font=\footnotesize] {$\mu_2$} (LD); 
      \path[->] (D) edge node[below, font=\footnotesize] {$f$} (M); 
      \path[->] (LD) edge node[above,near end, font=\footnotesize] {$c$} (M); 
    \end{tikzpicture}
  \end{center}

  For all $x,x'\in\edges{L}$, if $x\in L'$ then
  $x\not\in \relim{l}{\edges{K}}$, hence $\mu_1(x)$ is not related by
  $R$ to any element and is therefore alone in its
  $\sim$-class. Hence\footnote{Another consequence is that $\mu_1(x)$
    is not related by $\sim$ to any element of
    $\relim{\mu_2}{\edges{D}}$, hence that
    $m(x)\not\in \relim{f}{\edges{D}}$.\label{fn-1}} if $m(x)=m(x')$
  then $\mu_1(x)\sim \mu_1(x')$ and therefore $x=x'$.
  
  For all $e,e'\in \edges{M}$ such that $\occin{e}{M(e')}$ and
  $e\in \relim{m}{L'}$, let $x\in L'$ such that $e=m(x)$. Suppose that
  $e'=f(y')$ for some $y'\in\edges{D}$ then
  $M(e')=\seqto{f}{\alpha}\circ D(y')$ hence there is a
  $\occin{y}{D(y')}$ such that $e=f(y)$, hence
  $m(x)\in \relim{f}{\edges{D}}$ which is impossible by
  note~\ref{fn-1}. Since $M=f(D)\cup m(L)$
  there must be a
  $x'\in\edges{L}$ such that $e'=m(x')$. Suppose now that $x'=l(z)$
  for some $z\in\edges{K}$ then
  $e'=m(l(z)) = f(k(z))\in \relim{f}{\edges{D}}$, and we have seen
  this is impossible. Hence $x'\not\in \relim{l}{\edges{K}}$ and
  therefore $e'\in \relim{m}{L'}$.

  \emph{Sufficient condition.} We assume {(1)} and {(2)}, let $\alpha$
  be an ordinal for $M$,
  $\edges{D}\defeq \edges{M}\setminus \relim{m}{L'}$ and
  $D(e)\defeq M(e)$ for all $e\in\edges{D}$; by {(2)} this is an
  $\edges{D}$-sequence, hence $D$ is a submonograph of $M$ and the
  canonical injection $f:D\hookrightarrow M$ is a morphism. By {(1)}
  we have $\relim{m}{L'}\cap\relim{m\circ l}{\edges{K}} = \ensvide$,
  hence $\relim{m\circ l}{\edges{K}}\subseteq \edges{D}$ and we let
  $k\defeq \restr{(m\circ l)}{\edges{K}}{\edges{D}}$ so that
  $f\circ k = m\circ l$. We have
  \[\seqto{k}{\alpha} \circ K = \seqto{m}{\alpha}\circ
    \seqto{l}{\alpha}\circ K = \seqto{m}{\alpha}\circ L\circ l= M\circ
    m\circ l = D\circ k\] hence $k:K\rightarrow D$ is a morphism.

  \begin{center}
    \begin{tikzpicture}
      \node (M) at (1,1) {$M$}; \node (M') at (0,0) {$M'$}; 
      \node (K) at (3,3) {$K$}; \node (L) at (1,3) {$L$};
      \node (D) at (3,1) {$D$};
      \path[->] (K) edge node[above, font=\footnotesize] {$l$} (L); 
      \path[->] (K) edge node[right, font=\footnotesize] {$k$} (D); 
      \path[->] (L) edge node[right, font=\footnotesize] {$m$} (M); 
      \path[arrows={Hooks[left]->}] (D) edge node[above, font=\footnotesize] {$f$} (M); 
      \path[->] (L) edge node[left, font=\footnotesize] {$m'$} (M'); 
      \path[->] (D) edge node[below, font=\footnotesize] {$f'$} (M');
      \path[->,dashed] (M) edge node[above, font=\footnotesize] {$h$} (M'); 
      \end{tikzpicture}
  \end{center}

  To prove that $\tuple{m,f,M}$ is a pushout of $\tuple{K,l,k}$, let
  $m':L\rightarrow M'$ and $f':D\rightarrow M'$ be morphisms such that
  $m'\circ l = f'\circ k$. Since
  $\edges{M}=\edges{D}\uplus \relim{m}{L'}$  we define
  $h:\edges{M}\rightarrow \edges{M'}$ as \[h(e)\defeq \left\{
      \begin{array}[c]{ll}
        f'(e) & \text{if } e\in\edges{D}\\
        m'(x) & \text{if } x\in L'\text{ and } e=m(x)
      \end{array}\right.\]
  since $x$ is unique by (1). For all $x\in\edges{L}$, if $x\in L'$
  then $h\circ m(x)= m'(x)$, otherwise there is a $z\in\edges{K}$ such
  that $x=l(z)$ and then
  \[h\circ m(x)= h\circ m\circ l(z) = h\circ f\circ k(z) = f'\circ k(z)
    = m'\circ l(z) = m'(x),\] hence $h\circ m = m'$. It is obvious
  that $h\circ f=f'$ and that these two equations uniquely determine
  $h$. Proving that $h:M\rightarrow M'$ is a morphism is straightforward.
\end{proof}

Note that $D$ is finite whenever $M$ is finite. This proves that this
gluing condition is also valid in $\FMonGr$, and it is obviously also
the case in $\SMonGr$, $\OMonGr{O}$ and $\OSMonGr{O}$ for every set
$O$ of ordinals. It therefore characterizes the existence of $D$, but
by no means its unicity.

It is well known (and easy to see) that in the category of sets one
may find pushout complements with non isomorphic sets $D$, this is
therefore also the case for monographs (since $\Sets\iso
\OMonGr{1}$). An analysis of the proof of Lemma~\ref{lm-gc} (necessary
condition) however yields that $\relim{f}{\edges{D}}$ is invariant.

\begin{corollary}\label{cr-detimD}
  If $D$, $k:K\rightarrow D$, $f:D\rightarrow M$ is a pushout
  complement of $l:K\rightarrow L$, $m:L\rightarrow M$ then
  $\relim{f}{\edges{D}} = \edges{M}\setminus\relim{m}{L'}$, where
  $L'=\edges{L}\setminus \relim{l}{\edges{K}}$.
\end{corollary}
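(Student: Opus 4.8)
The plan is to prove the two inclusions separately, relying throughout on the fact that $\edges{}$ preserves pushouts (Lemma~\ref{lm-Epresreflcolimits}), so that the underlying square is a pushout in $\Sets$. In particular its two legs are jointly surjective on edges, which gives $\edges{M}=\relim{m}{\edges{L}}\cup\relim{f}{\edges{D}}$, and I will freely use the commutativity $m\circ l=f\circ k$. Set $L'=\edges{L}\setminus \relim{l}{\edges{K}}$ as in the statement.

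First I would treat the inclusion $\edges{M}\setminus\relim{m}{L'}\subseteq\relim{f}{\edges{D}}$. Take $e\in\edges{M}$ with $e\notin\relim{m}{L'}$. By joint surjectivity, either $e\in\relim{f}{\edges{D}}$, and we are done, or $e=m(x)$ for some $x\in\edges{L}$. In the latter case $x\notin L'$ (otherwise $e\in\relim{m}{L'}$), so $x\in\relim{l}{\edges{K}}$, say $x=l(z)$ with $z\in\edges{K}$; then $e=m(l(z))=f(k(z))\in\relim{f}{\edges{D}}$. Note that this inclusion holds for \emph{every} pushout, with no appeal to the gluing condition.

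The reverse inclusion $\relim{f}{\edges{D}}\subseteq\edges{M}\setminus\relim{m}{L'}$ amounts to the disjointness $\relim{f}{\edges{D}}\cap\relim{m}{L'}=\ensvide$, and this is where the real work lies. Here I would reduce to the canonical pushout: since $\tuple{m,f,M}$ is a pushout of the \emph{fixed} span $\tuple{K,l,k}$, it is unique up to a unique comparison isomorphism $\phi$ that commutes with both legs, and $\phi$ carries the images $\relim{f}{\edges{D}}$ and $\relim{m}{L'}$ bijectively onto the corresponding images of any other pushout of the same span. Hence I may assume w.l.o.g. that $\tuple{m,f,M}$ is the canonical pushout built by coproduct and coequalizer in the proof of Lemma~\ref{lm-gc}. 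For that construction the required disjointness is precisely the content of note~\ref{fn-1}: if $x\in L'$ then $\mu_1(x)$ is isolated in its $\sim$-class, so $m(x)\notin\relim{f}{\edges{D}}$. Combining the two inclusions yields the claimed equality $\relim{f}{\edges{D}}=\edges{M}\setminus\relim{m}{L'}$.

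I expect the only delicate point to be the \emph{w.l.o.g.}\ reduction in the last paragraph. One must be explicit that a pushout complement is by definition a pushout of the given span $\tuple{K,l,k}$, so that uniqueness-up-to-isomorphism of pushouts applies, and that the two image sets in question transport along the comparison isomorphism (so that disjointness for the canonical construction forces disjointness for the given one). Everything else is routine set-image bookkeeping using joint surjectivity and $m\circ l=f\circ k$.
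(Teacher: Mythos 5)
Your proof is correct and follows essentially the same route as the paper's: joint surjectivity of the pushout legs on edges, the commutativity $m\circ l=f\circ k$ to capture the non-$L'$ part of $\relim{m}{\edges{L}}$ inside $\relim{f}{\edges{D}}$, and the disjointness $\relim{m}{L'}\cap\relim{f}{\edges{D}}=\ensvide$ obtained from note~\ref{fn-1} of Lemma~\ref{lm-gc}. The only difference is presentational: you make explicit the reduction to the canonical pushout and the transport of the two image sets along the comparison isomorphism, whereas the paper leaves this implicit (its citation of note~\ref{fn-1} silently inherits the ``w.l.o.g.\ canonical construction'' assumption under which that note was proved).
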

\begin{proof}
  Since $\relim{m}{\edges{L}}\setminus\relim{(m\circ
  l)}{\edges{K}}\subseteq \relim{m}{L'}$ then
\[\relim{m}{\edges{L}}\setminus \relim{m}{L'} \subseteq \relim{(m\circ
    l)}{\edges{K}} = \relim{(f\circ k)}{\edges{K}} \subseteq
  \relim{f}{\edges{D}}.\] By property of pushouts we have
$\edges{M}=\relim{f}{\edges{D}}\cup \relim{m}{\edges{L}}$, and by
note~\ref{fn-1} we have
$\relim{m}{L'}\cap \relim{f}{\edges{D}}=\ensvide$, hence
\[\edges{M}\setminus\relim{m}{L'} = (\relim{f}{\edges{D}}\setminus
\relim{m}{L'}) \cup (\relim{m}{\edges{L}}\setminus \relim{m}{L'}) =
\relim{f}{\edges{D}}.\]
\end{proof}

One way of ensuring the unicity of $D$ (up to isomorphism) is to
assume that $l$ is injective: this is a well-known consequence of
Theorem~\ref{th-adhesive} (see \cite{LackS05}). However, an analysis
of the construction of $D$ in the proof of Lemma~\ref{lm-gc}
(sufficient condition) shows that we can always build $D$ as a
submonograph of $M$, hence we may as well assume that $f$ is a
canonical injection and avoid restrictions on $l$. We therefore adopt
a restricted notion of double pushout transformation compared to the
standard one.

\begin{definition}[span rules $\tuple{l,r}$, matching $m$, relation $\dpo{l}{r}{m}$]\label{def-dpo}
  A \emph{span rule} is a pair $\tuple{l,r}$ of morphisms
  $l:K\rightarrow L$, $r:K\rightarrow R$ with the same domain $K$. A
  \emph{matching} of $\tuple{l,r}$ in an object $M$ is a morphism
  $m:L\rightarrow M$. For any object $N$ we write
  $M\dpo{l}{r}{m} N$ if there exists a double-pushout diagram
  \begin{center}
    \begin{tikzpicture}[scale=1.5]
      \node (M) at (0,0) {$M$}; \node (R) at (2,1) {$R$};
      \node (K) at (1,1) {$K$}; \node (L) at (0,1) {$L$};
      \node (D) at (1,0) {$D$}; \node (N) at (2,0) {$N$};
      \path[->] (K) edge node[above, font=\footnotesize] {$l$} (L); 
      \path[->] (K) edge node[right, font=\footnotesize] {$k$} (D); 
      \path[->] (L) edge node[left, font=\footnotesize] {$m$} (M); 
      \path[->] (K) edge node[above, font=\footnotesize] {$r$} (R); 
      \path[->] (D) edge node[below, font=\footnotesize] {$g$} (N); 
      \path[arrows={Hooks[left]->}] (D) edge node[below, font=\footnotesize] {$f$} (M);
      \path[->] (R) edge node[right, font=\footnotesize] {$n$} (N);
      \draw (0.3,0.1) to (0.3,0.3) to (0.1,0.3);
      \draw (1.7,0.1) to (1.7,0.3) to (1.9,0.3);
    \end{tikzpicture}
  \end{center}
  where $f$ is a canonical injection.
\end{definition}

We easily see that the relation $\dpo{l}{r}{m}$ is 
deterministic up to isomorphism.

\begin{corollary}\label{th-detdpo}
  $M\dpo{l}{r}{m} N$ and $M\dpo{l}{r}{m} N'$ entail $N\iso N'$.
\end{corollary}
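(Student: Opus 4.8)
The plan is to exploit the restriction built into Definition~\ref{def-dpo} — that $f$ is a canonical injection — to show that the entire left-hand pushout complement is \emph{literally} unique, not merely unique up to isomorphism. This is the essential point, since pushout complements are in general not determined even up to isomorphism (as noted just before Corollary~\ref{cr-detimD}), so a naive appeal to categorical uniqueness would fail.

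First I would analyse the left square. Suppose $M\dpo{l}{r}{m}N$ via a double-pushout diagram with pushout complement $D$, $k:K\rightarrow D$, $f:D\rightarrow M$, and that $M\dpo{l}{r}{m}N'$ via a diagram with $D'$, $k'$, $f'$. Since $f$ is a canonical injection, $D$ is a submonograph of $M$ and $\relim{f}{\edges{D}}=\edges{D}$; by Corollary~\ref{cr-detimD} this yields $\edges{D}=\edges{M}\setminus\relim{m}{L'}$, where $L'=\edges{L}\setminus\relim{l}{\edges{K}}$. The same computation applies to $D'$, so $\edges{D}=\edges{D'}$. Because $D$ and $D'$ are both submonographs of $M$, for every edge $e$ in this common edge set we have $\tuple{e,D(e)}\in M$ and $\tuple{e,D'(e)}\in M$, whence $D(e)=M(e)=D'(e)$ by functionality of $M$. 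Thus $D=D'$ as monographs, and $f=f'$ is the same canonical injection $D\hookrightarrow M$.

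Next I would pin down $k$. From $f\circ k=m\circ l$ and the injectivity of $f$ it follows that $k$ is forced to be $\restr{(m\circ l)}{\edges{K}}{\edges{D}}$; the identical reasoning applied to the second diagram gives $k'$ the same value, so $k=k'$. Consequently the span $\tuple{K,r,k}$ is identical in the two derivations.

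Finally, $\tuple{n,g,N}$ is a pushout of $\tuple{K,r,k}$ and $\tuple{n',g',N'}$ is a pushout of $\tuple{K,r,k'}=\tuple{K,r,k}$; since pushouts of a fixed span are unique up to isomorphism (the relevant pushouts exist by Theorem~\ref{th-po}), we conclude $N\iso N'$. The main obstacle is precisely the non-uniqueness of pushout complements in general: the argument works only because the convention that $f$ be a canonical injection, combined with the invariance of $\relim{f}{\edges{D}}$ from Corollary~\ref{cr-detimD}, rigidifies $D$ to one specific submonograph of $M$, after which the rest of the diagram is forced.
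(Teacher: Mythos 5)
Your proof is correct and follows essentially the same route as the paper's: invoke Corollary~\ref{cr-detimD} together with the canonical-injection convention to force $\edges{D}=\edges{M}\setminus\relim{m}{L'}=\edges{D'}$, hence $D=D'$, $f=f'$, $k=k'$, and then conclude $N\iso N'$ by uniqueness of pushouts. Your only addition is making explicit why equal edge sets force $D=D'$ (functionality of $M$), a step the paper leaves implicit.
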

\begin{proof}
  We have two pushout complements $k:K\rightarrow D$,
  $f:D\hookrightarrow M$ and $k':K\rightarrow D'$,
  $f':D'\hookrightarrow M$ of $m$, $l$, hence by
  Corollary~\ref{cr-detimD}\[\edges{D} = \relim{f}{\edges{D}} =
    \edges{M}\setminus\relim{m}{L'} = \relim{f'}{\edges{D'}} =
    \edges{D'}\] hence $D=D'$, $f=f'$,
  $k = \restr{(f\circ k)}{K}{D} = \restr{(m\circ l)}{K}{D'} =
  \restr{(f'\circ k')}{K}{D'} = k'$, and therefore $N\iso N'$ by
  general property of pushouts.
\end{proof}

It is obvious by Theorem~\ref{th-po} and by the construction of $D$ in
Lemma~\ref{lm-gc} that, in the categories of
Definition~\ref{def-catmono}, there exists a $N$ such that
$M\dpo{l}{r}{m} N$ if and only if $l$ and $m$ satisfy the gluing
condition. This means in particular that an edge $e$ of $M$ may be
deleted only if it is explicitly marked for removal, i.e., if there
is an edge $x\in L'$ such that $m(x)=e$. All edges that are not marked
for removal are guaranteed to be preserved. This conservative
semantics for transformation rules is extremely safe but imposes a
discipline of programming that may be tedious.

As noted in Example~\ref{ex-partialpo}, pushout of partial morphisms
have a potential of removing edges. Since such pushouts always exist,
they can be used to define transformations that are not restricted by
the gluing condition. This is the idea of the single pushout method,
that was initiated in \cite{R84:tcs} and fully developed in
\cite{phd/dnb/Lowe91,Lowe93}.

\begin{definition}[partial rules $\partm{r}$, relation
  $\spo{\partm{r}}{m}$, rule $\partr{l}{r}$]\label{def-spo}
  A \emph{partial rule} is a partial morphism
  $\partm{r}:L\hookleftarrow K\rightarrow R$. A \emph{matching} of
  $\partm{r}$ in a monograph $M$ is a morphism $m:L\rightarrow M$. For
  any monograph $N$ we write $M\spo{\partm{r}}{m} N$ if there exist partial
  morphisms $\partm{g}$ and $\partm{n}$ such that
  $\tuple{\partm{n},\partm{g},N}$ is a pushout of
  $\tuple{L,\partm{r},\partm{m}}$.

  To any span rule $\tuple{l,r}$ where $l:K\rightarrow L$,
  $r:K\rightarrow R$ we associate a partial rule
  $\partr{l}{r} \defeq\partm{r'}:L\hookleftarrow l(K)\rightarrow R'$ such
  that $\tuple{q,r',R'}$ is a pushout of
  $\tuple{K,r,l'}$ where $l'\defeq \restr{l}{K}{l(K)}$.
  \begin{center}
    \begin{tikzpicture}[scale=1.5]
      \node (M) at (0,0) {$L$}; \node (R) at (2,1) {$R$};
      \node (K) at (1,1) {$K$}; \node (L) at (0,1) {$L$};
      \node (D) at (1,0) {$l(K)$}; \node (N) at (2,0) {$R'$};
      \node at (-1,1){$\tuple{l,r}$}; \node at (-1,0){$\partr{l}{r}$};
      \path[->] (K) edge node[above, font=\footnotesize] {$l$} (L); 
      \path[->] (K) edge node[left, font=\footnotesize] {$l'$} (D); 
      \path[->] (L) edge node[left, font=\footnotesize] {$\id{L}$} (M); 
      \path[->] (K) edge node[above, font=\footnotesize] {$r$} (R); 
      \path[->] (D) edge node[below, font=\footnotesize] {$r'$} (N); 
      \path [arrows={Hooks[left]->}] (D) edge (M); 
      \path[->] (R) edge node[right, font=\footnotesize] {$q$} (N);
      \draw (1.7,0.1) to (1.7,0.3) to (1.9,0.3);
    \end{tikzpicture}
  \end{center}
\end{definition}

The relation $\spo{\partm{r}}{m}$ is also deterministic up to
isomorphism since $N$ is obtained as a pushout.  Obviously a morphism
$m$ is a matching of $\tuple{l,r}$ in $M$ iff it is a matching of
$\partr{l}{r}$ in $M$. The partial rule $\partr{l}{r}$ is designed to
perform the same transformation as the span rule $\tuple{l,r}$. We
prove that this is indeed the case when the gluing condition holds.

\begin{theorem}
  For any span rule $\tuple{l,r}$, monographs $M$, $N$ and matching
  $m$ of $\tuple{l,r}$ in $M$, we have
  \[M\dpo{l}{r}{m} N\ \text{ iff }\ M\spo{\partr{l}{r}}{m} N \text{ and }\glucond{l}{m}.\]
\end{theorem}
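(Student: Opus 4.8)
The plan is to reduce both directions to a single evaluation of the partial pushout underlying $\spo{\partr{l}{r}}{m}$, using the gluing condition as the bridge. First I would settle $\glucond{l}{m}$: in the forward direction the left square of the double pushout exhibits $\tuple{D,k,f}$ as a pushout complement of $l$ and $m$, so $\glucond{l}{m}$ holds by Lemma~\ref{lm-gc}; in the backward direction it is assumed. Hence in both cases I may work under $\glucond{l}{m}$, and---by the sufficient-condition construction in Lemma~\ref{lm-gc} together with the invariance established in Corollary~\ref{cr-detimD}---I may assume the pushout complement is the submonograph $D\subseteq M$ with $\edges{D}=\edges{M}\setminus\relim{m}{L'}$, where $L'=\edges{L}\setminus\relim{l}{\edges{K}}$, with $f:D\hookrightarrow M$ the canonical injection and $k=\restr{(m\circ l)}{\edges{K}}{\edges{D}}$. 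This is exactly the left square of the double pushout.

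The heart of the matter is to compute the partial pushout of $\partr{l}{r}=\partm{r'}:L\hookleftarrow l(K)\rightarrow R'$ and $\partm{m}:L\hookleftarrow L\rightarrow M$ by the construction of Theorem~\ref{th-ppo}. The greatest common core $I$ lies in $l(K)\cap L=l(K)$. I would check that $\invf{r'}(r'(l(K)))=l(K)$ holds automatically, since $l(K)$ is the whole domain of $r'$, and that $\invf{m}(m(l(K)))=l(K)$ follows from clause~(1) of the gluing condition, because no edge of $L'$ can share its $m$-image with an edge of $l(K)$; hence $I=l(K)$. The greatest $X\subseteq R'$ with $\invf{r'}(X)\subseteq I$ is then all of $R'$, so $X=R'$; and the greatest $Y\subseteq M$ with $\invf{m}(Y)\subseteq I$ is exactly $\edges{M}\setminus\relim{m}{L'}=\edges{D}$, where clause~(1) excludes $\relim{m}{L'}$ and clause~(2) is precisely what guarantees that this set is closed under adjacency, i.e.\ is a submonograph, so $Y=D$. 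Therefore $\spo{\partr{l}{r}}{m}$ delivers $N$ as the ordinary $\MonGr$-pushout of $\tuple{l(K),r',\restr{m}{l(K)}{D}}$.

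It then remains to identify this pushout with the right square of the double pushout. Since $\partr{l}{r}$ was defined so that $R'$ is the pushout of $\tuple{K,r,l'}$ with $l'=\restr{l}{K}{l(K)}$, and since $k=\restr{m}{l(K)}{D}\circ l'$ (as $f$ is the inclusion and $f\circ k=m\circ l$), I would apply the pushout-pasting lemma to the two squares sharing the morphism $r':l(K)\rightarrow R'$---the defining pushout of $\tuple{K,r,l'}$ and the pushout of $\tuple{l(K),r',\restr{m}{l(K)}{D}}$ just obtained. Pasting yields that the outer rectangle is a pushout of $\tuple{K,k,r}$, which is exactly the right square of the double pushout; by uniqueness of pushouts the single- and double-pushout results coincide up to isomorphism, and composing with this isomorphism gives $M\dpo{l}{r}{m}N$ from $M\spo{\partr{l}{r}}{m}N$ and conversely. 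The step I expect to be the main obstacle is the explicit evaluation of $I$, $X$ and $Y$ in the construction of Theorem~\ref{th-ppo} and, in particular, pinning down that the two clauses of the gluing condition are exactly what force $Y=D$---clause~(1) giving disjointness/injectivity on the removal set $\relim{m}{L'}$ and clause~(2) giving submonograph closure; once $Y=D$ and $I=l(K)$ are established, the pasting step is routine.
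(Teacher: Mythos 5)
Your proposal follows essentially the same route as the paper's own proof: both reduce the statement to evaluating the construction of Theorem~\ref{th-ppo} on $\partr{l}{r}$ and $\partm{m}$ (showing $I=l(K)$ via clause (1) of the gluing condition, $X=R'$, and $Y=D$ via Corollary~\ref{cr-detimD} with clause (2) ensuring $D$ is a submonograph), identify $\restr{m}{l(K)}{D}\circ l'=k$, and then relate the resulting $\MonGr$-pushout of $\tuple{l(K),r',\restr{m}{l(K)}{D}}$ to the right-hand DPO square by pasting through the defining pushout $\tuple{q,r',R'}$. The only cosmetic difference is that the paper concludes each direction directly by pushout composition/decomposition on the given $N$, whereas you invoke uniqueness of pushouts up to isomorphism together with iso-closure of the two relations, which is equally valid.
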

\begin{proof}
  Let $R'$, $l'$, $q$ and $r'$ be as in Definition~\ref{def-spo}. We
  first compute the pushout of $\partr{l}{r}$ and $\partm{m}$
  according to the construction in Lemma~\ref{th-ppo}, by assuming the
  gluing condition $\glucond{l}{m}$ and that $D\subseteq M$,
  $k:K\rightarrow D$, $f:D\hookrightarrow M$ is a pushout
  complement of $l$, $m$.

  Let $I$ be the greatest submonograph of $l(K)\cap L$ such that
  $\invf{r'}(r'(I))=I$ and $\invf{m}(m(I))=I$. By $\glucond{l}{m}$ (1)
  we have for all $x\in\edges{L}$ that
  $m(x)\in \relim{m}{\relim{l}{\edges{K}}}$ entails
  $x\not\in L' = \edges{L}\setminus \relim{l}{\edges{K}}$, i.e.,
  $x\in\relim{l}{\edges{K}}$, hence $\invf{m}(m(l(K))) \subseteq l(K)$
  and since the reverse inclusion is always true we get
  $I=l(K)$. Hence the greatest monograph $X\subseteq R'$ such that
  $\invf{r'}(X)\subseteq I$ is $R'$.

 \begin{center}
    \begin{tikzpicture}[xscale=1,yscale=-1]
      \node (K) at (1,0) {$K$}; \node (R) at (5,0) {$R$};
      \node (I) at (1,1) {$l(K)$}; \node (A) at (2,2) {$L$};
      \node (A1) at (3,2) {$l(K)$}; \node (A2) at (2,3) {$L$};
      \node (B) at (4,2) {$R'$}; \node (C) at (2,4) {$M$};
      \node (X) at (5,1) {$R'$}; \node (Y) at (1,5) {$D$}; \node (Q) at (5,5) {$N$}; 
      \path[->] (A1) edge node[above, font=\footnotesize] {$r'$} (B); 
      \path[->] (A2) edge node[left, font=\footnotesize] {$m$} (C); 
      \path[->] (I) edge node[above, font=\footnotesize] {$r'$} (X); 
      \path[->] (I) edge node[left, font=\footnotesize] {$m'$} (Y);
      \path[->] (Y) edge node[below, font=\footnotesize] {$g$} (Q); 
      \path[->] (X) edge node[right, font=\footnotesize] {$n'$} (Q);
      \path[->] (K) edge node[above, font=\footnotesize] {$r$} (R);
      \path[->] (K) edge node[right, font=\footnotesize] {$l'$} (I);
      \path[->] (R) edge node[right, near end, font=\footnotesize] {$q$} (X);
      \path[->] (R) edge [bend right] node[right,font=\footnotesize] {$n$} (Q);
      \path[->] (K) edge [bend left] node[left,font=\footnotesize] {$k$} (Y);
      \path [arrows={Hooks[left]->}] (A1) edge (A);
      \path [arrows={Hooks[right]->}] (A2) edge (A);
      \path [arrows={Hooks[left]->}] (I) edge (A1);
      \path [arrows={Hooks[right]->}] (I) edge (A2);
      \path [arrows={Hooks[right]->}] (Y) edge node[right, font=\footnotesize] {$f$} (C);
      \path [arrows={Hooks[left]->}] (X) edge (B);
      \draw (4.5,0.9) to (4.5,0.5) to (4.9,0.5);
    \end{tikzpicture}
  \end{center}

  Let $Y$ be the greatest submonograph of $M$ such that
  $\invf{m}(Y)\subseteq l(K)$, this entails
  $\relim{\invf{m}}{\edges{Y}}\cap L'=\ensvide$, hence $\edges{Y}\cap
  \relim{m}{L'}=\ensvide$ and by Corollary~\ref{cr-detimD} $Y\subseteq
  f(D)=D$. Conversely, for all
  $x\in\relim{\invf{m}}{\edges{D}} =
  \relim{\invf{m}}{\edges{M}\setminus\relim{m}{L'}}$ we have
  $m(x)\not\in\relim{m}{L'}$, hence by $\glucond{l}{m}$ (1)
  $x\not\in L'$ and thus $x\in\relim{l}{\edges{K}}$, so that
  $\invf{m}(D)\subseteq l(K)$. Hence $D\subseteq Y$ and we get
  $Y= D$.

  The pushout of $\partr{l}{r}$ and $\partm{m}$ is therefore obtained
  from the pushout of $r'$ and $m'\defeq\restr{m}{l(K)}{D}$.  Besides,
  we have
  $m'\circ l'=\restr{(m\circ l)}{K}{D} = \restr{(f\circ k)}{K}{D}=k$.

  \emph{Sufficient condition.} We assume $M\dpo{l}{r}{m} N$ and the
  diagram in Definition~\ref{def-dpo}. By Lemma~\ref{lm-gc} we have
  $\glucond{l}{m}$. By the above we get
  $(g\circ m')\circ l' = g\circ k = n\circ r$, and since
  $\tuple{q,r',R'}$ is a pushout of $\tuple{K,r,l'}$ then there exists
  a unique $n':R'\rightarrow N$ such that $n'\circ r'=g\circ m'$ and
  $n'\circ q = n$. Since $\tuple{n,g,N}$ is a pushout of
  $\tuple{K,r,k}$ then by pushout decomposition $\tuple{n',g,N}$ is a
  pushout of $\tuple{l(K),r',m'}$, hence $M\spo{\partr{l}{r}}{m} N$.

  \emph{Necessary condition.} By $\glucond{l}{m}$ and Lemma~\ref{lm-gc}
  we can build a pushout complement $D\subseteq M$,
  $k:K\rightarrow D$, $f:D\hookrightarrow M$ of $l$, $m$. By
  $M\spo{\partr{l}{r}}{m} N$ and the above there is a pushout
  $\tuple{n',g,N}$ of $\tuple{l(K),r',m'}$, hence by pushout
  composition $\tuple{N,n'\circ q,g}$ is a pushout of $\tuple{K,r,k}$,
  hence $M\dpo{l}{r}{m} N$.
\end{proof}

Note that any partial rule $\partm{r}:L\hookleftarrow K\rightarrow R$ can
be expressed as $\partm{r}=\partr{j}{r}$ where $j:K\hookrightarrow L$ is
the canonical injection. Thus, provided the gluing condition holds,
single and double pushout transformations are equivalent. Single
pushout transformations are more expressive since they also apply when
the gluing condition does not hold, as illustrated in the following
example.

\begin{example}
  We consider the following ``loop removing'' rule:
  \begin{center}
    \begin{tikzpicture}
      [xscale=1.5,yscale=0.5]
      \node at (-1.4,1.3) {$L$}; \node at (0,1.3) {$K$}; \node at (1.4,1.3) {$R$};
      \node (L) at (-1.4,0) {
        \raisebox{-1pt}{\begin{tikzpicture}[point/.style={circle,inner sep=0pt,minimum
              size=3pt,fill=black},scale=0.7,> /.tip={Stealth[width=5pt,
              length=4pt]}] 
            \node (N1) [point] at (0,0) {};
            \path[-,out=60,in=90,distance=0.5cm] (N1) edge (1,0);
            \path[->,out=-90,in=-60,distance=0.5cm] (1,0) edge (N1);
          \end{tikzpicture}}};
      \node (K) at (0,0) {\begin{tikzpicture}[point/.style={circle,inner sep=0pt,minimum
            size=3pt,fill=black},scale=0.7,> /.tip={Stealth[width=5pt,
            length=4pt]}] 
          \node (N1) [point] at (0,0) {};
        \end{tikzpicture}};
      \node (R) at (1.4,0) {\raisebox{-1pt}{\begin{tikzpicture}[point/.style={circle,inner sep=0pt,minimum
            size=3pt,fill=black},scale=0.7,> /.tip={Stealth[width=5pt,
            length=4pt]}] 
          \node (N1) [point] at (0,0) {};
        \end{tikzpicture}}};
      \path [arrows={Hooks[left]->}] (-0.2,0) edge node[above, font=\footnotesize] {$l$} (L);
      \path [arrows={Hooks[right]->}] (0.2,0) edge node[above, font=\footnotesize] {$r$} (1.2,0);
    \end{tikzpicture}
  \end{center}
  and try to apply it to monograph $\tginf$ from
  Example~\ref{ex-poly}. There is a unique morphism
  $m:L\rightarrow \tginf$ but it does not satisfy the gluing
  condition. Indeed, we see that condition (2) is breached since
  $\occin{{1}}{\tginf({2})}$ and
  ${1}\in \relim{m}{L'}$ and yet
  $2\not\in\relim{m}{L'}$. Hence the only way to apply the rule
  to $\tginf$ is through a single pushout transformation.

  For this we first compute the rule $\partr{l}{r}$. Since $l$ is the
  canonical injection of $l(K)=K$ into $L$, then $r'=r$ (and $R'=R=K$)
  and hence $\partr{l}{r} = \partm{r}:L\hookleftarrow K\rightarrow R$. The
  monograph $D$ is the greatest one such that
  $D\subseteq \tginf$ and $\invf{m}(D)\subseteq l(K)$, hence
  obviously $D=\set{\tuple{0,\empstr}}$. Since $l(K)$ and $R$
  are both isomorphic to $D$ then so is the result of the
  transformation, i.e.,
  \[
    \raisebox{-2.8ex}{\begin{tikzpicture}
      \node (L) at (0,0) {
        \raisebox{-1pt}{\begin{tikzpicture}[point/.style={circle,inner sep=0pt,minimum
              size=3pt,fill=black},scale=0.7,> /.tip={Stealth[width=5pt,
              length=4pt]}] 
            \node (N1) [point] at (0,0) {};
            \path[-,out=60,in=90,distance=0.5cm] (N1) edge (1,0);
            \path[->,out=-90,in=-60,distance=0.5cm] (1,0) edge (N1);
            \path[-,out=30,in=90,distance=0.5cm] (0.9,0.28) edge (1.9,0);
            \path[->,out=-90,in=-30,distance=0.5cm] (1.9,0) edge (0.9,-0.28);
            \path[-,out=30,in=90,distance=0.5cm] (1.82,0.28) edge (2.8,0);
            \path[->,out=-90,in=-30,distance=0.5cm] (2.8,0) edge (1.82,-0.28);
            \node at (3.3,0){$\cdots$};
          \end{tikzpicture}}};
    \end{tikzpicture}}
    \ \spo{\partr{l}{r}}{m}\
    \raisebox{-0.6ex}{\begin{tikzpicture}
      \node (R) at (0,0) {\raisebox{-1pt}{\begin{tikzpicture}[point/.style={circle,inner sep=0pt,minimum
            size=3pt,fill=black},scale=0.7,> /.tip={Stealth[width=5pt,
            length=4pt]}] 
          \node (N1) [point] at (0,0) {};
        \end{tikzpicture}}};
    \end{tikzpicture}}
    \]
    Hence removing the edge $1$ from $\tginf$ silently
    removes the edges $n$ for all $n>1$.
\end{example}

We therefore see that single pushouts implement a semantics where
edges can be silently removed, but minimally so for a monograph to be
obtained. This may remove edges in a cascade, a feature
that does not appear on graphs.
Note that item (1) of the gluing condition may also be breached when
an edge is marked more than once for removal, in which case it is
deleted, but also when an edge is marked both for removal and for
preservation. Example~\ref{ex-partialpo} shows that in such cases the
edge is also removed. All edges marked for removal are guaranteed to
be deleted, and the other edges are preserved only if this does not
conflict with deletions. This semantics of transformation rules is
thus dual to the previous one, and should be more appealing to the
daring (or lazy) programmer.

\section{Attributed Typed Monographs}\label{sec-atm}

The notion of E-graph has been designed in \cite{EhrigEPT06} in order
to obtain an adhesive category of graphs with attributed nodes and
edges. This follows from a line of studies on Typed Attributed Graph
Transformations, see
\cite{Loewe-Korff-Wagner:93,DBLP:conf/gg/HeckelKT02,DBLP:journals/fuin/EhrigEPT06}. The
attributes are taken in a data type algebra and may be of different
sorts (booleans, integers, strings, etc.). In the case of E-graphs
only the nodes of sort \texttt{values} represent such attributes. But
they are also typed by E-graphs, and in the type E-graphs each node of
sort \texttt{values} represent a sort of the data type algebra. This
should recall the constructions of Section~\ref{sec-graphstruct} that
we now use in order to generalize the notion of typed attributed
graphs given in \cite{EhrigEPT06}. The idea is similarly to impose
that the edges typed by a sort of a data type algebra are the elements
of the corresponding carrier set.

\begin{definition}[categories $\ATM{T}{\Sig}$]
  For any monograph $T$ and signature
  $\Sig:\Op\rightarrow \seqto{S}{\omega}$, an \emph{attributed typed
    monograph} (ATM for short) over $T$, $\Sig$ is a pair
  $\tuple{a,\Alg}$ of an object $a:A\rightarrow T$ in $\sliceM{T}$ and
  a $\Sig$-algebra $\Alg$ such that $\Alg_s = (\AlgFunc{T}a)_s$ for
  all $s\in S\cap \edges{T}$.

  A \emph{morphism} $m$ from $\tuple{a,\Alg}$ to an ATM
  $\tuple{b,\Blg}$ over $T$, $\Sig$ is a pair
  $\tuple{\tmorph{m},\amorph{m}}$ of a morphism
  $\tmorph{m}:a\rightarrow b$ in $\sliceM{T}$ and a
  $\Sig$-homomorphism $\amorph{m}:\Alg\rightarrow\Blg$ such that
  $\amorph{m}_s = (\AlgFunc{T}\tmorph{m})_s$ for all
  $s\in S\cap \edges{T}$.

  Let $\id{\tuple{a,\Alg}}\defeq \tuple{\id{a},\id{\Alg}}$ and for any
  morphism $m':\tuple{b,\Blg}\rightarrow \tuple{c,\Clg}$ let $m'\circ
  m\defeq\tuple{\tmorph{m}'\circ\tmorph{m},
    \amorph{m}'\circ\amorph{m}}$ that is a morphism from
  $\tuple{a,\Alg}$ to $\tuple{c,\Clg}$. Let $\ATM{T}{\Sig}$ be the
  category of ATMs over $T$, $\Sig$ and their morphisms.
\end{definition}

The edges that are considered as attributes are not the nodes of a
specific sort as in E-graphs; they are characterized by the fact that
they are typed by an edge of $T$ that happens to be also a sort of the
data type signature $\Sig$, i.e., an element of $S$. This is
consistent with the typed attributed E-graphs of \cite{EhrigEPT06}.

We therefore see that the signatures $\SigFunc{T}$ and $\Sig$ share
sorts but we shall consider them as otherwise distinct, in particular
w.r.t. operator names. To account for this property we need the
following construction.

\begin{definition}[signature $\Sig\sigplus\Sig'$]
  Given two signatures $\Sig:\Op\rightarrow\seqto{S}{\omega}$ and
  $\Sig':\Op'\rightarrow\seqto{S'}{\omega}$, let
  $\tuple{\Op+\Op',\mu_1,\mu_2}$ be the coproduct of
  $\tuple{\Op,\Op'}$ in $\Sets$ and $j$,
  $j'$ be the canonical injections of $S$,
  $S'$ respectively into $S\cup
  S'$, let $\Sig\sigplus\Sig':\Op+\Op'\rightarrow \seqto{(S\cup
    S')}{\omega}$ be the unique function such that
  $(\Sig\sigplus\Sig')\circ\mu_1 = \seqto{j}{\omega}\circ
  \Sig$ and $(\Sig\sigplus\Sig')\circ\mu_2 = \seqto{j'}{\omega}\circ
  \Sig'$.

\begin{center}
  \begin{tikzpicture}[xscale=3,yscale=1.2]
    \node (SO) at (0,0) {$\Op+\Op'$}; \node (US) at (1,0) {$\seqto{(S\cup S')}{\omega}$};
    \node (O) at (0,1) {$\Op$}; \node (O') at (0,-1) {$\Op'$};
    \node (S) at (1,1) {$\seqto{S}{\omega}$}; \node (S') at (1,-1) {$\seqto{S'}{\omega}$};
    \path[->] (O) edge node [left, font=\footnotesize]{$\mu_1$} (SO);
    \path[->] (O') edge node [left, font=\footnotesize]{$\mu_2$} (SO);
    \path[->] (O) edge node [above, font=\footnotesize]{$\Sig$} (S);
    \path[->] (O') edge node [above, font=\footnotesize]{$\Sig'$} (S');
    \path[{Hooks[left]}->] (S) edge node [right, font=\footnotesize]{$\seqto{j}{\omega}$} (US);
    \path[{Hooks[right]}->] (S') edge node [right, font=\footnotesize]{$\seqto{j'}{\omega}$} (US);
    \path[->,dashed] (SO) edge node [above, font=\footnotesize]{$\Sig\sigplus\Sig'$} (US);
  \end{tikzpicture}
\end{center}

\end{definition}

We leave it to the reader to check that this construction defines a
coproduct in the category $\SigCats$ and therefore that
$\Sig_1\dotiso\Sig_2$ and $\Sig'_1\dotiso\Sig'_2$ entail
$\Sig_1\sigplus\Sig'_1\dotiso \Sig_2\sigplus\Sig'_2$.  For the sake of
simplicity we will assume in the sequel that $\SigFunc{T}$ and $\Sig$
have no operator name in common, thus assimilate $\Funcs{T}+\Op$ to
$\Funcs{T}\cup\Op$ and omit the canonical injections, so that
$\SigFunc{T}=\restr{(\SigFunc{T}\sigplus\Sig)}{\Funcs{T}}{\seqto{(\edges{T})}{\omega}}$
and $\Sig=\restr{(\SigFunc{T}\sigplus\Sig)}{\Op}{\seqto{S}{\omega}}$.

\begin{definition}[functor $\DFunc{}: \ATM{T}{\Sig}\rightarrow \AlgCat{(\SigFunc{T}\sigplus\Sig)}$]
  For every signature $\Sig:\Op\rightarrow\seqto{S}{\omega}$ and
  monograph $T$ such that $\Funcs{T}\cap\Op=\ensvide$, let
  $\Sig'\defeq \SigFunc{T}\sigplus\Sig$ and
  $\DFunc{}:\ATM{T}{\Sig}\rightarrow \AlgCat{\Sig'}$ be the functor
  defined as follows: for every object $\tuple{a,\Alg}$ of
  $\ATM{T}{\Sig}$ let $\DFunc{\tuple{a,\Alg}}$ be the $\Sig'$-algebra
  $\Alg'$ defined by
  \begin{itemize}
  \item $\Alg'_s\defeq \Alg_s$ for all $s\in S$ and
    $\Alg'_e\defeq (\AlgFunc{T}a)_e$ for all $e\in\edges{T}$,
  \item $\interp{o}{\Alg'}\defeq \interp{o}{\Alg}$ for all $o\in\Op$
    and
    $\interp{\opname{e}{\iota}}{\Alg'}\defeq
    \interp{\opname{e}{\iota}}{\AlgFunc{T}a}$ for all
    $\opname{e}{\iota}\in\Funcs{T}$.
  \end{itemize}
  For every morphism $m:\tuple{a,\Alg}\rightarrow \tuple{b,\Blg}$, let
  $(\DFunc{m})_s\defeq \amorph{m}_s$ for all $s\in S$ and
  $(\DFunc{m})_e\defeq (\AlgFunc{T}\tmorph{m})_e$ for all
  $e\in\edges{T}$.
\end{definition}

It is straightforward to check that $\DFunc{m}$ is a
$\Sig'$-homomorphism from $\DFunc{\tuple{a,\Alg}}$ to
$\DFunc{\tuple{b,\Blg}}$, and hence that $\DFunc{}$ is a functor.
  
\begin{theorem}\label{th-ATM2Alg}
  $\DFunc{}$ is an equivalence from $\ATM{T}{\Sig}$ to
  $\AlgCat{(\SigFunc{T}\sigplus\Sig)}$.
\end{theorem}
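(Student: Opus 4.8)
The plan is to prove the three defining properties of an equivalence --- faithfulness, fullness and isomorphism-density --- using Theorem~\ref{thm-sliceiso} as the engine, together with the structural remark that a $(\SigFunc{T}\sigplus\Sig)$-algebra is nothing but an $\SigFunc{T}$-algebra paired with a $\Sig$-algebra that agree on their carriers over the common sorts $S\cap\edges{T}$. Indeed, since $\SigFunc{T}\sigplus\Sig$ has sort set $\edges{T}\cup S$ and operator names $\Funcs{T}\cup\Op$ (disjoint), every $(\SigFunc{T}\sigplus\Sig)$-algebra $\Clg'$ splits into the $\SigFunc{T}$-algebra carried by $(\Clg'_e)_{e\in\edges{T}}$ with operations $\interp{\opname{e}{\iota}}{\Clg'}$, and the $\Sig$-algebra carried by $(\Clg'_s)_{s\in S}$ with operations $\interp{o}{\Clg'}$; these two reducts agree on the carriers indexed by $S\cap\edges{T}$, and conversely any compatible pair reassembles into a unique $\Clg'$. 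The same splitting applies to homomorphisms, and $\DFunc{}$ is precisely $\AlgFunc{T}{}$ on the $\edges{T}$-indexed part glued to the identity on the $\Sig$-part along $S\cap\edges{T}$.

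Faithfulness and fullness I would dispatch together. For a $(\SigFunc{T}\sigplus\Sig)$-homomorphism $h\colon\DFunc{\tuple{a,\Alg}}\to\DFunc{\tuple{b,\Blg}}$, the components $(h_s)_{s\in S}$ form a $\Sig$-homomorphism $\Alg\to\Blg$ (since $\DFunc{}$ leaves the $\Op$-operations and the $S$-carriers untouched), while $(h_e)_{e\in\edges{T}}$ forms an $\SigFunc{T}$-homomorphism $\AlgFunc{T}a\to\AlgFunc{T}b$. By the fullness half of Theorem~\ref{thm-sliceiso} there is a unique $\tmorph{m}\colon a\to b$ in $\sliceM{T}$ with $(\AlgFunc{T}\tmorph{m})_e=h_e$ for all $e$; setting $\amorph{m}\defeq(h_s)_{s\in S}$ gives a candidate $m=\tuple{\tmorph{m},\amorph{m}}$. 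Its ATM-compatibility condition $\amorph{m}_s=(\AlgFunc{T}\tmorph{m})_s$ for $s\in S\cap\edges{T}$ holds automatically, since both sides equal $h_s$, so $m$ is a genuine morphism of $\ATM{T}{\Sig}$ and, recombining the two parts, $\DFunc{m}=h$. Faithfulness is the same computation read backwards: $\DFunc{m}=\DFunc{m'}$ forces $\amorph{m}=\amorph{m'}$ on $S$ and, by faithfulness of $\AlgFunc{T}{}$, $\tmorph{m}=\tmorph{m'}$ on $\edges{T}$.

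The real work is isomorphism-density, and this is where I expect the only genuine obstacle. Given a $(\SigFunc{T}\sigplus\Sig)$-algebra $\Clg'$, its $\SigFunc{T}$-reduct lies, up to isomorphism, in the image of $\AlgFunc{T}{}$: by Theorem~\ref{thm-sliceiso} there is an object $a\colon A\to T$ and an $\SigFunc{T}$-isomorphism $\psi$ from $\AlgFunc{T}a$ to that reduct, i.e.\ bijections $\psi_e\colon(\AlgFunc{T}a)_e\to\Clg'_e$ commuting with the $\opname{e}{\iota}$-operations. The subtlety is that the definition of $\ATM{T}{\Sig}$ demands the literal equality $\Alg_s=(\AlgFunc{T}a)_s$ on the shared sorts, not a mere isomorphism, so I cannot just take $\Alg$ to be the $\Sig$-reduct of $\Clg'$. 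The remedy is to transport the $\Sig$-structure along $\psi$: set $\Alg_s\defeq(\AlgFunc{T}a)_s$ for $s\in S\cap\edges{T}$ and $\Alg_s\defeq\Clg'_s$ for $s\in S\setminus\edges{T}$, and define each $\interp{o}{\Alg}$ by conjugating $\interp{o}{\Clg'}$ with the $\psi_s$ on shared-sort arguments and result and with the identity elsewhere. Then $\Alg$ is a $\Sig$-algebra isomorphic to the $\Sig$-reduct of $\Clg'$, the pair $\tuple{a,\Alg}$ meets the ATM-constraint, and the componentwise map that is $\psi_e$ on $\edges{T}$ and the identity on $S\setminus\edges{T}$ (coherent on $S\cap\edges{T}$) is an isomorphism $\DFunc{\tuple{a,\Alg}}\iso\Clg'$, respecting the $\Funcs{T}$-operations by choice of $\psi$ and the $\Op$-operations by construction of $\Alg$.

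Conceptually this says $\AlgCat{(\SigFunc{T}\sigplus\Sig)}$ is the pullback of $\AlgCat{\SigFunc{T}}$ and $\AlgCat{\Sig}$ over the category of $(S\cap\edges{T})$-indexed families of sets, and the theorem merely transfers the equivalence $\AlgFunc{T}{}$ through this pullback; but the direct three-step verification above is shorter. The single point demanding care --- and the only place where the argument is not pure bookkeeping --- is the transport step in isomorphism-density, forced solely because the definition of $\ATM{T}{\Sig}$ rigidifies the shared carriers to equalities rather than isomorphisms.
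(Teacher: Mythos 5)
Your proposal is correct and follows essentially the same route as the paper: fullness and faithfulness are inherited componentwise from Theorem~\ref{thm-sliceiso}, and isomorphism-density is proved by taking the $\SigFunc{T}$- and $\Sig$-reducts of the given algebra, invoking isomorphism-density of $\AlgFunc{T}{}$ to get $a:A\rightarrow T$ with an isomorphism $h$, and then transporting the $\Sig$-structure along $h$ by conjugation (identity on sorts outside $\edges{T}$) so that the ATM carrier-equality constraint holds literally. The "transport step" you flag as the only delicate point is precisely the paper's construction of the algebra $\Alg$ and the family $(k_s)_{s\in S}$.
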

\begin{proof}
  It is easy to see that $\DFunc{}$ is full and faithful by the same
  property of $\AlgFunc{T}$.

  We prove that $\DFunc{}$ is isomorphism-dense.  For any
  $\Sig'$-algebra $\Blg'$, let $\Blg$ (resp. $\Clg$) be its
  restriction to $\Sig$ (resp. $\SigFunc{T}$).  Since $\AlgFunc{T}$ is
  isomorphism-dense by Theorem~\ref{thm-sliceiso}, there exist an
  object $a:A\rightarrow T$ in $\sliceM{T}$ and an
  $\SigFunc{T}$-isomorphism $h:\AlgFunc{T}a\rightarrow\Clg$.  We
  define simultaneously a set $\Alg_s$ and a function
  $k_s:\Alg_s\rightarrow\Blg_s$ for all $s\in S$ by taking
  $\Alg_s\defeq \Blg_s$ and $k_s\defeq \id{\Alg_s}$ if
  $s\in S\setminus\edges{T}$, and $\Alg_s\defeq (\AlgFunc{T}a)_s$ and
  $k_s\defeq h_s$ if $s\in S\cap\edges{T}$ (in this case we have
  $\Clg_s=\Blg'_s=\Blg_s$). We then define for every $o\in\Op$ the
  function
  $\interp{o}{\Alg}\defeq \invf{k_{\Rng{o}}}\circ
  \interp{o}{\Blg}\circ k_{\Dom{o}}:\Alg_{\Dom{o}}\rightarrow
  \Alg_{\Rng{o}}$, and the $\Sig$-algebra
  $\Alg\defeq\big((\Alg_s)_{s\in
    S},(\interp{o}{\Alg})_{o\in\Op}\big)$. By construction
  $\tuple{a,\Alg}$ is obviously an ATM over $T,\Sig$ and
  $k\defeq(k_s)_{s\in S}$ is a $\Sig$-isomorphism
  $k:\Alg\rightarrow\Blg$.

  \begin{center}
    \begin{tikzpicture}[scale=.8]
      \draw (-1.8,1.5) circle (1cm and 1.3cm);
      \draw (1.8,1.5) circle (1cm and 1.3cm);
      \draw (-1.8,0) circle (1cm and 1.3cm);
      \draw (1.8,0) circle (1cm and 1.3cm);
      \draw (-3.2,1.5) node {$\Alg$};
      \draw (-5.4,1.5) node {$\Sig$};
      \draw (3.2,1.5) node {$\Blg$};
      \draw (-3.4,0) node {$\AlgFunc{T}a$};
      \draw (-5.4,0) node {$\SigFunc{T}$};
      \draw (3.2,0) node {$\Clg$};
      \draw (-5.4,-1.7) node {$\Sig'$};
      \draw (-1.8,-1.7) node {$\Alg'$};
      \draw (1.8,-1.7) node {$\Blg'$};
      \path[->] (-0.8,01.5) edge node [above, font=\footnotesize]{$k$} (0.8,1.5);
      \path[->] (-0.8,0) edge node [above, font=\footnotesize]{$h$} (0.8,0);
    \end{tikzpicture}
  \end{center}

  Let $\Alg'\defeq\DFunc{\tuple{a,\Alg}}$, $h'_s\defeq
  k_s:\Alg'_s\rightarrow \Blg'_s$ for all $s\in S$ and $h'_e\defeq
  h_e:\Alg'_e\rightarrow \Blg'_e$ for all $e\in\edges{T}$, since
  $h_s=k_s$ for all $s\in S\cap\edges{T}$ then
  $h'\defeq(h'_s)_{s\in S\cup\edges{T}}$ is well-defined. It is then
  easy to see that $h':\Alg'\rightarrow \Blg'$ is a
  $\Sig'$-isomorphism, so that $\DFunc{\tuple{a,\Alg}}\iso\Blg'$.
\end{proof}

Theorem~\ref{th-ATM2Alg} generalizes\footnote{Our proof is also much
  shorter than the 6 pages taken by the corresponding result on
  attributed typed E-graphs. This is due partly to our use of
  $\AlgFunc{T}$ (Definition~\ref{def-algfunc}) and of
  Theorem~\ref{thm-sliceiso}, but also to the simplicity of monographs
  compared to the 5 sorts and 6 operator names of E-graphs.}
\cite[Theorem 11.3]{EhrigEPT06} that establishes an isomorphism
between the category of attributed E-graphs typed by an attributed
E-graph ${ATG}$ and the category of algebras of a signature denoted
$\mathrm{AGSIG}(ATG)$. In particular Theorem~11.3 of \cite
{EhrigEPT06} requires the hypothesis that $\mathrm{AGSIG}(ATG)$ should
be \emph{well-structured}, which means that if there is an operator
name of $\SigFunc{T}$ whose domain sort is $s$ then $s$ is not a sort
of the data type signature $\Sig$. Obviously this is equivalent to
requiring that only nodes of $T$ can be considered as sorts of $\Sig$
and is linked to the fact that only \texttt{values} nodes of E-graphs
are supposed to hold attributes. Since we are not restricted to
E-graphs there is no need to require that attributes should only be
nodes. This has an interesting consequence:

\begin{corollary}
  For every signatures $\Sig$, $\Sig'$ and graph structure $\Msig$
  such that $\Sig'=\Msig\sigplus\Sig$ there exists a monograph
  $T$ such that $\AlgCat{\Sig'}\equivCat\ATM{T}{\Sig}$.
\end{corollary}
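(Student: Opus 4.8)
The plan is to obtain the required monograph $T$ directly from Lemma~\ref{lm-Sig2Mono} and then chain together three facts that are already available: the compatibility of $\sigplus$ with the relation $\dotiso$, the fact that isomorphic signatures have isomorphic categories of algebras, and the equivalence of Theorem~\ref{th-ATM2Alg}. Concretely, since $\Msig$ is a graph structure, Lemma~\ref{lm-Sig2Mono} supplies a monograph $T$ with $\SigFunc{T}\dotiso\Msig$, and I will take this $T$ as the witness asserted by the corollary.

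First I would lift the signature isomorphism to one involving $\Sig$. From $\SigFunc{T}\dotiso\Msig$ and the trivial $\Sig\dotiso\Sig$, the coproduct property of $\sigplus$ in $\SigCats$ recorded just after the definition of $\Sig\sigplus\Sig'$ yields
\[\SigFunc{T}\sigplus\Sig\ \dotiso\ \Msig\sigplus\Sig\ =\ \Sig'.\]
Because $\dotiso$ is an isomorphism relation inside the subcategory $\SigCats$ of $\SigCat$, it entails the plain isomorphism $\SigFunc{T}\sigplus\Sig\iso\Sig'$ (exactly as $\SigFunc{T}\dotiso\Msig$ was turned into $\SigFunc{T}\iso\Msig$ in the proof that $\SigFunc{}$ is isomorphism-dense). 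Next I would invoke the observation recorded just before Corollary~\ref{cr-gs2slicem}, namely that $\Sig_1\iso\Sig_2$ implies $\AlgCat{\Sig_1}\iso\AlgCat{\Sig_2}$, to conclude $\AlgCat{(\SigFunc{T}\sigplus\Sig)}\iso\AlgCat{\Sig'}$.

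Finally I would apply Theorem~\ref{th-ATM2Alg}, which gives the equivalence $\ATM{T}{\Sig}\equivCat\AlgCat{(\SigFunc{T}\sigplus\Sig)}$. Composing this with the isomorphism of the previous step, and recalling that $\iso$ is a special case of $\equivCat$, produces $\ATM{T}{\Sig}\equivCat\AlgCat{\Sig'}$, which is the claim. None of these steps carries computational content beyond the bookkeeping of the isomorphisms, so the proof is essentially a substitution into results already proved.

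The one genuine subtlety — and the step I would treat most carefully — is that both $\DFunc{}$ and Theorem~\ref{th-ATM2Alg} are stated under the hypothesis $\Funcs{T}\cap\Op=\ensvide$, whereas the $T$ produced by Lemma~\ref{lm-Sig2Mono} is not \emph{a priori} chosen with this disjointness in mind. This is covered by the standing assumption of the section that $\SigFunc{T}$ and $\Sig$ share no operator name, which is harmless because operator names are irrelevant (as noted for signatures in general): one may replace $\Sig$ by an isomorphic copy with fresh operator names, and this alters neither $\AlgCat{\Sig'}$ (since $\Sig'$ is then replaced by an isomorphic signature, whose algebra category is isomorphic) nor $\ATM{T}{\Sig}$ up to the equivalences in play. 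Once this disjointness is secured, the chain above closes without further obstacle.
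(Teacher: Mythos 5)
Your proposal is correct and follows exactly the paper's own argument: obtain $T$ from Lemma~\ref{lm-Sig2Mono}, use compatibility of $\sigplus$ with $\dotiso$ to get $\SigFunc{T}\sigplus\Sig\dotiso\Msig\sigplus\Sig=\Sig'$, pass to $\AlgCat{\Sig'}\iso\AlgCat{(\SigFunc{T}\sigplus\Sig)}$, and conclude by Theorem~\ref{th-ATM2Alg}. Your closing remark on securing the disjointness hypothesis $\Funcs{T}\cap\Op=\ensvide$ is a sound piece of extra care that the paper leaves to its standing assumption.
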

\begin{proof}
  By Lemma~\ref{lm-Sig2Mono} there exists a monograph $T$ such that
  $\SigFunc{T}\dotiso \Msig$, hence
  $\SigFunc{T}\sigplus \Sig \dotiso \Msig\sigplus\Sig = \Sig'$ and
  therefore
  $\AlgCat{\Sig'}\iso \AlgCat{(\SigFunc{T}\sigplus\Sig)} \equivCat
  \ATM{T}{\Sig}$.
\end{proof}

Obviously, any signature $\Sig'$ can be decomposed as
$\Msig\sigplus\Sig$ by putting some of its monadic operators (and the
sorts involved in these) in $\Msig$ and all other operators in
$\Sig$. And then any $\Sig'$-algebra can be represented as an ATM over
$T,\Sig$, where $\SigFunc{T}\dotiso \Msig$. This opens the way to
applying graph transformations to these algebras, but this requires
some care since it is not generally possible to remove or add elements
to a $\Sig'$-algebra and obtain a $\Sig'$-algebra as a result.

The approach adopted in \cite[Definition 11.5]{EhrigEPT06} is to
restrict the morphisms used in span rules to a class of monomorphisms
that are extensions of $\Sig$-isomorphisms to
$(\Msig\sigplus\Sig)$-homomorphisms. It is then possible to show
\cite[Theorem 11.11]{EhrigEPT06} that categories of typed attributed
E-graphs are adhesive HLR categories (a notion that generalizes
Definition~\ref{def-adhesive}, see
\cite{DBLP:journals/fuin/EhrigPPH06}) w.r.t. this class of
monomorphisms.

A similar result holds on categories of ATMs. For the sake of
simplicity, and since rule-based graph transformations are unlikely to
modify attributes such as booleans, integers or strings (and if they
do they should probably not be considered as graph transformations),
we will only consider morphisms that leave the data type algebra
unchanged, element by element. This leaves the possibility to
transform the edges whose sort is in $\Msig$ but not in $\Sig$.

\begin{definition}[categories $\ATM{T}{\Alg}$, functor $\UFunc{}$, $f$
  stabilizes $\Alg$]\sloppy
  For any $\Sig$-algebra $\Alg$ let $\ATM{T}{\Alg}$ be the subcategory
  of $\ATM{T}{\Sig}$ restricted to objects $\tuple{a,\Alg}$ and
  morphisms $\tuple{f, \id{\Alg}}$.

    The \emph{forgetful functor} $\UFunc{}:\ATM{T}{\Alg}\rightarrow \Sets$
  is defined by $\UFunc{\tuple{a,{\Alg}}}  \defeq \edges{A}$, where
  $a:A\rightarrow T$ and $\UFunc{\tuple{f,\id{\Alg}}}  \defeq
  \edges{f}$ (usually denoted $f$).

  By abuse of notation we write $\Alg$ for the set
  $\bigcup_{s\in S\cap\edges{T}}\Alg_s$. A function $f$
  \emph{stabilizes} $\Alg$ if $\relim{\invf{f}}{x} = \set{x}$ for all
  $x\in\Alg$.
\end{definition}

The proof that the categories $\ATM{T}{\Alg}$ are adhesive 
will only be sketched below. The key point is the following lemma.

\begin{lemma}\label{lm-stab}
  For all objects $\tuple{a,\Alg}$, $\tuple{b,\Alg}$ of
  $\ATM{T}{\Alg}$ and morphism $f:a\rightarrow b$ of $\sliceM{T}$, we
  have \[\tuple{f,\id{\Alg}}: \tuple{a,\Alg} \rightarrow
  \tuple{b,\Alg}\text{ is a morphism in } \ATM{T}{\Alg}\ \text{ iff }\ f\text{ stabilizes }\Alg.\]
\end{lemma}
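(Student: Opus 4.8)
The plan is to reduce the stated equivalence to a transparent set-theoretic condition on $f$ by unwinding what it means for $\tuple{f,\id{\Alg}}$ to be a morphism, and only then to prove the two implications. By definition a morphism of $\ATM{T}{\Alg}$ is a morphism $\tuple{f,\id{\Alg}}$ of $\ATM{T}{\Sig}$, which requires $(\id{\Alg})_s = (\AlgFunc{T}f)_s$ for all $s\in S\cap\edges{T}$. Using Definition~\ref{def-algfunc} together with the object constraints $(\AlgFunc{T}a)_s = \Alg_s = (\AlgFunc{T}b)_s$ (valid because both $\tuple{a,\Alg}$ and $\tuple{b,\Alg}$ are objects of $\ATM{T}{\Alg}$), the right-hand side is $(\AlgFunc{T}f)_s = \restr{f}{\Alg_s}{\Alg_s}$, a map that is automatically well defined since $\relim{f}{\Alg_s}\subseteq \relim{\invf{b}}{s}=\Alg_s$. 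Hence the morphism condition is equivalent to $\restr{f}{\Alg_s}{\Alg_s}=\id{\Alg_s}$ for every such $s$, i.e. to $f(x)=x$ for every $x\in\Alg=\bigcup_{s\in S\cap\edges{T}}\Alg_s$. So the whole task becomes: \emph{$f$ fixes $\Alg$ pointwise iff $f$ stabilizes $\Alg$.}

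The direction from stabilizing to fixing is immediate: if $\relim{\invf{f}}{x}=\set{x}$ then $x\in\relim{\invf{f}}{x}$, so $f(x)=x$; thus a stabilizing $f$ fixes $\Alg$ pointwise and is therefore a morphism. For the converse I would assume $f(x)=x$ for all $x\in\Alg$ and show that each such $x$ has no preimage besides itself. Fix $x\in\Alg_s$ with $s\in S\cap\edges{T}$; then $x\in\relim{\invf{f}}{x}$ already. Given any $y$ with $f(y)=x$, I would invoke that $f:a\rightarrow b$ is a morphism of $\sliceM{T}$, so $a=b\circ f$, giving $a(y)=b(f(y))=b(x)$. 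Since $x\in\Alg_s=(\AlgFunc{T}b)_s=\relim{\invf{b}}{s}$ we have $b(x)=s$, whence $a(y)=s$, i.e. $y\in\relim{\invf{a}}{s}=(\AlgFunc{T}a)_s=\Alg_s\subseteq\Alg$. The pointwise-fixing hypothesis then gives $f(y)=y$, and combined with $f(y)=x$ this forces $y=x$. Therefore $\relim{\invf{f}}{x}=\set{x}$ and $f$ stabilizes $\Alg$.

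The two implications are thus asymmetric, and I expect the converse to be the only real step: it is conceptual rather than computational. The crux is recognizing that the slice compatibility $a=b\circ f$ together with the object identity $\Alg_s=(\AlgFunc{T}a)_s=(\AlgFunc{T}b)_s$ \emph{reflects} the typing, confining every preimage of an attribute edge of sort $s$ to the same attribute sort, so that it too is fixed and hence coincides with $x$. Everything else — the unfolding of Definition~\ref{def-algfunc}, the well-definedness of $\restr{f}{\Alg_s}{\Alg_s}$, and the abbreviation $\Alg=\bigcup_{s\in S\cap\edges{T}}\Alg_s$ — is routine bookkeeping.
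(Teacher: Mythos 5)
Your proof is correct and follows essentially the same route as the paper's: unwind the morphism condition via Definition~\ref{def-algfunc} to pointwise fixing of $\Alg$, then use $b\circ f=a$ together with $\Alg_s=\relim{\invf{a}}{s}=\relim{\invf{b}}{s}$ to confine preimages of attribute edges to $\Alg$, yielding the equivalence with stabilization. The only (cosmetic) difference is that the paper computes the preimage identity $\relim{\invf{f}}{\Alg_s}=\relim{\invf{f}}{\relim{\invf{b}}{s}}=\relim{\invf{a}}{s}=\Alg_s$ once at the set level and then chains the iffs, whereas you run the same argument pointwise inside the converse direction.
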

\begin{proof}
  For all $s\in S\cap\edges{T}$ we have
  $\Alg_s=(\AlgFunc{T}a)_s = \relim{\invf{a}}{s}$ and
  $\Alg_s= \relim{\invf{b}}{s}$. Since $b\circ f = a$ then
  $\relim{\invf{f}}{\Alg_s}=\relim{\invf{f}}{\relim{\invf{b}}{s}} =
  \relim{\invf{a}}{s}=\Alg_s$, hence
  $\relim{\invf{f}}{\Alg}=\Alg$. Thus $f$ stabilizes $\Alg$ iff
  $f(x)=x$ for all $x\in\Alg$ iff
  $(\AlgFunc{T}f)_s = \restr{f}{\Alg_s}{\Alg_s} = \Id{\Alg_s} =
  (\id{\Alg})_s$ for all $s\in S\cap\edges{T}$ iff
  $\tuple{f,\id{\Alg}}$ is a morphism in $\ATM{T}{\Alg}$.
\end{proof}

Hence the property of stabilization characterizes the difference
between morphisms in $\sliceM{T}$ and morphisms in
$\ATM{T}{\Alg}$. Besides, it is well-known how pushouts and pullbacks
in $\sliceM{T}$ can be constructed from those in $\MonGr$, and we have
seen that these can be constructed from those in $\Sets$.

But then it is quite obvious that in $\Sets$, starting from a span of
functions that stabilize $\Alg$, it is always possible to find as
pushout a cospan of functions that stabilize $\Alg$. Hence not only
does $\ATM{T}{\Alg}$ have pushouts, but these are preserved by the
functor $\UFunc{}$. A similar result holds for pullbacks, and a
construction similar to Corollary~\ref{cr-monoinj} yields that
$\UFunc{}$ also preserves monomorphisms. Finally, we see that
$\UFunc{}$ reflects isomorphisms since $\invf{f}$ stabilizes $\Alg$
whenever $f$ does. We conclude as in Theorem~\ref{th-adhesive}.

\begin{theorem}
  $\ATM{T}{\Alg}$ is adhesive.
\end{theorem}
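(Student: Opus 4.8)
The plan is to reproduce the argument of Theorem~\ref{th-adhesive}, transporting adhesiveness from $\Sets$ along the forgetful functor $\UFunc{}:\ATM{T}{\Alg}\rightarrow\Sets$. Recall (Definition~\ref{def-adhesive}) that it suffices to show that $\ATM{T}{\Alg}$ has pullbacks, has pushouts along monomorphisms, and that every such pushout is a van Kampen square. As in the earlier proof, the last point will follow once $\UFunc{}$ is shown to be faithful, to reflect isomorphisms, and to preserve and reflect both pullbacks and pushouts along monomorphisms: the underlying cube in the adhesive category $\Sets$ then has pullback back faces and a van Kampen bottom, which transfers back to $\ATM{T}{\Alg}$.

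First I would settle existence. By Lemma~\ref{lm-stab} a morphism of $\ATM{T}{\Alg}$ is exactly a morphism of $\sliceM{T}$ whose underlying function stabilizes $\Alg$, and the attribute set $\Alg$ sits as a common subset of every object's edge set, fixed pointwise by every morphism. Pullbacks and pushouts in $\sliceM{T}$ are built from those in $\MonGr$, hence (Lemmas~\ref{lm-Epresreflcolimits} and~\ref{lm-Epresreflpb}) from those in $\Sets$, so the task reduces to a statement about $\Sets$. The key step is that in $\Sets$ a pushout along a monomorphism of a span whose legs stabilize $\Alg$ can be chosen with legs that again stabilize $\Alg$, yielding a legitimate object of $\ATM{T}{\Alg}$: the two copies of each attribute element coming from the feet of the span are glued exactly once (because the shared element lies in $\Alg$ and both maps fix it), and stabilization guarantees that nothing else is identified with an attribute element, so $\Alg$ survives as a fixed common subset with the correct fibres over $S\cap\edges{T}$. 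The analogous and easier statement for pullbacks holds because preimages of singletons in $\Alg$ stay singletons.

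Next I would verify the functorial properties of $\UFunc{}$. Faithfulness is immediate. Since the pullbacks and pushouts just constructed agree on underlying edge sets with those in $\Sets$, the functor $\UFunc{}$ preserves them; being faithful and isomorphism-reflecting, it then also reflects them by \cite[Theorem 24.7]{Herrlich-Strecker79}. Preservation of monomorphisms follows from preservation of pullbacks exactly as in Corollary~\ref{cr-monoinj}. Finally $\UFunc{}$ reflects isomorphisms: if $\tuple{f,\id{\Alg}}$ has a bijective underlying $f$, then $\invf{f}$ is a morphism of monographs by Lemma~\ref{lm-Erefliso}, and it stabilizes $\Alg$ because $f$ fixes $\Alg$ pointwise and is bijective, so $\tuple{\invf{f},\id{\Alg}}$ is an inverse in $\ATM{T}{\Alg}$.

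With all of this in place the conclusion is the cube argument of Theorem~\ref{th-adhesive} verbatim. I expect the only real work to be the crux step of the second paragraph --- checking that the $\Sets$-pushout of stabilizing maps stays inside $\ATM{T}{\Alg}$ with stabilizing legs and the right attribute fibres --- together with the care needed to confirm that $\UFunc{}$ genuinely preserves these pushouts, rather than merely that pushouts exist; everything else is routine transfer along a faithful, isomorphism-reflecting functor into an adhesive category.
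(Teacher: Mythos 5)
Your proposal is correct and takes essentially the same route as the paper, which itself only sketches the argument: both rest on Lemma~\ref{lm-stab} to identify $\ATM{T}{\Alg}$-morphisms with stabilizing morphisms of $\sliceM{T}$, reduce pushouts and pullbacks to constructions in $\Sets$ whose legs can be chosen to stabilize $\Alg$, check that $\UFunc{}$ preserves and reflects these and reflects isomorphisms, and conclude with the cube argument of Theorem~\ref{th-adhesive}. If anything, your write-up makes explicit the attribute-gluing argument for pushouts that the paper dismisses as ``quite obvious''.
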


This result does not mean that all edges that are not attributes can
be freely transformed. Their adjacencies to or from attributes may
impose constraints that only few morphisms are able to satisfy.

\begin{example}
  Let $\Sig$ be the signature with no operation name and one sort
  \texttt{s}, and $\Alg$ be the $\Sig$-algebra defined by
  $\Alg_{\texttt{s}}=\set{a,b}$. We consider the type monograph
  $T=\set{\tuple{e,\texttt{s}}, \tuple{\texttt{s},e}}$. A monograph
  typed by $T$ has any number (but at least one) of edges typed by $e$
  that must be adjacent either to $a$ or $b$, and two edges typed by
  $\texttt{s}$, namely $a$ and $b$, that must be adjacent to either
  the same edge $x$ typed by $e$, which yields two classes of
  monographs
  \begin{center}
    \begin{tikzpicture}[scale=0.5,> /.tip={Stealth[width=5pt, length=4pt]}]
      \draw[thick](0,0)--(0.8,0); \draw[thick](2,0)--(2.8,0); \draw[thick](4,0)--(4.8,0);
      \path[->,out=90,in=90,distance=1.3cm] (0.4,0) edge node[fill=white, font=\footnotesize] {$a$} (2.2,0); 
      \path[->,out=90,in=90,distance=1.3cm] (4.4,0) edge node[fill=white, font=\footnotesize] {$b$} (2.6,0); 
      \path[->,out=-90,in=-90,distance=1.3cm] (2.4,0) edge node[fill=white, font=\footnotesize] {$x$} (0.4,0); 
      \draw[thick](6,0)--(6.8,0); \draw[thick](8,0)--(8.8,0); \draw[thick](10,0)--(10.8,0);
      \path[->,out=90,in=90,distance=1.3cm] (6.4,0) edge node[fill=white, font=\footnotesize] {$a$} (8.2,0); 
      \path[->,out=90,in=90,distance=1.3cm] (10.4,0) edge node[fill=white, font=\footnotesize] {$b$} (8.6,0); 
      \path[->,out=-90,in=-90,distance=1.3cm] (8.4,0) edge node[fill=white, font=\footnotesize] {$x$} (10.4,0); 
    \end{tikzpicture}
  \end{center}
  (to which may be added any number of edges typed by $e$ and adjacent
  to either $a$ or $b$), or $a$ and $b$ are adjacent to $y$ and $z$
  respectively, and we get four more classes:
  \begin{center}
    \begin{tikzpicture}[scale=0.5,> /.tip={Stealth[width=5pt, length=4pt]}]
      \draw[thick] (0.6,0) -- (1.4,0);
      \draw[thick] (-0.6,0) -- (-1.4,0);
      \draw[thick] (0.6,2.5) -- (1.4,2.5);
      \draw[thick] (-0.6,2.5) -- (-1.4,2.5);
      \path[->,out=90,in=90,distance=1.3cm] (1,2.5) edge node[fill=white, font=\footnotesize] {$a$} (-1,2.5); 
      \path[->,out=-90,in=-90,distance=1.3cm] (-1,2.5) edge node[fill=white, font=\footnotesize] {$y$} (1,2.5);       
      \path[->,out=-90,in=-90,distance=1.3cm] (-1,0) edge node[fill=white, font=\footnotesize] {$b$} (1,0);       
      \path[->,out=90,in=90,distance=1.3cm] (1,0) edge node[fill=white, font=\footnotesize] {$z$} (-1,0);
      \draw[thick] (5.6,0) -- (6.4,0);
      \draw[thick] (4.4,0) -- (3.6,0);
      \draw[thick] (5.6,2.5) -- (6.4,2.5);
      \draw[thick] (4.4,2.5) -- (3.6,2.5);
      \path[->,out=90,in=90,distance=1.3cm] (6,2.5) edge node[fill=white, font=\footnotesize] {$a$} (4,2.5); 
      \path[->] (4,2.5) edge node[fill=white, font=\footnotesize] {$y$} (4,0);       
      \path[->,out=-90,in=-90,distance=1.3cm] (4,0) edge node[fill=white, font=\footnotesize] {$b$} (6,0);       
      \path[->] (6,0) edge node[fill=white, font=\footnotesize] {$z$} (6,2.5) ;
      \draw[thick] (10.6,0) -- (11.4,0);
      \draw[thick] (9.4,0) -- (8.6,0);
      \draw[thick] (10.6,2.5) -- (11.4,2.5);
      \draw[thick] (9.4,2.5) -- (8.6,2.5);
      \path[->,out=90,in=90,distance=1.3cm] (11,2.5) edge node[fill=white, font=\footnotesize] {$a$} (9,2.5); 
      \path[->,out=-90,in=-90,distance=1.3cm] (9,2.5) edge node[fill=white, font=\footnotesize] {$y$} (10.8,2.5);       
      \path[->,out=-90,in=-90,distance=1.3cm] (9,0) edge node[fill=white, font=\footnotesize] {$b$} (11,0);       
      \path[->] (11,0) edge node[fill=white, font=\footnotesize] {$z$} (11,2.5) ;
      \draw[thick] (15.6,0) -- (16.4,0);
      \draw[thick] (14.4,0) -- (13.6,0);
      \draw[thick] (15.6,2.5) -- (16.4,2.5);
      \draw[thick] (14.4,2.5) -- (13.6,2.5);
      \path[->,out=90,in=90,distance=1.3cm] (16,2.5) edge node[fill=white, font=\footnotesize] {$a$} (14,2.5); 
      \path[->] (14,2.5) edge node[fill=white, font=\footnotesize] {$y$} (14,0);       
      \path[->,out=-90,in=-90,distance=1.3cm] (14,0) edge node[fill=white, font=\footnotesize] {$b$} (16,0);       
      \path[->,out=90,in=90,distance=1.3cm] (16,0) edge node[fill=white, font=\footnotesize] {$z$} (14.2,0);
    \end{tikzpicture}
  \end{center}
  The function $y,z\mapsto x$ is a morphism from these last two
  monographs to the two monographs above (respectively). There are no
  other morphisms between monographs from distinct classes. We
  therefore see that in the category $\ATM{T}{\Alg}$ it is possible to
  add or remove edges typed by $e$ to which $a$ or $b$ are not
  adjacent, but there is no way to remove the edges $y$ and $z$
  (because this would require a rule with a left morphism from an ATM
  without $y$ and $z$ to an ATM with $y$ and $z$, and there is no such
  morphism), though they are not attributes.

  Besides, we see that this category has no initial object, no
  terminal object, no products nor coproducts.
\end{example}

\section{Conclusion}\label{sec-concl}

Monographs generalize standard notions of directed graphs by allowing
edges of any length with free adjacencies. An edge of length zero
represents a node, and if it has greater length it can be adjacent to
any edge, including itself.  In ``monograph'' the prefix mono- is
justified by this unified view of nodes as edges and of edges with
unrestricted adjacencies that provide formal conciseness (morphisms
are functions characterized by a single equation); the suffix -graph
is justified by the correspondence (up to isomorphism) between finite
$\omega$-monographs and their drawings.

Monographs are universal with respect to graph structures and the
corresponding algebras, in the sense that monographs are equivalent to
graph structures extended with suitable ordering conventions on their
operator names, and that categories of typed monographs are equivalent
to the corresponding categories of algebras. Since many standard or
exotic notions of directed graphs can be represented as monadic
algebras, they can also be represented as typed monographs, but these
have two advantages over graph structures: they provide an orientation
of edges and they (consequently) dispense with operator names.

Algebraic transformations of monographs are similar to those of
standard graphs.  Typed monographs may therefore be simpler to handle
than graph structured algebras, as illustrated by the results of
Section~\ref{sec-atm}. The representation of oriented edges as
sequences seems more natural than their standard representation as
unstructured objects that have images by a bunch of functions. Thus
type monographs emerge as a natural way of specifying graph
structures.

\bibliographystyle{elsarticle-num}
{

}

\end{document}